\def\({\left(}
\def\){\right)}
\newcommand{\Rdr}{R_\mathrm{dress}}
\newcommand{\gammabs}{\scalebox{.8}{\boldmath$\gamma$}}
\newcommand{\mubs}{\scalebox{.8}{\boldmath$\mu$}}
\newcommand{\betabs}{\scalebox{.8}{\boldmath$\beta$}}
\newcommand{\mub}{\mbox{\boldmath$\mu$}}
\newcommand{\vphis}{\scalebox{.7}{\boldmath$\varphi$}}
\newcommand{\vphi}{\mbox{\boldmath$\varphi$}}
\newcommand{\betab}{\mbox{\boldmath$\beta$}}
\newcommand{\gammab}{\mbox{\boldmath$\gamma$}}
\newcommand{\cb}{\mathbf{c}}
\newcommand{\bb}{\mathbf{b}}
\newcommand{\tb}{\mathbf{t}}
\newcommand{\xin}{\frac{1-\nu}{\nu}}
\newcommand{\xinn}{{\textstyle \frac{1-\nu}{\nu}}}
\newcommand{\nn}{\nonumber}
\newcommand{\bea}{\begin{eqnarray}}
\newcommand{\ena}{\end{eqnarray}}
\def\bel{\begin{eqnarray}}
\def\enl{\end{eqnarray}}
\newcommand{\be}{\begin{eqnarray*}}
\newcommand{\en}{\end{eqnarray*}}
\newcommand{\ba}{\begin{array}}
\newcommand{\ea}{\end{array}}
\newcommand{\C}{{\mathbb C}}
\newcommand{\Z}{{\mathbb Z}}
\newcommand{\slt}{\mathfrak{sl}_2}
\newcommand{\res}{{\rm res}}
\newcommand{\Tr}{{\rm Tr}}
\newenvironment{tenumerate}{
  \begin{enumerate}
  
  }{\end{enumerate}}
\newcommand{\bi}{\begin{tenumerate}}
\newcommand{\ei}{\end{tenumerate}}
\newcommand{\isoto}[1][]%
{{\mathop{\buildrel{\sim}\over\longrightarrow}\limits_{#1}}}
\def\[{\left[}
\def\]{\right]}
\newcommand{\la}{\lambda}
\newcommand{\al}{\alpha}
\newcommand{\s}{\sigma}
\newcommand{\z}{\zeta}
\numberwithin{equation}{section}
\newtheorem{thm}{Theorem}[section]
\newtheorem{prop}[thm]{Proposition}
\newtheorem{definition}[thm]{Definition}
\def\half{\textstyle{\frac  1 2}}
\newcommand{\lb}{\mathbf{l}}
\def\bi{\mathbf{i}}
\def\Io{I_\mathrm{odd}(m)}
\newcommand{\zbz}{z,\bar{z}}
\definecolor{5/18}{rgb}{0.9,0,0.7}
\definecolor{5/19}{rgb}{0.9,0,0.7}
\definecolor{5/19comment}{rgb}{0.9,0.7,0}
\begin{document}
\begin{title}[Fermionic structure in sine-Gordon model]
{Fermionic structure in the sine-Gordon model:
form factors and null-vectors}
\end{title}
\author{M.~Jimbo, T.~Miwa and  F.~Smirnov}
\address{MJ: Department of Mathematics, 
Rikkyo University, Toshima-ku, Tokyo 171-8501, Japan}
\email{jimbomm@rikkyo.ac.jp}
\address{TM: Department of 
Mathematics, Graduate School of Science,
Kyoto University, Kyoto 606-8502, 
Japan}\email{tmiwa@math.kyoto-u.ac.jp}
\address{
FS\footnote
{Membre du CNRS}: 
Hamilton Mathematical Institute and School of Mathematics,
Trinity College, Dublin 2, Ireland
\newline
Laboratoire de Physique Th{\'e}orique et
Hautes Energies, Universit{\'e} Pierre et Marie Curie,
Tour 13, 4$^{\rm er}$ {\'e}tage, 4 Place Jussieu
75252 Paris Cedex 05, France}\email{smirnov@lpthe.jussieu.fr}

\begin{abstract}
The form factor bootstrap in integrable quantum field theory 
allows one to capture local fields in terms of 
infinite sequences of Laurent polynomials called `towers'. 
For the sine-Gordon model, towers are systematically described
by fermions introduced some time ago 
by Babelon, Bernard and Smirnov.
Recently the authors developed a new method for evaluating 
one-point functions of descendant fields, using yet another fermions
which act on the space of local fields.
The goal of this paper is to establish that these two fermions 
are one and the same object. This opens up a way for answering the 
longstanding question about how to identify
precisely towers and local fields. 
\end{abstract}

\maketitle

\section{Introduction}

The famous
sine-Gordon (sG) model is described by the action
\begin{align}
\mathcal{A}^\mathrm{sG}=\int 
\Bigl[\frac{1}{16 \pi} (\partial _\mu\vphi (x))^2+
\frac{\mub ^2}{\sin\pi\beta ^2} 2\cos(\beta\vphi(x))\Bigr]
d^2x\label{action}\,.
\end{align}
In this paper we use 
the parameter
$$
\nu=1-\beta^2\,,
$$
following the convention 
in our previous works \cite{HGSIV,OP,HGSV}
\footnote{The parameter $\nu$ is related to
$\xi=\xi^{\mathrm{FS}}$
in \cite{book} and  
$\xi=\xi^{\mathrm{SL}}$ in 
\cite{Lukalpha} by
${\textstyle\frac{1-\nu}\nu}=\xi ^\mathrm{FS}/\pi=\xi ^\mathrm{SL} $.}  . 

In our opinion, 
the sG model
is an ideal playground for developing new methods of Integrable Quantum
Field Theory (IQFT). 
On the one hand,
this model is sufficiently
complicated. Its spectrum contains,
together with usual particles and their bound states 
(breathers),
 topologically non-trivial particles: solitons.
On the other hand,
the model is intimately related 
with the simplest non-trivial quantum affine algebra
$U_q(\widehat{\mathfrak{sl}}_2)$,  
so, its study does
not involve purely technical complications coming from 
considering 
quantum groups of higher rank.
The latter property is closely related to the fact that 
in the classical case the (quasi)-periodic solutions
of the sG equation are related to hyper-elliptic Riemann surfaces, 
which represent the 
simplest, but still  non-trivial example of algebraic curves. 
All that should be clearly understood by
a researcher who is interested in serious investigation of the sG model. 

During the period of rapid development  of IQFT,
 several important results were obtained for the sG model.
They include the discovery of the exact S-matrix \cite{zamS}, formulation
of axioms for the form factors and solving equations constituting these axioms
\cite{smi86,smikir,book}. 
Around the same period there appeared
the work \cite{alzam} which is very important for us. This paper
investigated the equivalence
of the form factor bootstrap to the Operator Product Expansion (OPE) 
appearing in the context of Perturbed Conformal Field Theory (PCFT) 
combined with the knowledge of the
one-point functions.

Let us discuss  the form factors in the sG model. 
In the original works \cite{smi86,book} they were
constructed for operators most relevant to physics:
energy-momentum tensor, topological current and
disorder operators. 
With the rapid development of  Conformal Field Theory (CFT) which started with
the famous paper \cite{BPZ},
the following interesting question arose: Find
the form factors of all
the local operators which are described in the ultraviolet limit by CFT. 
The papers \cite{smi90,ReshSmi}
were important in understanding this problem 
together with the paper \cite{alzam} cited before.
Due to the latter paper it became clear that 
the ideal object for this study is not the space
of descendants of the degenerate fields, but 
rather the space of descendants of the generic primary field
\begin{align}
\Phi _\al(\zbz)=e^{i\al
\ \frac {\nu}{2\sqrt{1-\nu}}
\ \vphis (\zbz)}\,,
\label{Phial}   
\end{align}
with arbitrary $\al$. 
The
normalisation coefficient $\frac {\nu}{2\sqrt{1-\nu}}$ is introduced for  convenience.
The point is that considering generic 
$\nu$
and $\alpha$ we avoid resonances, and the correspondence
between the sG operators and their ultra-violet CFT counterparts becomes
one-to-one. 
According to
this logic, the first task is to compute 
the form factors 
of the exponential operators $\Phi _\al(0)$. 
This was done in the
paper \cite{Lukalpha}. This paper required 
the knowledge of zero-particle form factors (one-point functions)
in infinite volume found in \cite{LukZam}. 
It should be said also that the method of \cite{Lukalpha} originates
in the algebraic study of 
correlation functions and form factors  
for 
lattice models \cite{JM}.

It has been said that the sG model possesses
the advantage of being sufficiently complicated 
while avoiding unnecessary difficulties of purely algebraic nature.
Here we want to explain what the words 
``sufficiently complicated" mean. 
There
are models of IQFT with much simpler, diagonal, 
S-matrices (sinh-Gordon model, Lee-Yang model, $Z_N$ model, 
and many others). 
For these models the form factor bootstrap considerably simplifies.
However, to our mind it simplifies too much 
making obscure the mathematical structure
of the solution. When the answer is simple 
it may allow different accidental representations.
The sG model is sufficiently rigid for form factors.
Irrespectively of the method used to derive them,  
one obtains essentially the same formulae 
given by a certain integral transformation.
 The integrals involved in this transformation can be understood as quantum deformation
of the hyper-elliptic Abelian integrals \cite{smiabel}. 
The analogy with the classical case was crucial for realising this fact.
While quite useful,
this analogy is not so straightforward.
In classical mathematics, when
the solution to a differential equation is given by 
an integral transformation,
different solutions are parametrised by different contours. 
After quantisation the differential equations are replaced by 
a certain Riemann-Hilbert problem (see Section \ref{axioms}),
and the contours
are replaced by polynomials which one can insert under the integral. 
These polynomials label different local operators.
In this paper we consider only soliton form factors  for the
local operators which do not change the topological charge. So, the number
of particles is even, say $2n$.
More precisely, 
with each $n$ there is associated 
a polynomial entering the integral formula for $2n$-particle form factors,
and these polynomials are mutually related by a certain recurrence
relation. 
Hence a local operator is represented by an infinite sequence of such 
polynomials which we call a tower.
The precise definition is given in  Section \ref{FormulasforFF}.

We would like to make one more historical comment. 
When the method of counting local fields became clear, the following important
problem arose. 
It is well known that for special values of $\al$,
which correspond to degenerate primary fields, some descendants
vanish. Conventionally these vanishing descendants are called 
null vectors. On the other hand the number of towers corresponding to
descendants of a primary field is independent of $\al$. 
The only solution to this apparent contradiction may consist 
in vanishing of 
the integrals defining the form factors in some cases. 
This is indeed the case.
There are different reasons
for integrals to vanish, the most important being the  Riemann bilinear identity for 
quantum Abelian integrals \cite{smiriemann}.

Returning to the general case, let us  fix  a generic
$\al$ and explain how the space of towers is organised.
A convenient 
language for that was introduced in the paper \cite{BBS}. 
We slightly generalise the results
of this paper and change the notation. 
The local integrals of motion act on the local operators by commutators.
The form factors obtained by this action are easy to compute, so, 
we shall ignore these descendants by
local integrals of motion.
The real problem is to describe the quotient space.
To this end,
following \cite{BBS} we introduce the fermionic creation operators
$\psi ^*_{2j-1}$, $\chi ^*_{2j-1}$, 
$\bar\psi ^*_{2j-1}$, $\bar\chi ^*_{2j-1}$ 
acting on towers, $j=1,2,\cdots$.
The operators $\psi ^*_{2j-1}$, 
$\chi ^*_{2j-1}$ correspond to the right chirality, and the operators 
$\bar\psi ^*_{2j-1}$, $\bar\chi ^*_{2j-1}$ to the left chirality. 
We call these fermions Babelon-Bernard-Smirnov (BBS) fermions. 

Let  $M^{(\star)}_0=\{M^{(n)}\}_{n=0}^\infty$ 
denote the tower corresponding to 
the primary field $\Phi _\al$. 
The space of towers $L^{(\star)}_{\mathcal{O}_\al}$ corresponding to
local operators $\mathcal{O}_\al$ is obtained acting on
$M^{(\star)}_0$ by integrals of motion and BBS fermions.
The latter  must satisfy the restriction
\begin{align}
\#(\psi ^*)+\#(\bar\psi ^*)=\#(\chi ^*)+\#(\bar\chi ^*)\,.\label{charge0}
\end{align}
Later we shall say that the towers  satisfying \eqref{charge0} have charge $0$.
It is also useful to introduce the weight of a
local operator $\mathcal{O}_\al$ by
$$
m={\textstyle\frac 1 2}(\#(\psi ^*)-\#(\chi ^*)+\#(\bar\chi ^*)-\#(\bar\psi ^*))\,.
$$
The local operators of weight $m$ correspond to 
$\Phi _{\al +2m\frac{1-\nu}{\nu}}$,
and its Virasoro descendants.

As we mentioned already,  there is a
one-to-one correspondence between the operators in 
the sG model
and in the corresponding ultraviolet CFT.  
There is certain arbitrariness in choosing the latter. We
prefer to split $ 2\cos(\beta\vphi(x))$ in \eqref{action} into the sum $e^{-i\beta\vphis(x)}+
e^{i\beta\vphis(x)}$ giving the first term to the 
CFT action and considering the second one
as the perturbation. The CFT in question is nothing but 
the complex Liouville model. The fields $\Phi _{m\frac {1-\nu}\nu}$
are degenerate. For 
$m\ge0$ they correspond to the fields from the first row of the Kac table. So, certain descendants
of these fields must vanish. This circumstance was the main subject of \cite{BBS}. Namely, it was shown 
that the Riemann bilinear identity and some additional simpler properties of quantum Abelian integrals imply
certain relations between the fermionic descendants for the degenerate fields. This will be discussed in 
Section \ref{null}. 
Taking these null vectors into account,  
we arrive at the correct number of local operators.
However an important question 
 was left unanswered in \cite{BBS}: 
to identify precisely 
the descendants by BBS fermions and the usual Virasoro descendants.
We shall address this issue in this paper. 

Being unable to solve this problem,
 the authors of \cite{BBS} concentrated on the classical
limit showing that the description of the null vectors in terms of the fermions provides a new formulation
of the classical hierarchy. Actually, they considered 
only right chiral descendants, so, the hierarchy
in question was that of the
Korteweg-de-Vries (KdV) equation. 
Another  way of counting 
local operators was 
explained from a representation theory viewpoint in \cite{Fothers}.

Now we would like to discuss a seemingly  completely different subject. 
Several years ago, the present authors together
with H. Boos and Y. Takeyama started a joint work on 
correlation functions for the  XXZ spin chain. 
We have been motivated by a strong feeling
that the formulae known at the time were quite unsatisfactory. As a result, we found a
fermionic description of the space of quasi-local operators.

We believe our fermionic construction  to be important, so, we would like to explain its intuitive meaning.
Generally speaking, 
our understanding of quantum field theory is limited because
very few exact results are known about 
models with interaction. Our intuition relies too heavily 
on 
free fields, and this can be sometimes misleading.
To give an example in
the context of the XXZ spin chain,
the free model is the XX spin chain arising at
a particular value of the coupling constant
where the model becomes
equivalent to the lattice Dirac fermion. 
In this case, 
one diagonalises the Hamiltonian introducing the creation-annihilation
operators  by Fourier transform.
We parametrise
the 
corresponding momentum as
$ip=\log \frac {1-\z^2}{1+\z^2}$.
On the other hand,
the same $\z$-dependent
creation operators can be used for constructing quasi-local operators 
from a
given one: we just take the (anti)-commutators and develop the result around the point $\z^2=1$.
So, the same construction with Fourier transform serves 
two different goals: diagonalising the Hamiltonian, and 
describing the space of quasi-local operators.
It is important to understand that
these two procedures are completely different 
for 
models with interaction.
Moreover, before finding the fermionic description of 
the space of quasi-local fields
we even did not know that the second procedure makes sense. 
Let us be more explicit about this point.

The diagonilisation of the Hamiltonian for integrable models 
in general,  and the
XXZ spin chain in particular,
 is achieved by the
Bethe Ansatz which is best understood in its
algebraic formulation (ABA) \cite{FST}. The ABA
can be viewed as  a highly non-linear
analogue of Fourier transform
in the space of states. 
It
does not explicitly  
introduce the creation-annihilation operators, and 
cannot be used for creating the local fields 
by adjoint action as it was possible for the
free-fermion case. The real achievement of 
our works \cite{HGS,HGSII} is that we were
able to find a  set of creation operators 
$\bb^*(\z)$, $\cb ^*(\z)$ which act
on the space of quasi-local operators 
creating this entire space from the 
``primary fields"
by development around $\z ^2=1$. 
It may be said that, comparing to ABA, 
we have introduced another non-linear Fourier transform
in the space of operators. 
In doing that we used the same algebraic structures as 
the one used in ABA (quantum groups \cite{Drinfeld,Jimbo}
essentially) but in a more sophisticated way, 
including in particular the methods developed by \cite{BLZIII}.

Our fermionic operators have two notable features.
First, they indeed act on the space of quasi-local 
operators, i.e., they respect locality.
This was proved in the paper \cite{HGSII}.
Second, the partition function of 
the the equivalent 6-vertex model, 
formulated on a cylinder
with an insertion of a quasi-local operator created by the fermions, 
is expressed in terms of a single function $\omega (\z,\xi)$. 
This was proved in the paper \cite{HGSIII}.
We shall refer to the
compact direction on the cylinder 
as the Matsubara direction. 

Our next goal was to take the scaling limit in order to arrive at the $c<1$ CFT.
After the transformation to fermions has been done this scaling limit is simple.
In a certain sense we consider the fermionic construction for the lattice model as 
an existence theorem: the local operators are parameterised by parameters $\z$,
and the partition function with insertion is expressed in terms of 
$\omega (\z,\xi)$.
To consider the scaling limit it suffices to describe it for $\omega (\z,\xi)$.
Here the TBA-like equations for $\omega (\z,\xi)$ \cite{BG} are very useful.
(We use the term TBA in a  
broad sense. Actually the techniques
 used in \cite{BG,HGSIV} is 
that of \cite{DestriDeVega}.)
Comparing 
the scaling limit of $\omega (\z,\xi)$ with the CFT three-point functions,  
we find 
the relation between the description of local operators by the fermions and
the one in terms of the Virasoro algebra.

For the application to the sG model, we use
an inhomogeneous XXZ spin chain, and obtain
in the scaling limit fermions $\betab^*_{2j-1}$, $\gammab ^*_{2j-1}$, 
$\bar\betab^*_{2j-1}$, $\bar\gammab ^*_{2j-1}$, 
which we call Boos-Jimbo-Miwa-Smirnov (BJMS) fermions. 
This construction is applied to  
solving the longstanding problem of computing the one-point functions for
the sG model on the plane \cite{OP} and on the cylinder \cite{HGSV}. 
To be precise, in addition to the BJMS fermions,  
certain fermionic screening operators are used to create 
the primary fields 
$\Phi _{\al+2m\frac{1-\nu}{\nu}}$ and their descendants 
from $\Phi _\al$. 
We shall not discuss them as they will be irrelevant 
for the purpose of this paper.
Let us emphasise one more time that the relation between 
the BJMS fermionic descendants and the Virasoro 
descendants can be computed. This has been done
in the quotient space by the action of the 
local integrals of motion up to level 6 in \cite{HGSIV} and
on the level 8 in \cite{Boos}. 
For studying the one-point functions it is sufficient to work
in the quotient space. However, 
it is not quite sufficient for the goal of the present paper, we shall
comment on this point soon.

Though the BBS fermions and the BJMS fermions have been introduced
by different methods and for different reasons, there 
is a certain similarity between the two.
This similarity motivated
us to investigate the situation closely. 
On one hand we
have the form factor formulae 
for the descendants written in terms of the
BBS fermions. On the other hand we have local 
operators created by the
BJMS fermions. 
Quite generally,
if we insert such
a local operator on the cylinder
and take any eigenvectors of the Matsubara transfer-matrix 
as asymptotic conditions, then
the partition function can be expressed in terms of 
a single
function $\omega(\z,\xi)$. 
In particular, we can put an
excited state to the left and 
the ground state to the right. 
Then the infinite volume limit
in the Matsubara direction can be performed. 
It is clear that the result is nothing but a
form factor. 
This provides us with the possibility for comparison.
To our great surprise,
the BBS and the BJMS fermions
are completely equivalent. 
In the multi-index notation (see \eqref{multiindex})
the statement is this:
\begin{align}
&\mathrm{If}\quad\quad\  
\mathcal{O}_\al=\betab^*_{I^+}\bar\betab^*_{\bar I^+}
\bar\gammab^*_{\bar I^-}\gammab^*_{I^-}
\Phi _\al\,,
\label{If}\\
&\mathrm{then}\quad L^{(\star)}_{\mathcal{O}_\al}
=\mub^{\frac 1 \nu(|I^-|+|I^+|+|\bar I^-|+|\bar I^+|)}
\psi^*_{I^+}\bar\psi^*_{\bar I^+}\bar\chi^*_{\bar I^-}
\chi^*_{I^-}M^{(\star)}_0\,,
\label{Then}
\end{align}
where $|I|$ denotes the sum of the entries
of the multi-index $I$.
To be precise we have to add some fermionic screening operators in \eqref{If} for $m\ne 0$, but in the infinite
volume they are irrelevant as explained in Section \ref{BJMSfermions}. For us the precise identification
\eqref{If}-\eqref{Then} came absolutely unexpected. It demonstrates a remarkable self-consistency of the sG model:
taking two complicated problems, that of computing the form factors of descendants and that of computing the
one-point functions on the cylinder (at finite temperature) and going to the very bottom of them we find the same
fermionic structure. 

The profit from the identification \eqref{If}-\eqref{Then} is twofold. 

First, since the BJMS descendants 
can be quantitatively  related to the Virasoro descendants,
the form factors of the latter can be computed.
There
is one technical obstacle here: up to now we were able to 
identify the BJMS and 
the
Virasoro descendants only
modulo the action of the local integrals of motion. 
This was quite sufficient for 
one-point functions,
but for form factors we would like to have the complete answer. 
The technical problem which one needs
to solve for this goal is explained at the end of  Section \ref{6vertex}. 

Second, in \cite{BBS} the null-vectors are found in terms of 
the BBS fermions. 
So, using
\eqref{If}-\eqref{Then} we can identify them for the
BJMS fermions. 
To keep the present paper within a reasonable size, we
leave the detailed study of the null-vectors to a separate publication. 
Here we shall consider only the
chiral null-vectors for the field $\Phi _{(2m-1)\frac {1-\nu}\nu}$,
$m\ge 1$ which corresponds in the
CFT language to $\Phi _{1,2m}$.
Let us give the simplest example
which  is the singular vector on level $2m$. This singular vector
is written as
\begin{align}\betab^*_1\gammab ^*_{2m-1}\Phi _{1,2m}\,.\label{example}\end{align}
At the moment comparison can be made only modulo 
the local integrals of motion, but even
with this simplification the Virasoro counterpart 
of \eqref{example}
looks really horrific. Using the formulae
of \cite{HGSIV, Boos} we find perfect agreement up to level $8$.
The fermionic formulae are also simple for other null vectors which are 
the descendants of the singular vector  in the Virasoro language.  
We think that this nice simplicity 
is another evidence of the universality of the fermionic picture.

The plan of the paper is as follows. In Section 2 we briefly 
review
the form factors axioms in application
to the sG model. 
In Section 3 we discuss certain integrals which 
play a basic role for the description of 
the sG form factors. 
In Section 4 we present the formulae for the form factors. 
Section 5 introduces the BBS fermions.
 In Section 6 we briefly discuss the BJMS
fermions in the sG case. 
The origin of the fermionic description 
is the 6-vertex model as explained in Section 7.
In Section 8 we discuss the infinite 
volume limit in the Matsubara direction. 
In Section 9 we present the
main technical achievement of this 
paper, namely the computation of the function $\omega (\z,\xi)$ for infinite volume limit
in the
Matsubara direction. 
The equivalence of 
the
BBS and  the BJMS fermions is established in Section 10. 
The null vectors for $\Phi _{1,2m}$ 
in terms of the
BBS fermions are discussed in Section 11.
Section 12 is devoted to comparison of these null vectors in 
the
fermionic and the
Virasoro descriptions.

\section{Form factors axioms}\label{axioms}

We are interested in form factors
of the exponential fields 
$\Phi _\al(0)$ \eqref{Phial}
and their descendants. 
In this paper 
we
define form factors to be 
matrix elements of a local operator taken
between excited states on the left
and the vacuum state on the right,
thus changing the convention of\cite{book} where the opposite matrix elements were mostly studied.
They
are simply related by the crossing symmetry. 
We shall use the notation $\mathcal{O}_\al$ for
descendants. These operators do not carry the topological charge,
so, their form factors
are non-trivial only in the case of an
equal number of solitons and anti-solitons.

We do not consider the breather
form factors since they can be obtained as residues of soliton ones. 
So, we have $2n$ particles  ($n$ solitons and $n$ anti-solitons)  with rapidities $\beta _1,
\cdots \beta _{2n}$. The form facotrs
$f_{\mathcal{O}_\al}(\beta _1,\cdots ,\beta _{2n})$ are meromorphic functions of these
rapidities. They are {vectors} from the zero weight subspace of the
space $(\mathbb{C}^2)^{\otimes 2n}$. The standard 
basis in the $j$-th tensor component is denoted as $e^{\pm}_j$.

The form factors are subject to three axioms. Formulating these
axioms we follow the conventions of \cite{book}, namely, if two rapidities
interchange we assume that the corresponding 
tensor components are permutated at the same time.
\vskip .2cm
\noindent
{\it Symmetry axiom.}
\begin{align}
&S_{j,j+1}(\beta _j-\beta _{j+1})f_{\mathcal{O}_\al}(\beta_1,\cdots,\beta _j,\beta _{j+1},\cdots ,\beta_{2n})
\label{axiom1}
\\&
=
f_{\mathcal{O}_\al}(\beta_1,\cdots,\beta _{j+1},\beta _j,\cdots ,\beta_{2n})\,,
\nn
\end{align}
where $S_{i,j}$ is the  soliton 
S-matrix \cite{zamS}. Its
explicit formula will be
given later \eqref{Smatrix}.
\noindent
{\it Riemann-Hilbert problem axiom.}
\begin{align}
&f_{\mathcal{O}_\al}(\beta_1,\cdots,\beta_{2n-1},\beta_{2n}+2\pi i)=e^{-\frac {\pi i \nu}
{1-\nu}\al\sigma ^3_{2n}}
f_{\mathcal{O}_\al}(\beta_{2n},\beta_1,\cdots,\cdots ,\beta_{2n-1})\,.\label{axiom2}
\end{align}
\vskip .2cm
\noindent
{\it Residue axiom.}
\begin{align}
&2\pi i\ \mathrm{res}_{\beta _{2n}=\beta _{2n-1}+\pi i}f_{\mathcal{O}_\al}(\beta_1,\cdots,\beta_{2n-2},
\beta_{2n-1},\beta_{2n})=
\label{axiom3}\\
&\Bigl(1-e^{-\frac {\pi i \nu}{1-\nu}\al\sigma ^3_{2n}}
S_{2n-1,1}(\beta _{2n-1}-\beta _{1})\cdots S_{2n-1,2n-2}(\beta _{2n-1}-\beta _{2n-2})
\Bigr)\nn\\
&\times f_{\mathcal{O}_\al}(\beta_1,\cdots,\beta_{2n-2})\otimes s _{2n-1,2n}\,,\nn
\end{align}
where $s_{i,j}=e^+_i\otimes e^-_j+e^-_i\otimes e^+_j$.

We change the standard basis of the tensor product 
to a new basis 
$w^{\epsilon_1,\cdots ,\epsilon_{2n}}(\beta _1,\cdots ,\beta _{2n})$
described in \cite{book}, and express the form factors as
\begin{align}
&f_{\mathcal{O}_\al}(\beta _1,\cdots ,\beta _{2n})
=
Z(\beta_1,\cdots,\beta _{2n})
\phantom{gggggggggggggggggg}\label{FF}
\end{align}
\begin{align*}
&\times
{\sum_{\epsilon_1,\cdots,\epsilon_{2n}=\pm}
w^{\epsilon_1,\cdots ,\epsilon_{2n}}}(\beta _1,\cdots ,\beta _{2n})
\frac{e^{\frac \nu {2(1-\nu)}\bigl(\sum\limits_{p\in I^-}\beta _{p}-\sum\limits_{p\in I^+}\beta_{p}+n\pi i   \bigr)}}{\prod
\limits_{p\in I^-,q\in I^+}\sinh\frac {\nu} {1-\nu}(\beta_{p}-\beta_{q})}
\ \cdot\ \mathcal{F} _{\mathcal{O}_\al,n}(\beta _{I^-}|\beta _{I+})\,,\nn 
\end{align*}
where $I^\pm=\{j\mid 1\le j\le 2n,\ \epsilon_j=\pm\}$, and the sum over 
$\epsilon_j$'s 
is  such that $\sharp(I^+)=\sharp(I^-)$. 
We have
introduced 
an overall multiplier
$$Z(\beta_1,\cdots,\beta _{2n})=\frac{c^n}{n!}\prod\limits_{i<j}\zeta (\beta _i-\beta _j)
\cdot e^{\frac {1-2\nu} {2(1-\nu)}n \sum\limits_{j=1}^{2n}\beta _j}\,.$$
The formula for the function $\zeta(\beta)$ can be found in \cite{book},
$$c=\frac {\nu}{2(1-\nu)\pi ^2\zeta(-\pi i)}\,.$$
We have set also
$$\beta _I=\{\beta _{i_1},\cdots , \beta _{i_n}\},\quad \mathrm{if}\quad
I=\{{i_1},\cdots , {i_n}\}\,.$$
We use this notation only for symmetric functions, so, the ordering of the
indices $i_1,\cdots,i_n$
is irrelevant. 
The main property of the new basis and of the
function $\zeta(\beta)$ is that
\begin{align}
&\zeta(\beta _{i}-\beta _{i+1})S_{i,i+1}(\beta _i-\beta _{i+1})
w^{\epsilon _1,\cdots,\epsilon_i,\epsilon _{i+1},\cdots ,\epsilon
 _{2n}}(\beta_1,\cdots,\beta_i,\beta _{i+1},\cdots ,\beta_{2n})
\nn\\
&=\zeta(\beta _{i+1}-\beta _{i})w^{\epsilon _1,\cdots,\epsilon _{i+1},\epsilon_i,\cdots ,\epsilon _{2n}}(\beta_1,\cdots,\beta_{i+1},\beta_i,
\cdots ,\beta_{2n})\,.\nn
\end{align}
Due to this property 
the first axiom is satisfied provided the essential part 
$\mathcal{F} _{\mathcal{O}_\al,n}(\beta _{I^-}|\beta_{I^+})$  
of the form factor 
is symmetric separately in $\beta_{I^-}$ and $\beta _{I^+}$.
It is well known
that for this function one can write an integral representation.
The integrals involved in this representation are quite remarkable, 
and we shall discuss them in detail
in the next section.

Before closing this section a remark is in order 
concerning the phase in the right hand side of \eqref{axiom2}. 
With solitons are associated some quasi-local fields, which interpolate 
in- and out-states when time goes to $\mp \infty$. 
The phase in \eqref{axiom2} specifies the locality property of a given 
field $\mathcal{O}_\al$ with respect to these interpolating fields.  
The BBS fermions which will be discussed in section \ref{BBSfermions}
act on a primary field $\Phi_\al$ and create fields which share the same
locality property (i.e. the same phase) as for $\Phi_\al$.

\section{Integrals}\label{integrals}

Consider the function
\begin{align}
&\chi(\sigma |\beta _1,\cdots ,\beta _{2n})=\prod\limits _{j=1}^{2n}\chi (\sigma ,\beta _j)\,,\nn\\
&\chi (\sigma,\beta )=\half
e^{-\frac  1{2(1-\nu)}(\sigma +\beta-\frac{\pi i} 2)}
\varphi (\sigma -\beta +
{\textstyle \frac{\pi i} 2})\,,\nn
\end{align}
where $\varphi (\sigma)$ is defined in \cite{book}. We shall not list
explicitly the properties of 
$\varphi(\sigma)$, since they 
can be read from 
those of the
function $\chi(\sigma |\beta _1,\cdots ,\beta _{2n})$
which we are going to give. In what follows we use the symbols
\begin{align*}
S=e^{\sigma},\ \ B_j=e^{\beta _j},\ \ Q=e^{\pi i \frac{1-\nu}\nu},\ \ A=e^{\pi i \al}\,\\
\ \mathfrak{s}=e^{\frac{2\nu}{1-\nu} \sigma},\ \ 
\mathfrak{b}_j=e^{\frac{2\nu}{1-\nu} \beta _j}, \ \ \mathfrak{q}=e^{\pi i \frac{1}{1-\nu}},
\ \ a=e^{\pi i\frac{\nu}{1-\nu}\al}\,.
\end{align*}

The function $\chi(\sigma |\beta _1,\cdots ,\beta _{2n})$ is a meromorphic function of
$\sigma$. For real $\beta_j$'s it does not have singularities for $0>\mathrm{Im}(\sigma)>-\pi$.
It has the following asymptotic
behaviour for $\sigma\to\pm \infty$:
\begin{align}
&\chi (\sigma |\beta_1,\cdots ,\beta _{2n})\simeq_{\sigma\to\infty}
e^{-2n\frac 1{1-\nu}\sigma} x^+(\mathfrak{s}|\mathfrak{b}_1,\cdots ,\mathfrak{b} _{2n})X^+(S|B_1,\cdots ,B _{2n})\,,\nn\\
&\chi (\sigma |\beta_1,\cdots ,\beta _{2n})\simeq_{\sigma\to-\infty}
x^-(\mathfrak{s}|\mathfrak{b}_1,\cdots ,\mathfrak{b} _{2n})X^-(S|B_1,\cdots ,B _{2n})\,,\nn
\end{align}
where
\begin{align}
&x^+(\mathfrak{s}|\mathfrak{b}_1,\cdots ,\mathfrak{b} _{2n})=1+\sum\limits_{k=1}^\infty x^+_k(\mathfrak{b}_1,\cdots ,\mathfrak{b} _{2n})
\mathfrak{s}^{-k}\,,\nn\\
&X^+(S|B_1,\cdots ,B_{2n})=
1+\sum\limits_{k=1}^\infty X^+_k(B_1,\cdots ,B _{2n})
S^{-k}\,,\nn\\
&x^-(\mathfrak{s}|\mathfrak{b}_1,\cdots ,\mathfrak{b} _{2n})=
\mathfrak{q}^n
\prod_{j=1}^{2n}
\mathfrak{b}_j^{-\frac 1 2}\Bigl(
1+\sum\limits_{k=1}^\infty x^-_k(\mathfrak{b}_1,\cdots ,\mathfrak{b} _{2n})
\mathfrak{s}^{k}\Bigr)\,,\nn\\
&X^-(S|B_1,\cdots ,B_{2n})=
\prod_{j=1}^{2n} B_j^{-1}
\Bigl(1+\sum\limits_{k=1}^\infty X^-_k(B_1,\cdots ,B _{2n})
S^{k}\Bigr)\,.\nn
\end{align}
The functions 
$x^+_k$, $X^+_k$ (resp. $x^-_k$, $X^-_k$)
are symmetric (resp. symmetric Laurent) polynomials of their arguments.
They can be inductively computed from the functional equations
\begin{align}
&\chi (\sigma +2\pi i )p(\mathfrak{s}\mathfrak{q}^4) = \chi(\sigma)p(\mathfrak{s}\mathfrak{q}^2)\label{2piiden}\\
&\chi (\sigma +\xinn\pi i)P(SQ) = \chi(\sigma)P(-S)\,,\label{xiiden}
\end{align}
where
\begin{align}
P(S)=\prod\limits _{j=1}^{2n}(S-B_j)\,,\quad p(\mathfrak{s})=
\prod\limits _{j=1}^{2n}(\mathfrak{s}-\mathfrak{b}_j)\,.\label{Pp}
\end{align}
The normalisation of $\chi (\sigma,\beta)$ is defined by the
identity
\begin{align}\chi(\sigma,\beta)\chi (\sigma,\beta +\pi i)=\frac 1{(\mathfrak{s}-\mathfrak{b})
(S^2-B^2)}\,,\label{normchi}\end{align}
which is convenient  for computing the residues \eqref{axiom3}.

Consider now the integral
\begin{align}
I_\al(\beta _1,\cdots,\beta_{2n})=\int\limits _{\mathbb{R}-i0}
\chi (\sigma |\beta_1,\cdots ,\beta _{2n})
e^{\frac{\nu\al}{1-\nu}\s}d\sigma\,.
\label{integral}
\end{align}
From the asymptotic behaviour of 
$\chi(\sigma|\beta_1,\cdots ,\beta _{2n})$,
it is clear that this integral converges for
\begin{align}
0<\mathrm{Re}(\al)<\frac{2n}{\nu}\,.\label{r1}
\end{align}
In what follows we shall often omit the dependence 
of these functions
on 
$\beta_1,\ldots,\beta_{2n}$,
abbreviating them
to $I_\al$, $\chi (\sigma)$, $x^\pm(\mathfrak{s})$ and $X^\pm(S)$. 

We want to continue the integral $I_\al$ to the entire complex plane of $\al$. 
To this end for any $k\in \mathbb{Z}$
let us introduce Laurent polynomials of the form
\begin{align*}
m^{(k)}(\mathfrak{s})=\sum_{j=-k+1}^{2n-k}m^{(k)}_j\mathfrak{s}^j,\quad
n^{(k)}(\mathfrak{s})=
\begin{cases}
\sum_{j=-k+1}^0n^{(k)}_j\mathfrak{s}^j&\text{ if $k\geq1$;}\\[3pt]
\sum_{j=0}^{-k+1}n^{(k)}_j\mathfrak{s}^j&\text{ if $k\leq0$,}
\end{cases}
\end{align*}
which satisfy the identity
\begin{align}
p(\mathfrak{s}\mathfrak{q}^{-2})=m^{(k)}
(\mathfrak{s})+a^{-2}p(\mathfrak{s})n^{(k)}
(\mathfrak{s}\mathfrak{q}^{-4})-
p(\mathfrak{s}\mathfrak{q}^{-2})n^{(k)}
(\mathfrak{s})\,,\label{mandn}
\end{align}
where $p(\mathfrak{s})$ is defined in \eqref{Pp}.
For example
\begin{align}
&n^{(1)}(\mathfrak{s})=\frac 1 {a^{-2}\mathfrak{q}^{4n}-1}\,,\label{n1}\\
&n^{(2)}(\mathfrak{s})=\frac 1 {a^{-2}\mathfrak{q}^{4n}-1}
\(1+\frac{\mathfrak{q}^{4n}(1-\mathfrak{q}^2)}
{(a^{-2}\mathfrak{q}^{4n+4}-1)}a^{-2}\sigma _1(\mathfrak{b}_1,\cdots,\mathfrak{b}_{2n})
\ \mathfrak{s}^{-1}\)\,.\label{n2}
\end{align}
Here and after,  
$\sigma_l(\mathfrak{b}_1,\cdots\mathfrak{b}_{2n})$ stands for the $l$-th
elementary symmetric polynomial in $\mathfrak{b}_j$'s.
Note that
\begin{align}
{\rm res}_{a^2=q^{4n}}n^{(1)}(\mathfrak{s})\frac{da^2}{a^2}&=-1,\label{RESIDUES}\\
{\rm res}_{a^2=q^{4n+4}}n^{(2)}(\mathfrak{s})\frac{da^2}{a^2}&=-x^+_1\mathfrak{s}^{-1}.
\label{RESIDUES2}
\end{align}
Using the functional equation \eqref{2piiden} we can transform the integral defined in
the original region \eqref{r1} to
\begin{align}
&I_\al
=\int\limits _{\mathbb{R}-i0}
\chi (\sigma
)e^{\frac{\nu\al}{1-\nu}\sigma}
\frac{m^{(1)}(\mathfrak{s})}{p(\mathfrak{s}\mathfrak{q}^{-2})}d\sigma+
\int\limits _{\Gamma}
\chi (\sigma 
)e^{\frac{\nu\al}{1-\nu}\sigma}n^{(1)}(\mathfrak{s})d\sigma\,,\label{I1}
\end{align}
where the contour $\Gamma$ contains all the poles of the integrand
in the strip $0>\mathrm{Im}(\sigma)\geq-2\pi$. It
naturally splits into a sum of contours $\Gamma _j$ as is shown
on {\it fig. 1}.
\vskip .5cm
\hskip 1.5cm\includegraphics[height=8cm]{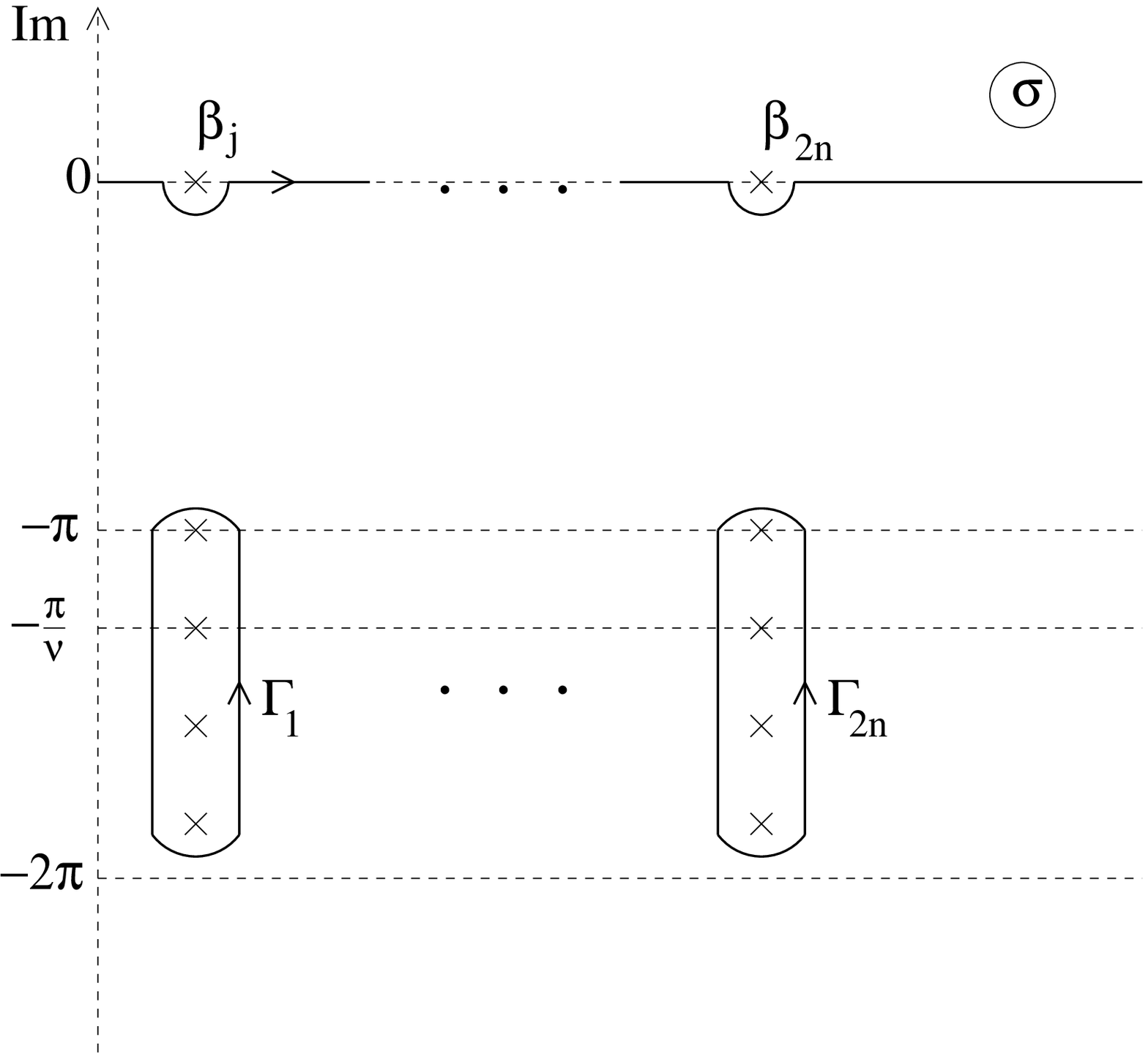}

\noindent
{\it Fig.1: Contours for analytic continuation of $I_\al$. \\
The original contour $\Gamma$ is split into 
a sum of the contours $\Gamma_j$. Each $\Gamma_j$ surrounds a series of
poles between the strip $-\pi \ge \mathrm{Im}(\sigma)>-2\pi$
with the same real part as $\beta_j$. 
}
\vskip .3cm

The equivalence with
the original definition is established by transforming $\Gamma$ to
$(-\mathbb{R}-i0)\cup
(\mathbb{R}-2\pi i -i0)$. But the right hand side of
\eqref{I1} is analytical in $0<\mathrm{Re}(\al)<
\frac{2n}{ \nu}+2$. So, we have managed to
continue $I_\al$
analytically to  the strip $\frac{2n}{ \nu}\le \mathrm{Re}(\al)<\frac{2n}{ \nu}+2$.
From the formula \eqref{n1} we see that the result of the analytical
continuation has simple poles at the points $2n+(2 n+l)\xin$ 
with $l\ge 0$.
We shall compute the residues at these poles later. 

Let us proceed. Another representation for $I_\al$ is
possible for $2<\mathrm{Re}(\al)<\frac{2n}\nu+2$:
\begin{align}
&I_\al
=\int\limits _{\mathbb{R}-i0}
\chi (\sigma 
)
e^{\frac{\nu\al}{1-\nu}\s}
\frac{m^{(2)}(\mathfrak{s})}{p(\mathfrak{s}\mathfrak{q}^{-2})}d\sigma+
\int\limits _{\Gamma}
\chi (\sigma
)e^{\frac{\nu\al}{1-\nu}\s}n^{(2)}(\mathfrak{s})
d\sigma\,.\label{I2}
\end{align}
Indeed, the difference between \eqref{I2} and \eqref{I1} is
\begin{align}
&\int\limits _{\mathbb{R}-i0}
\chi (\sigma
 )e^{\frac{\nu\al}{1-\nu}\sigma}
\frac
{m^{(2)}(\mathfrak{s})-m^{(1)}(\mathfrak{s})}
{p(\mathfrak{s}\mathfrak{q}^{-2})}d\sigma
+
\int\limits _{\Gamma}
\chi (\sigma
 )e^{\frac{\nu\al}{1-\nu}\sigma}
\(n^{(2)}(\mathfrak{s})-n^{(1)}(\mathfrak{s})\)d\sigma\,.\label{zero}
\end{align}
For $2<\mathrm{Re}(\al)<\frac{2n}\nu+2$ the integrals are well 
defined. Moreover, by deforming the contour 
as before, it is easy to show that \eqref{zero} is equal to zero.
The right hand side of \eqref{I2} is well defined in $2<\mathrm{Re}(\al)<\frac{2n}\nu+4$.
So, we have continued $I_\al$
into the strip $\frac{2n}\nu+2\le\mathrm{Re}(\al)<\frac{2n}\nu+4$. The series of poles at 
$2n+(2 n+l)\xin$ continues, and new poles at $2(n+1)+2(n + l)\xin$ 
with $l\ge 0$ appear. 

It is clear now how to continue $I_\al$
further. It is equally clear how to continue it
to non-positive $\mathrm{Re}(\al)$. The final result is that $I_\al
$ is a meromorphic function in $\mathbb{C}$ with simple poles at
\begin{align}
\al=2(n+m)+(2n+l)\xinn , \quad l,m\ge 0;\qquad
\al =-2m-\xinn l ,\quad l,m\ge 0\,.\nn
\end{align}

Let us compute the residues. Consider $I_{\al}$
in the strip $0<\mathrm{Re}(\al)<2+2n\frac 1 \nu$.
In this strip we have poles at
the points $\al=2n+(2n+l)\xin$, $l\ge 0$, because at these points
$m^{(1)}(\mathfrak{s})$ and $n^{(1)}(\mathfrak{s})$ have singularities.
Obviously,
$$\res_{\al =2n+(2n+l)\xin}\Bigl(m^{(1)}(\mathfrak{s})+a^{-2}
p(\mathfrak{s})
n^{(1)}(\mathfrak{s}\mathfrak{q}^{-4})
-p(\mathfrak{s}\mathfrak{q}^{-2})
n^{(1)}(\mathfrak{s})\Bigr)=0\,.$$
Then, using this equality with \eqref{RESIDUES} in \eqref{I1} and
transforming the contour $\Gamma$ to 
$-\bigl(-(-\infty,\Lambda)\bigr)\cup (\Lambda -2\pi i,
\Lambda)\cup (-\infty-2\pi i,\Lambda -2\pi i)$ 
we find
\begin{align*}
\res_{\al =2n+(2n+l)\xin}I_{\al}\frac{da^2}{a^2}
&=-\lim_{\Lambda\to \infty}\int\limits _{\Lambda-2\pi i}^{\Lambda}
\chi (\sigma)e^{2n(1+\frac \nu{1-\nu})\sigma} S^ld\sigma
=-2\pi iX^+_l\,.
\end{align*}
Next, consider $I_{\al}$ in the strip $2<\mathrm{Re}(\al)<4+2n\frac 1 \nu$.
Using \eqref{I2}, we compute the residues at the
two series 
$\al =2n+(2n+l)\xin$ and $\al =2n+2+(2n+l)\xin$
of poles of $n^{(2)}(\mathfrak{s})$ given by \eqref{n2}.
For the former the result is the same because the term 
containing $\mathfrak s$ disappears in the limit
$\Lambda\rightarrow\infty$. For the latter using \eqref{RESIDUES2} we obtain
\begin{align*}
\res_{\al =2n+2+(2n+l)\xin}I_{\al}\frac{da^2}{a^2}&=-2\pi iX^+_lx^+_1\,.
\end{align*}
Continuing along the same lines we come to
the following result.
\begin{prop}
The residues of $I_{\al}
$ at $\al=2n+2m+(2n+l)\xin$ for
$\l,m\ge 0$ are given by
\begin{align}
&
\mathop{\res}_{\al=2(n+m)+(2n+l)\xin}
I_{\al}d\al
=-{\textstyle\frac {1-\nu}{\nu}}
\mathop{\res}_{S=\infty}\(S^l X^+(S)\frac{dS}S\)
\mathop{\res}_{\mathfrak{s}=\infty}\(
\mathfrak{s}^m x^+(\mathfrak{s})\frac{d\mathfrak{s}}{\mathfrak{s}}\)\,.\label{reslm+}
\end{align}
Similarly, the residues of $I_{\al}
$ at $\al=-2m-l\xin$ for
$\l,m\ge 0$ are given by
\begin{align}
&
\mathop{\res}_{\al=-2m-l\xin}
I_{\al}d\al
={\textstyle\frac {1-\nu}{\nu}} 
\mathop{\res}_{S=0}\(S^{-l} X^-(S)\frac{dS}S\)
\mathop{\res}_{\mathfrak{s}=0}\(
\mathfrak{s}^{-m} x^-(\mathfrak{s})\frac{d\mathfrak{s}}{\mathfrak{s}}\)\,.\label{reslm-}
\end{align}
\end{prop}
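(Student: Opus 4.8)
The plan is to push the analytic‑continuation procedure of the preceding paragraphs one step at a time and read off the residues at each stage, exactly as was already done for the first two series of poles. Concretely, I would proceed by induction on $m$ (for the positive series) and separately on $m$ (for the negative series), the level $l$ being handled uniformly by the $\mathfrak{s}$‑ or $S$‑power inserted under the integral.

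\medskip

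For the positive series, fix $m\ge 0$ and consider $I_\al$ in the strip $2m<\mathrm{Re}(\al)<2m+2+\tfrac{2n}\nu$, where the valid representation uses $m^{(m+1)}(\mathfrak{s})$ and $n^{(m+1)}(\mathfrak{s})$. The key structural fact is \eqref{mandn}: taking the residue of that identity at $\al=2(n+m)+(2n+l)\xinn$ gives
\begin{align*}
\mathop{\res}\Bigl(m^{(m+1)}(\mathfrak{s})+a^{-2}p(\mathfrak{s})n^{(m+1)}(\mathfrak{s}\mathfrak{q}^{-4})-p(\mathfrak{s}\mathfrak{q}^{-2})n^{(m+1)}(\mathfrak{s})\Bigr)=0\,,
\end{align*}
so the residue of the $m^{(m+1)}$ integral in \eqref{I1}–\eqref{I2}-type representations is controlled by the residues of the two $n^{(m+1)}$ terms. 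Deforming $\Gamma$ to $(-\infty,\Lambda)\cup(\Lambda-2\pi i,\Lambda)\cup(-\infty-2\pi i,\Lambda-2\pi i)$ and sending $\Lambda\to\infty$, the vertical segment $(\Lambda-2\pi i,\Lambda)$ is the only surviving piece because $\chi(\sigma)$ decays like $e^{-2n\sigma/(1-\nu)}$ at $+\infty$ while the weight $e^{\nu\al\sigma/(1-\nu)}$ at these $\al$ exactly balances it to produce the coefficient functions $x^+$, $X^+$; the pieces running to $-\infty$ die. This reproduces the computation already displayed for $m=0,1$: the residue of $n^{(m+1)}(\mathfrak{s})$ in $a^2$ at $a^2=q^{4n+4m}$ is (up to sign) $x^+_m\,\mathfrak{s}^{-m}$ plus lower‑order $\mathfrak{s}$‑powers that vanish in the $\Lambda\to\infty$ limit, and pairing this with the vertical‑segment integral $\int_{\Lambda-2\pi i}^{\Lambda}\chi(\sigma)e^{2(n+m)\sigma/(1-\nu)}e^{2n\sigma}\,S^l\,d\sigma$ produces $-2\pi i\,X^+_l\,x^+_m$ (more precisely the paired double residue in \eqref{reslm+}). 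The only thing to verify carefully at each step is that the new $n^{(m+1)}$ has exactly the poles claimed — at $a^2=q^{4n+4j}$, $0\le j\le m$ — with leading residue governed by $x^+_j$; this is a direct consequence of solving the recursion \eqref{mandn}, analogous to \eqref{n1}–\eqref{n2}, and I would package it as a lemma on the coefficients $n^{(k)}_j$.

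\medskip

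For the negative series one argues symmetrically, continuing $I_\al$ into $\mathrm{Re}(\al)\le 0$ by moving the contour upward and using the $\sigma\to-\infty$ asymptotics, which involve $x^-(\mathfrak{s})$, $X^-(S)$ and the prefactors $\mathfrak{q}^n\prod\mathfrak{b}_j^{-1/2}$, $\prod B_j^{-1}$. The poles land at $\al=-2m-l\xinn$, the sign of the residue flips because the roles of the two ends of the strip are interchanged, and the surviving vertical segment now sits near $\sigma=-\infty$; this yields \eqref{reslm-} with residues taken at $S=0$, $\mathfrak{s}=0$ instead of $\infty$. The bookkeeping of prefactors is what converts $\res_{S=\infty}$ to $\res_{S=0}$ and accounts for the overall sign.

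\medskip

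The main obstacle, I expect, is purely combinatorial rather than conceptual: establishing uniformly in $m$ that the Laurent polynomials $m^{(k)}(\mathfrak{s})$, $n^{(k)}(\mathfrak{s})$ defined by \eqref{mandn} have precisely the stated degree ranges and that $n^{(k)}$ has simple poles in $a^2$ exactly at $q^{4n},q^{4n+4},\dots,q^{4n+4(k-1)}$ with the leading residue at $q^{4n+4j}$ equal (up to an explicit constant) to $x^+_j\,\mathfrak{s}^{-j}+O(\mathfrak{s}^{-j+1})$. Once this is in hand, the contour deformations are the same ones already used for $m=0,1$, and the limit $\Lambda\to\infty$ kills everything except the single vertical segment whose integrand is, by the very definition of $x^\pm$ and $X^\pm$ as asymptotic coefficients of $\chi(\sigma)$, precisely the product $\res_{S=\infty}(S^lX^+(S)\tfrac{dS}{S})\cdot\res_{\mathfrak{s}=\infty}(\mathfrak{s}^m x^+(\mathfrak{s})\tfrac{d\mathfrak{s}}{\mathfrak{s}})$ (and its $0$-centred counterpart for the negative series). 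So the proof reduces to "continue one more strip, identify the new residue, iterate," with the degree/pole lemma as the only real work.
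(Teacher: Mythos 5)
Your proposal follows essentially the same route as the paper: the paper performs exactly this strip-by-strip continuation via \eqref{I1}, \eqref{I2}, computes the residues for $m=0,1$ using \eqref{RESIDUES}--\eqref{RESIDUES2} together with the deformation of $\Gamma$ to a contour whose only surviving piece is the vertical segment at $+\infty$ (resp.\ $-\infty$), and then asserts the general case by ``continuing along the same lines,'' leaving the same combinatorial lemma on $m^{(k)}$, $n^{(k)}$ implicit that you do. The one caution is a sign slip in your stated lemma: the residue of $n^{(k)}$ at $a^2=\mathfrak{q}^{4n+4j}$ must contain no powers of $\mathfrak{s}$ \emph{above} $\mathfrak{s}^{-j}$ (such terms would blow up, not vanish, as $\Lambda\to\infty$ since $\mathfrak{s}\to\infty$ there), so the admissible correction is $O(\mathfrak{s}^{-j-1})$ rather than $O(\mathfrak{s}^{-j+1})$.
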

Now we give the main definition.

\begin{definition}
Consider two Laurent polynomials $\ell(\mathfrak{s})$ and $L(S)$. 
We define their pairing
$(\ell,L)_\al $ 
by the following two requirements:
\newline 1. The pairing is bilinear.
\newline 2. If $\ell(\mathfrak{s})=\mathfrak{s}^m$, $L(S)=S^l$ then
\begin{align}
(\ell,L)_\al=I_{\al+2m+\xin l}\,.\label{MAINDEF}
\end{align}
\end{definition}
The polynomials $\ell$ and $L$ are interpreted 
as cycles and forms in the classical limit. 
Actually there are two possibilities.
In the limit $\nu\to 1$, $\ell$ 
describes cycles and $L$ describes one-forms. 
In the limit $\nu\to 0$, $\ell$
describes one-forms and $L$ describes cycles.
The situation is very much dual, 
and there is no preferred choice to call one a cycle and the other a
form. So we will call $\ell$ a 
$\mathfrak{q}$-deformed form and $L$
a $Q$-deformed form.
The next two propositions
describe exact forms.
\begin{prop}
For any Laurent polynomial $z(\mathfrak{s})$ we  define 
\begin{align}
D_a[z](\mathfrak{s})=a^{-2}p(\mathfrak{s})z(\mathfrak{s})-
p(\mathfrak{s}\mathfrak{q}^2)
z(\mathfrak{s}\mathfrak{q}^{4})\,,
\label{qexact}
\end{align}
and call it a $\mathfrak{q}$-exact form.
For any Laurent polynomial $L(S)$ we have
\begin{align}
(D_a[z],L)_\al =0\,.\label{qexact=0}
\end{align}
\end{prop}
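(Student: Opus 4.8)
The plan is to reduce the vanishing of the pairing $(D_a[z],L)_\al$ to the functional equation \eqref{2piiden} satisfied by $\chi(\sigma)$, combined with a contour-shift argument. By bilinearity it suffices to treat $z(\mathfrak{s})=\mathfrak{s}^k$ and $L(S)=S^l$; the general case follows by linearity in $z$ and $L$. So I would first write out $D_a[\mathfrak{s}^k](\mathfrak{s})$ explicitly as a Laurent polynomial in $\mathfrak{s}$, using \eqref{qexact}: each monomial $\mathfrak{s}^j$ appearing in $a^{-2}p(\mathfrak{s})\mathfrak{s}^k$ produces a term $I_{\al+2j+\xin l}$, and each monomial appearing in $p(\mathfrak{s}\mathfrak{q}^2)\mathfrak{s}^k\mathfrak{q}^{4k}$ produces a term with a matching power of $\mathfrak{s}$ but a compensating power of $\mathfrak{q}$.

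The key step is then to recognise that the combination $\sum_j (D_a[\mathfrak{s}^k])_j\, I_{\al+2j+\xin l}$ is exactly what one obtains by inserting the rational function $D_a[\mathfrak{s}^k](\mathfrak{s})/\bigl(p(\mathfrak{s})p(\mathfrak{s}\mathfrak{q}^2)\bigr)$ — or some suitable renormalisation — under the integral sign and applying the shift $\sigma\mapsto\sigma+2\pi i$. Concretely, I would consider the integral $\int_{\mathbb{R}-i0}\chi(\sigma)e^{\frac{\nu\al}{1-\nu}\sigma}S^l\,\bigl(a^{-2}p(\mathfrak{s})z(\mathfrak{s})-p(\mathfrak{s}\mathfrak{q}^2)z(\mathfrak{s}\mathfrak{q}^4)\bigr)\,d\sigma$ (after absorbing the powers of $\mathfrak{s}$ into shifts of $\al$ as in \eqref{MAINDEF}), split it into two pieces, and in the second piece shift $\sigma\to\sigma-2\pi i$. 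Using \eqref{2piiden} in the form $\chi(\sigma-2\pi i)p(\mathfrak{s}\mathfrak{q}^2)=\chi(\sigma)p(\mathfrak{s})$ (rewrite \eqref{2piiden} by replacing $\sigma$ with $\sigma-2\pi i$ and $\mathfrak{s}$ with $\mathfrak{s}\mathfrak{q}^{-2}$), the integrand of the shifted second piece becomes identical to that of the first, so the two cancel, provided the contour shift picks up no residues and the arcs at infinity vanish. The definition \eqref{qexact} is evidently tailored so that $a^{-2}p(\mathfrak{s})z(\mathfrak{s})$ in the first term matches $p(\mathfrak{s}\mathfrak{q}^2)z(\mathfrak{s}\mathfrak{q}^4)$ in the second after the shift, which is precisely the mechanism making the answer zero.

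The main obstacle will be the analytic bookkeeping: the individual integrals $I_{\al+2j+\xin l}$ converge only in the strip \eqref{r1} (shifted), whereas a generic $z$ produces powers $\mathfrak{s}^j$ ranging over an interval that may force some of the $I_{\al+2j+\xin l}$ outside their domain of convergence, so one must work with the analytically continued $I_\al$ from the Proposition and check that the cancellation survives continuation. The cleanest way around this is to establish \eqref{qexact=0} first for $\al$ in a strip where every term is given by its convergent integral — so that the contour manipulation is literally valid — and then invoke meromorphicity of both sides in $\al$ to extend it everywhere; since both sides are meromorphic and agree on an open set, they agree identically. I would also need to verify that when shifting $\sigma\to\sigma-2\pi i$ across the strip $0\ge\mathrm{Im}(\sigma)\ge-2\pi$, the factor $p(\mathfrak{s}\mathfrak{q}^2)z(\mathfrak{s}\mathfrak{q}^4)$ kills exactly the poles of $\chi(\sigma)$ that would otherwise be crossed — this is again guaranteed by the structure of \eqref{2piiden}, since $\chi(\sigma)p(\mathfrak{s}\mathfrak{q}^2)$ has no poles where $\chi(\sigma-2\pi i)p(\mathfrak{s})$ is regular. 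Once convergence and the absence of boundary contributions are under control, the cancellation itself is immediate from \eqref{2piiden}.
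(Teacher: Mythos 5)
Your cancellation mechanism is the right one, and it is in essence the paper's: the two terms of $D_a[z]$ match up under a $2\pi i$ shift of $\sigma$ by virtue of \eqref{2piiden}, and the factor $p(\mathfrak{s}\mathfrak{q}^2)$ multiplying $z(\mathfrak{s}\mathfrak{q}^4)$ is exactly what cancels the poles of $\chi(\sigma)$ sitting between the two contours, so no residues are picked up. (A bookkeeping slip: with $\mathfrak{s}=\mathfrak{s}(\sigma)$ the downward-shifted form of \eqref{2piiden} is $\chi(\sigma-2\pi i)\,p(\mathfrak{s}\mathfrak{q}^{-2})=\chi(\sigma)\,p(\mathfrak{s})$, not the identity you wrote; this does not affect the idea.)

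There is, however, a genuine gap in your regularisation step. You propose to establish \eqref{qexact=0} on an open set of $\al$ where every term $I_{\al+2j+\xin l}$ is given by its convergent integral, and then extend by meromorphy in $\al$. But $D_a[\mathfrak{s}^k]$ contains the monomials $\mathfrak{s}^k,\dots,\mathfrak{s}^{k+2n}$ with generically nonvanishing extreme coefficients, so the required shifts of $\al$ sweep an interval of length $4n$, while the convergence strip \eqref{r1} has width $2n/\nu$. For $\nu\ge\tfrac12$ these are incompatible: there is no $\al$ at which all the terms (equivalently, the single combined integral) converge, your starting open set is empty, and the continuation argument never gets off the ground; repairing it would require continuation in $\nu$ as well, which you do not address. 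The paper's proof removes this obstacle by working from the outset with the regularised representation \eqref{propq}--\eqref{lmnpropq} and observing that for $\ell=D_a[z]$ one may take $m(\mathfrak{s})=0$ and $n(\mathfrak{s})=p(\mathfrak{s}\mathfrak{q}^2)z(\mathfrak{s}\mathfrak{q}^4)$ in \eqref{lmnpropq}. Then the non-compact $\mathbb{R}$-integral is absent altogether, and what remains is the compact $\Gamma$-integral whose integrand $\chi(\sigma)n(\mathfrak{s})$ is regular inside $\Gamma$ and hence integrates to zero --- an argument valid for every $\al$ and every $\nu$ with no analytic continuation. That choice of $m$ and $n$ is precisely the rigorous encoding of your contour shift, and adopting it closes the gap.
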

\begin{proof}
It is sufficient to consider the case $L(S)=S^l$. 
The procedure described above gives for any $\ell(\mathfrak{s})$
\begin{align}
(\ell,L)_\alpha=
\int\limits _{\mathbb{R}-i0}
\chi (\sigma 
)e^{(\frac {\al\nu}{1-\nu}+l)\sigma}
\frac{m(\mathfrak{s})}{p(\mathfrak{s}\mathfrak{q}^{-2})}d\sigma+
\int\limits _{\Gamma}
\chi (\sigma 
)n(\mathfrak{s})e^{(\frac {\al\nu}{1-\nu}+l)\sigma}d\sigma\,,\label{propq}
\end{align}
where
\begin{align}
p(\mathfrak{s}\mathfrak{q}^{-2})\ell(\mathfrak{s})=m
(\mathfrak{s})+a^{-2}p(\mathfrak{s})n
(\mathfrak{s}\mathfrak{q}^{-4})-
p(\mathfrak{s}\mathfrak{q}^{-2})n
(\mathfrak{s})\,,\label{lmnpropq}
\end{align}
and $m(\mathfrak{s})$ is chosen in such a way that the integral
in \eqref{propq} converges. 
This is possible to do with a comfortable margin.
If we take $\ell=D_a[z]$ 
it is easy to see that \eqref{lmnpropq} is satisfied with
\begin{align}
m(\mathfrak{s})=0, \quad n(\mathfrak{s})=p(\mathfrak{s}
\mathfrak{q}^2)z(\mathfrak{s}\mathfrak{q}^{4})\,.\label{AMBIGUITY}
\end{align}
Then the right hand side of \eqref{propq} vanishes since $p(\mathfrak{s}
\mathfrak{q}^2)$ cancels the singularities of $\chi (\sigma 
)$
inside $\Gamma$.
\end{proof}
\begin{prop}\label{propQexact}
For any Laurent polynomial $Z(S)$ we  define 
\begin{align}
D_A[Z](S)=Z(S)P(S)-AZ(SQ)P(-S)\label{Qexact}
\end{align}
and call it a $Q$-exact form.
For any Laurent polynomial $\ell(\mathfrak{s})$ we have
\begin{align}
(\ell,D_A[Z])_\al =0\,.\label{Qexact=0}
\end{align}
\end{prop}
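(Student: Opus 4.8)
The plan is to mimic the proof of the previous proposition (the $\mathfrak{q}$-exact case), exploiting the symmetry between the two functional equations \eqref{2piiden} and \eqref{xiiden}. As before it suffices to treat $\ell(\mathfrak{s})=\mathfrak{s}^m$, so that by bilinearity the pairing $(\mathfrak{s}^m,D_A[Z])_\al$ is a finite linear combination of integrals $I_{\al+2m+\xin l}$ with $l$ running over the exponents appearing in $D_A[Z]$. First I would rewrite this combination as a single integral: using the definition \eqref{integral} and linearity,
\begin{align*}
(\mathfrak{s}^m,D_A[Z])_\al=\int\limits_{\mathbb{R}-i0}\chi(\sigma)e^{\frac{\nu(\al+2m)}{1-\nu}\sigma}\bigl(Z(S)P(S)-AZ(SQ)P(-S)\bigr)\,d\sigma\,,
\end{align*}
valid in a strip of $\al$ where the integral converges (the $P(S)$ factors improve convergence, so there is ample margin, exactly as in the proof of the $\mathfrak{q}$-exact statement; if necessary one first brings in an auxiliary $m(\mathfrak{s})$-type correction as in \eqref{propq}, but the natural choice will make it unnecessary).

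Next I would use the functional equation \eqref{xiiden}, namely $\chi(\sigma+\xinn\pi i)P(SQ)=\chi(\sigma)P(-S)$. In the second integral I substitute $\sigma\mapsto\sigma-\xinn\pi i$, which turns $Z(SQ)P(-S)$ at argument $\sigma$ into $Z(S)P(-SQ^{-1})\cdot\chi(\sigma)/\chi(\sigma)$-type bookkeeping; more precisely, after the shift the integrand of the second term becomes $\chi(\sigma)Z(S)P(S)e^{\frac{\nu(\al+2m)}{1-\nu}\sigma}$ times the constant $A e^{-\xin\pi i(\frac{\nu(\al+2m)}{1-\nu})}$ times the exponential bookkeeping from $e^{l\sigma}$-shifts, and the two contours $\mathbb{R}-i0$ and $\mathbb{R}-i0-\xinn\pi i$ differ by the poles of $\chi(\sigma)Z(S)P(S)e^{\cdots}$ in the intervening strip. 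But $P(S)=\prod_j(S-B_j)$ vanishes exactly at the points $S=B_j$, i.e. $\sigma\equiv\beta_j$, which are precisely the locations of the poles of $\chi(\sigma)$ that lie in the relevant strip; hence $P(S)\chi(\sigma)$ is regular there and the contour shift picks up nothing. The two terms therefore cancel once the constant prefactor $Ae^{-\xin\pi i\frac{\nu(\al+2m)}{1-\nu}}$ is matched against the normalisation built into \eqref{MAINDEF}. This matching of constants is the computation one must do carefully: it is the mirror image of the relation $\mathfrak{q}^2=Q^2$-type identity used implicitly in the $\mathfrak{q}$-exact proof, and it is where the definition $Q=e^{\pi i\frac{1-\nu}\nu}$, $A=e^{\pi i\al}$ enter.

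The main obstacle I anticipate is precisely this last bookkeeping step: one must verify that the shift $\sigma\mapsto\sigma-\xinn\pi i$ produces exactly the extra factor $A$ (together with the $2m$-shift in $\al$ and the $l$-shift built into the pairing) so that the two terms of $D_A[Z]$ are sent to the \emph{same} integral with opposite signs. Equivalently, one checks that $e^{\frac{\nu(\al+2m)}{1-\nu}\cdot(-\xinn\pi i)}\cdot(\text{shift of }e^{l\sigma})=A^{-1}$ up to the factors already accounted for in \eqref{MAINDEF} — this is a one-line exponent computation but it is the crux. A secondary point requiring care is the justification that, in whatever strip of $\mathrm{Re}(\al)$ one works, all the integrals $I_{\al+2m+\xin l}$ involved are simultaneously given by the convergent representation \eqref{integral} (or a common analytic continuation of it), so that the manipulation is legitimate before invoking meromorphy to extend \eqref{Qexact=0} to all $\al$; this is handled exactly as in the preceding proof and in the analytic continuation discussion of $I_\al$ above, so I would only remark on it briefly.
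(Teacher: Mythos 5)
Your overall strategy coincides with the paper's: use the functional equation \eqref{xiiden} to turn the second term of $D_A[Z]$ into the first term integrated over a contour shifted by $\xinn\pi i$, so that the pairing becomes a difference of two contour integrals of one and the same integrand. The step you single out as the crux — the constant bookkeeping — is in fact the easy part and does work out: $\mathfrak{s}$ is invariant under $\sigma\mapsto\sigma+\xinn\pi i$, and the factor $e^{\frac{\nu\al}{1-\nu}(-\xinn\pi i)}=e^{-\pi i\al}=A^{-1}$ exactly compensates the $A$ in \eqref{Qexact}, with the $S^l$ powers absorbed consistently by \eqref{MAINDEF}.

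The genuine gap is in the pole analysis. You assert that the only singularities of the integrand between the two contours sit at $S=B_j$ and are killed by $P(S)$, so that ``the contour shift picks up nothing.'' Nothing in the stated properties of $\chi$ supports this: the regularity statement covers only the strip $-\pi<\mathrm{Im}(\sigma)<0$ \emph{below} the real axis, whereas your shift sweeps out $0<\mathrm{Im}(\sigma)<\xinn\pi$, and the normalisation \eqref{normchi} shows that the product $\chi(\sigma,\beta)\chi(\sigma,\beta+\pi i)$ has poles at $S=-B$ as well as at $S=B$, so locating the poles of the single factor $\chi(\sigma,\beta)$ in the upper strip requires an argument you do not give. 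Indeed, the paper's own proof exhibits surviving poles at $\sigma=\beta_j+\pi i\bigl(1-\xinn\bigl[\frac{\nu}{1-\nu}\bigr]\bigr)$ (the boldface crosses of Fig.~2), where $P(S)\neq0$; in the general regularised pairing \eqref{propQ} these come from $1/p(\mathfrak{s}\mathfrak{q}^{-2})$, and the heart of the proof is that their residues cancel against those of the poles at $\sigma=\beta_j-\pi i\bigl(1+\xinn\bigl[\frac{\nu}{1-\nu}\bigr]\bigr)$ arising from the $\Gamma$-part, a cancellation that uses the second functional equation \eqref{2piiden} together with the relation \eqref{lmnpropq} between $m(\mathfrak{s})$ and $n(\mathfrak{s})$. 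Your plan to drop the $m$/$n$ regularisation and argue in a convergence strip plus analytic continuation is a route the paper explicitly sanctions as sufficient, but even on that route you must still establish that $\chi$ itself has no poles in the upper strip other than at $S=B_j$ — that, not the exponent computation, is the crux, and it is precisely the step your proposal leaves unproved. (The paper's more elaborate direct proof also pays off later: the same residue-cancellation mechanism is reused in Section 9 where $Z(S)$ is replaced by a rational function and the simple-case argument is no longer available.)
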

\begin{proof}
It would be sufficient to consider a simple case when
there is no need for regularisation of the integral, 
and then continue analytically.
However, 
in what follows it is
more instructive to have a direct proof for any $\al$.
Figure 2
illustrates the proof.
\vskip .5cm
\hskip 5cm\includegraphics[height=8cm]{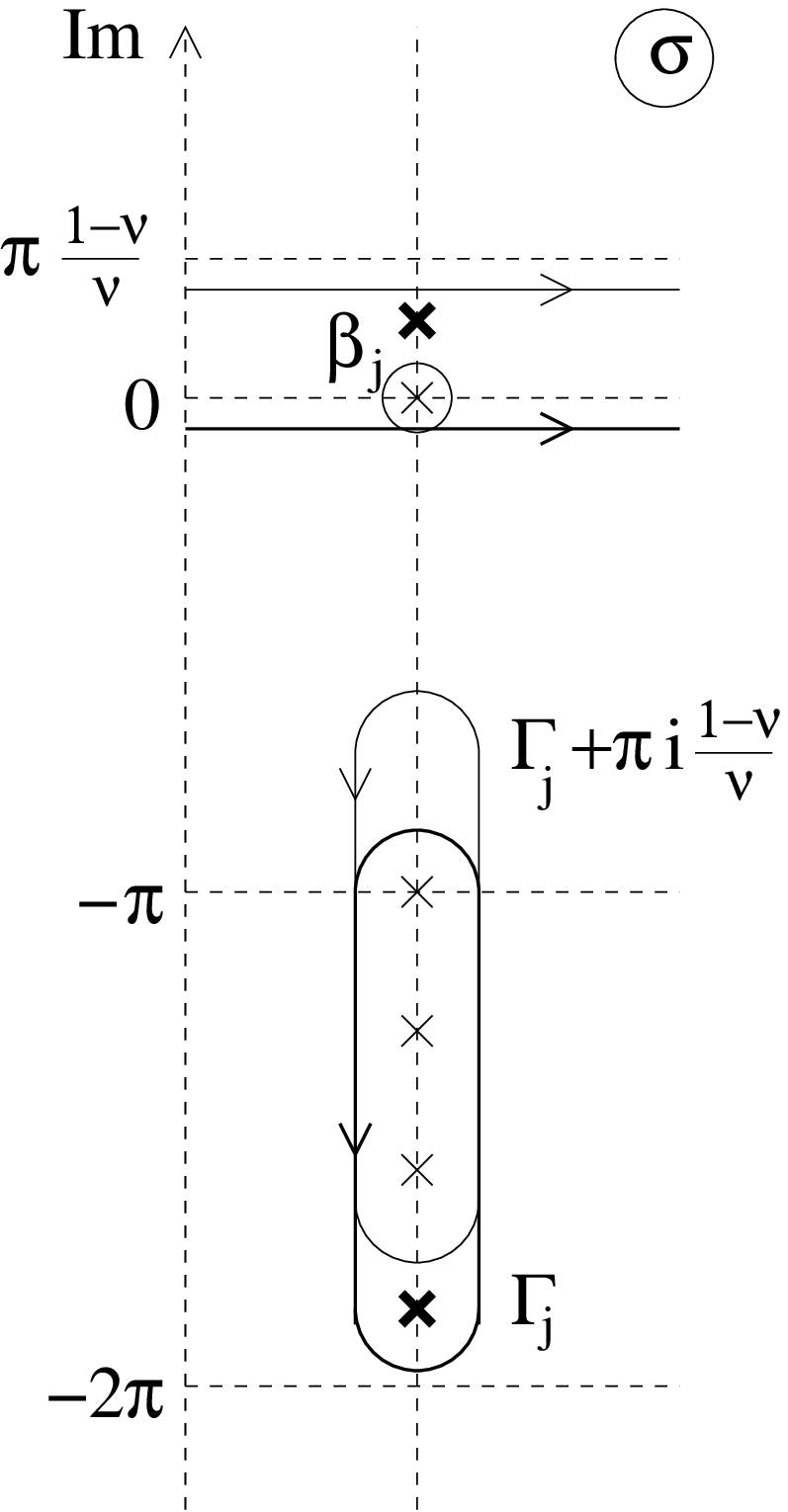}

\noindent
{\it Fig.2: Possible poles arising in the proof of Proposition \ref{propQexact}
\\
Poles canceled by the factor $P(S)$ are represented by ordinary crosses
inside circles. The residues at the 
remaining poles depicted by boldface crosses 
cancel each other.
}
\vskip .3cm
From the definition we have
\begin{align}
(\ell,D_A[Z])_\al=
\int\limits _{\mathbb{R}-i0}
\chi (\sigma)
e^{\frac{\nu\al}{1-\nu}\s}D_A[Z](S)
\frac{m(\mathfrak{s})}{p(\mathfrak{s}\mathfrak{q}^{-2})}d\sigma+
\int\limits _{\Gamma}
\chi (\sigma)
e^{\frac{\nu\al}{1-\nu}\s}D_A[Z](S)n(\mathfrak{s})d\sigma\,,\label{propQ}
\end{align}
where
$m(\mathfrak{s})$ and $n(\mathfrak{s})$
are polynomials satisfying \eqref{lmnpropq}
and are
chosen by the requirement of convergence. Actually they
may be different for different monomials in $Z(S)$, but this does not
matter for the following computation. 

Using the functional equation  \eqref{xiiden} one easily finds that
\begin{align*}
&(\ell,D_A[Z])_\al=
\(\int\limits _{\mathbb{R}-i0}-\int\limits _{\mathbb{R}+\pi i\xin-i0}\)
\chi (\sigma )e^{\frac{\nu\al}{1-\nu}\sigma}P(S)Z(S)
\frac{m(\mathfrak{s})}{p(\mathfrak{s}\mathfrak{q}^{-2})}d\sigma\\&
\quad\quad\quad\quad+
\(\int\limits _{\Gamma}-\int\limits _{\Gamma+\pi i\xin}\)
\chi (\sigma)
e^{\frac{\nu\al}{1-\nu}\sigma}P(S)Z(S)n(\mathfrak{s})d\sigma\,.
\end{align*}
Poles arise from $\chi(\s)$ and $1/p(\mathfrak{sq}^{-2})$.
We ask which poles are inside the difference of contours
$(\mathbb{R}-i0)-(\mathbb{R}+\pi i\xin-i0)$, or $\Gamma-(\Gamma+\pi i\xin)$. 
The poles of the first integrand at $\sigma =\beta _j$ are canceled by $P(S)$.
So, the only remaining poles between $\mathbb{R}$ and $\mathbb{R}+\pi i\xin$ are
at $\sigma =\beta _j+\pi i\bigl(1 -\xin \bigl[\frac \nu {1-\nu}\bigr]\bigr)$. Similarly,
the poles inside $\Gamma -(\Gamma +i\xin)$ are situated 
only at the points $\sigma =\beta _j-\pi i\bigl(1 +\xin \bigl[\frac \nu {1-\nu}\bigr]\bigr)$.
On {\it fig. 2} the pole canceled by $P(S)$ is in the circle, and the
remaining poles are fat. 
It follows from the functional equation \eqref{2piiden}
and the definition \eqref{lmnpropq}
that the residues at these remaining poles cancel each other.

\end{proof}



The last definition which we would like to give here concerns the 
multiple integrals. Consider $k$ Laurent
polynomials $\ell_1(\mathfrak{s}),\cdots \ell_k(\mathfrak{s})$ and
$k$ Laurent polynomials $L_1(S),\cdots ,L_k(S)$, and define the antisymmetric 
Laurent polynomials
of $k$ variables:
$$\ell^{(k)}(\mathfrak{s}_1,\cdots \mathfrak{s}_k)=\bigl(\ell_1\wedge\cdots\wedge \ell_k
\bigr)(\mathfrak{s}_1,\cdots \mathfrak{s}_k),
\quad L^{(k)}(S_1,\cdots ,S_k)=\bigl(L_1\wedge\cdots \wedge L_k)(S_1,\cdots ,S_k)\,.
$$
We define
\begin{align}
(\ell^{(k)},L^{(k)})_\al=
\det\left(\, (\ell_i,L_j)_\al\, \right)_{i,j=1,\cdots ,k}\,.\label{kPairing}
\end{align}
Then this pairing is generalised for all 
antisymmetric Laurent polynomials of $k$
variables $\ell^{(k)}$, $L^{(k)}$ by bilinearity.

\section{Formulae for form factors} \label{FormulasforFF}

Now we are ready to write down the 
formulae for form factors \eqref{FF}.  
Let us first prepare a couple of symbols. 

For a partition $I^-\sqcup I^+=\{1,\cdots,2n\}$
such that $\sharp(I^-)=\sharp(I^+)$, 
define the polynomials
$$
p_{I^-}(\mathfrak{s})=\prod\limits _{j\in I^-}(\mathfrak{s}-\mathfrak{b}_j),
\quad p_{I^+}(\mathfrak{s})=\prod\limits _{j\in
I^+}(\mathfrak{s}-\mathfrak{b}_j)\,,
$$
so that we have $p(\mathfrak{s})=p_{I^+}(\mathfrak{s})p_{I^-}(\mathfrak{s})$.
We set also
$$
p_{I^\pm,i}(\mathfrak{s})=
\left[ \mathfrak{s}^{i-n}  p_{I^\pm}(\mathfrak{s}) \right]_\ge\,,
$$
where $[~]_{\ge}$ signifies the polynomial part.
Essentially following \cite{book} we define 
\begin{align}
&\ell^{(n)}_{I^-\sqcup I^+}(\mathfrak{s}_1,\cdots ,\mathfrak{s}_n)
=\(\ell_{I^-\sqcup I^+,0}
\wedge \cdots \wedge \ell_{I^-\sqcup I^+,n-1}\)
(\mathfrak{s}_1,\cdots ,\mathfrak{s}_n)\,,\label{lli}
\\
&\ell_{I^-\sqcup I^+,i}(\mathfrak{s})
=a^{-1}\bigl\{p_{I^-}(\mathfrak{s})
\(p_{I^+,i}(\mathfrak{s})-p_{I^+,i}(\mathfrak{s}\mathfrak{q}^2)\)
\label{li}
\\
&\qquad+\mathfrak{q}^{2(i-n)}p_{I^+}(\mathfrak{s}\mathfrak{q}^2)\(p_{I^-,i}(\mathfrak{s})
-a^2p_{I^-,i}(\mathfrak{s}\mathfrak{q}^2)\)\bigr\}\,.
\nn
\end{align}
Formula \eqref{li} can also be rewritten as 
\begin{align}
&c_{I^-\sqcup I^+}(t,s):=\sum_{i=0}^{n-1}(\mathfrak{q}^2\mathfrak{t})^{n-i}
\ell_{I^-\sqcup I^+,i}(\mathfrak{s})
\label{GENERATING}\\
&=
\frac{a\mathfrak{t}}{\mathfrak{t}-\mathfrak{q}^2\mathfrak{s}}p(
\mathfrak{q}^2\mathfrak{s})
-\frac{a^{-1}\mathfrak{q}^2\mathfrak{t}}
{\mathfrak{q}^2\mathfrak{t}-\mathfrak{s}}p(\mathfrak{s})\nn\\
&+p_{I^+}(\mathfrak{q}^2\mathfrak{t})p_{I^-}(\mathfrak{s})\(
\frac{a^{-1}\mathfrak{q}^2\mathfrak{t}}{
\mathfrak{q}^2\mathfrak{t}-s}-\frac{a^{-1}\mathfrak{t}}{\mathfrak{t}-\mathfrak{s}}\)+
p_{I^-}(\mathfrak{t})p_{I^+}(\mathfrak{q}^2\mathfrak{s})\(\frac{a^{-1}\mathfrak{t}}{\mathfrak{t}-\mathfrak{s}}-\frac{a\mathfrak{t}}{\mathfrak{t}-
\mathfrak{q}^2\mathfrak{s}}\)\,.\nn
\end{align}

Now consider the ansatz \eqref{FF}.  
As it has  already been mentioned,
the symmetry axiom \eqref{axiom1} is satisfied
automatically if 
$\mathcal{F}_{\mathcal{O}_\al,n}(\beta _{I^-}|\beta _{I^+})$
is symmetric separately in $\beta _{I^-}$ 
and in $\beta _{I^+}$.
Furthermore, the Riemann-Hilbert problem axiom \eqref{axiom2} is satisfied 
if we set 
\begin{align}
\mathcal{F} _{\mathcal{O}_\al,n}(\beta _{I^-}|\beta _{I^+})=e^{\frac{\nu}{2(1-\nu)}(1-\al)(\sum\limits_{j=1}^{2n}\beta _j-
\pi i n)}\cdot
(\ell^{(n)}_{I^-\sqcup I^+}\ ,\ L^{(n)})_\al\,,
\label{Phialpha}
\end{align}
where $L^{(n)}=L^{(n)}(S_1,\cdots,S_n|B_1,\cdots,B_{2n})$ is 
an arbitrary Laurent polynomial 
which is anti-symmetric in $S_i$'s and symmetric in $B_j$'s. 
This statement is proved by a direct computation 
similar to the one given in \cite{book}.

While these two axioms concern
form factors with a fixed number of
particles, the third, the residue axiom, relates the Laurent polynomials 
$L^{(n)}$ with different $n$. 
Using the procedure of computing the residues familiar from \cite{book},  
one can reduce the residue axiom further to a simple 
system of recurrence relations given below (see \eqref{rec}).
For later use let us formulate it in 
a slightly more general setting. 

Let $c$ be an integer, and consider a sequence of Laurent polynomials
\begin{align*}
L^{(\star)}=
\{L^{(l,n)}(S_1,\cdots,S_l|B_1,\cdots,B_{2n})\}_{l,n\ge0\atop l-n=c}
\end{align*}
which are anti-symmetric in $S_i$'s and symmetric in $B_j$'s. 
We refer to $L^{(l,n)}$ as the $2n$-particle component of
$L^{(\star)}$, and $c$ as the charge.

\begin{definition}
We say that $L^{(\star)}$ is a tower of charge $c$ if 
\begin{align}
&L^{(l,n)}(S_1,\cdots,S_{l-1},B|
B_1,\cdots ,B_{2n-2},B,-B)
\label{rec}\\
&=
(-1)^cB\prod\limits_{p=1}^{l-1}(B^2-S^2_p)\cdot
L^{(l-1,n-1)}(S_1,\cdots, S_{l-1}|
B_1,\cdots ,B_{2n-2})\,
\nn
\end{align} 
holds for all $l,n\ge1$ with $l-n=c$. 
\end{definition}
In the case $c=0$, we also write $L^{(n,n)}$ as $L^{(n)}$.

The statement is, the residue axiom is satisfied if 
the sequence $\{L^{(n)}\}_{n=0}^\infty$ in 
\eqref{Phialpha}
is a tower of charge $0$. 
In other words, a tower of charge $0$, 
or more precisely its residue class modulo $Q$-exact forms,   
represents a descendant field $\mathcal{O}_\alpha$. 

The most basic example of a tower of charge $0$ is given by 
the polynomials which do not depend on the parameters $B_1,\cdots , B_{2n}$:
\begin{align}
M^{(n)}_{0}(S_1,\cdots ,S_n)=\langle\Phi _\al\rangle\cdot S\wedge S^3\wedge \cdots\wedge S^{2n-1}\,.
\label{FERMIZONE}
\end{align}
This tower, denoted by $M^{(\star)}_0$, 
is identified with the primary field $\Phi _\al$, whose vacuum expectation value 
$\langle\Phi _\al\rangle$ has been found in \cite{LukZam}.

\section{BBS fermions.}\label{BBSfermions}

We are now in a position to introduce fermions which create
towers out of the tower $M_0^{(\star)}$ for the primary field. 
In all formulas discussed below, 
we shall suppress the dependence on the parameters $B_j$. 
Since we deal with towers, however, 
their number $2n$ can vary from place to place. 
For that reason, in this section alone,  
we indicate the $n$-dependence by a suffix, e.g.,  
$P_n(S)=\prod_{j=1}^{2n}(S-B_j)$.

First we note that on the space of all towers there is an obvious 
action of the local integrals of motion. 
This is because the recursion relation \eqref{rec} 
is unaffected by the multiplication 
$L^{(l,n)}\mapsto f(I,\bar{I})L^{(l,n)}$,
where $f(I,\bar{I})$ is an arbitrary polynomial in 
\begin{align}
I_{2j-1,n}=\sum_{k=1}^{2n} B_k^{2j-1},
\quad 
\bar{I}_{2j-1,n}=\sum_{k=1}^{2n} B_k^{-(2j-1)}\,.
\label{integralsofm}
\end{align}
In the construction of towers 
we shall fully make use of the freedom \eqref{integralsofm}. 
We remark that the eigenvalues \eqref{integralsofm} arise in the expansion 
\begin{align}
\sqrt{\frac{P_n(-Z)}{P_n(Z)}}=
\begin{cases}
e^{X_n(Z)} & (Z\to\infty), \\
e^{\bar{X}_n(Z)} & (Z\to0), \\
\end{cases}
\label{sqrt-P}
\end{align}
where 
\begin{align}
X_n(Z)=\sum\limits _{j=1}^{\infty}\frac 1 {2j-1}Z^{-2j+1}I_{2j-1,n}\,,
\quad
\overline{X}_n(Z)=
\sum\limits _{j=1}^{\infty}\frac 1 {2j-1}Z^{2j-1}\bar{I}_{2j-1,n}\,.
\label{X-barX}
\end{align}

Now, following \cite{BBS}, let us introduce the polynomial
\begin{align}
C_n(S_1,S_2)=\frac 1 {4\nu} S_1
\sum_{\epsilon_1,\epsilon_2=\pm}\frac{P_n(\epsilon_1 S_1)
P_n(\epsilon_2 S_2)}{\epsilon_1 S_1+\epsilon_2 S_2}\,,
\label{defC}
\end{align}
where the overall coefficient is introduced for future convenience.
This polynomial is characterised by the following requirements:
\begin{align}
&\text{$C_n(S_1,S_2)$ is odd in $S_1$ and even in $S_2$},
\label{C-1}\\
&\mathrm{deg}_{S_1}C_n(S_1,S_2) = 2n-1,
\quad  \mathrm{deg}_{S_2}C_n(S_1,S_2)= 2n-2\,,
\label{C-2}\\
&C_n(S_1,S_2)|_{B_{2n-1}=B,B_{2n}=-B}=(S_1^2-B^2)(S_2^2-B^2)C_{n-1}(S_1,S_2),
\label{C-3}
\end{align}
with $C_1(S_1,S_2)=-\frac 1 \nu(B_1+B_2)B_1B_2 S_1$. We note also that the total homogeneous degree of
$C_n(S_1,S_2)$ in $S_i$'s and $B_j$'s is $4n$.

Due to the definition through the pairing 
\eqref{Phialpha}, 
adding $Q$-exact forms to components of a tower
$L^{(l,n)}$ does not change the form factors. 
A simple way of fixing this freedom is to restrict their degrees by
\begin{align}
0\le \mathrm{deg}_{S_i}L^{(l,n)}(S_1,\cdots,S_l)\le 2n-1\,.
\label{deg-rest}
\end{align}
If $\alpha$ is generic, any polynomial 
can be brought to this form by adding $Q$-exact forms. 

Let $\mathcal{T}_{0,c}$ denote the space of all towers of charge $c$
satisfying \eqref{deg-rest}. 
We introduce the action of fermions
\begin{align*}
\psi^*_0(Z):\mathcal{T}_{0,c}\longrightarrow \mathcal{T}_{0,c+1},
\quad 
\chi^*_0(Z):\mathcal{T}_{0,c}\longrightarrow \mathcal{T}_{0,c-1}\,,
\end{align*}
by defining
\begin{align}
&\bigl(\psi_0^*(Z)L^{(\star)}\bigr)^{(l+1,n)}(S_0,\cdots,S_l)
\label{psi0}\\
&\quad=
\frac{1}{P_n(-Z)}\,
\frac{1}{l!}\,\text{Skew}_{S_0,\cdots,S_l}C_n(Z,S_0)L^{(l,n)}(S_1,\cdots,S_l)\,,
\nn\\
&\bigl(\chi_0^*(Z)L^{(\star)}\bigr)^{(l-1,n)}(S_1,\cdots,S_{l-1})
\label{chi0}\\
&
\quad=\frac{1}{P_n(-Z)}\,
\frac{1}{2}\,
\left(L^{(l,n)}(Z,S_1,\cdots,S_{l-1})
-L^{(l,n)}(-Z,S_1,\cdots,S_{l-1})
\right)\,.\nn
\end{align}
Here $L^{(\star)}\in \mathcal{T}_{0,c}$, and
\begin{align*}
\mathrm{Skew }f(x_1,\cdots,x_l)=
\sum_{\sigma\in \mathfrak{S}_l}\mathrm{sgn}\sigma\,
f(x_{\sigma(1)},\cdots,x_{\sigma{(l)}})
\end{align*}
stands for skew symmetrisation. 
Because of \eqref{C-2} and \eqref{C-3}, 
$\psi_0^*(Z)$, $\chi^*_0(Z)$ 
preserve the degree condition \eqref{deg-rest} and 
the recurrence relation \eqref{rec}. 
It is also easy to see that these operators mutually anticommute. 

Choosing representatives of restricted 
degrees \eqref{deg-rest} is convenient
to prove the completeness. Namely, following \cite{Fothers} 
it should be possible to prove that
$\psi ^*_0(Z)$, $\chi^*_0(Z)$ create the 
complete set of solutions to the recurrence relations
over the ring of $I_{2j-1}$, $\bar{I}_{2j-1}$.

Now 
we are confronted with the task of identifying these towers with local operators. 
Notice that the
coefficients entering the form factor axioms \eqref{axiom1}--\eqref{axiom3} 
are periodic in $\alpha$ of period $2(1-\nu)/\nu$.
Hence for all integers $m$ the axioms are the same for 
the fields $\Phi _{\al+2m\frac{1-\nu}\nu}$ and their descendants.
One has to be able to identify the towers corresponding to  all these fields.
Consider the simplest case:
the tower of 
\begin{align}
M^{(n)}_m=\langle\Phi _{\al +2m\frac{1-\nu}\nu}\rangle
\cdot S^{2m+1}\wedge S^{2m+2}\wedge\cdots\wedge S^{2n+2m-1}
\prod\limits_{j=1}^{2n}B_j^{-m}\,
\label{al+2mxi}
\end{align}
obviously corresponds to the form factors of the primary field
$
\Phi _{\al +2m\frac{1-\nu}\nu}(0)\,.
$
How does this fit into our description in terms of polynomials satisfying the restriction
\eqref{deg-rest}? Obviously, for $m>0$ 
(resp. $m<0$) the Laurent polynomial $M^{(n)}_m$ 
contains degrees higher than $2n-1$ 
(resp. lower than $0$). 
They have to be reduced using $Q$-exact
forms. We need some general description 
of this procedure which looks completely random for the moment.

The solution to the above problem was found in \cite{BBS} by considering the classical limit.
In the classical case 
the variables $S_1,\cdots ,S_n$  turn into the separated variables. 
With our conventions the local classical observables are represented by polynomials 
$L^{(n)}_{\mathcal{O}_\al}$ of odd degrees in all $S_j$. So, the idea is
to bring the towers created by $\psi ^*_0(Z)$, $\chi ^*_0(Z)$ to this
form using $Q$-exact forms. 
We shall see that a
very transparent structure will emerge. 


In order to make the fermions odd in $Z$,  
it suffices to multiply them by \eqref{sqrt-P} 
(that is, by making use of the integrals of motion).
To make their action odd also in $S$, we modify 
 $C_n(S_1,S_2)$ by adding $Q$-exact forms in $S_2$. 
Namely, we introduce formal series $C_{\pm,n}(S_1,S_2)$ such that 
\begin{align}
&C_{\pm,n}(S_1,S_2)\equiv 
C_n(S_1,S_2)
\quad (S_1^{\pm1}\to\infty,\
\mathrm{mod}\ \text{$Q$-exact forms in $S_2$})\,, 
\label{C+C-}
\end{align}
and which are odd in both variables. 
This can be achieved by setting
\begin{align*}
&C_{\pm,n}(S_1,S_2)=
\frac 1 2\sum_{\epsilon_1,\epsilon_2=\pm}
\epsilon_1\epsilon_2
P_n(\epsilon_1 S_1)P_n(\epsilon_2 S_2)
\tau_\pm(\epsilon_2 S_2/\epsilon_1 S_1)\,,
\end{align*}
where
\begin{align}
&\tau_+(x)=-\sum_{l=0}^{\infty}(-x)^l{\textstyle\frac i {2\nu}}
\cot{\textstyle\frac \pi 2}(\al +{\textstyle\frac l\nu})
\,,\label{tau+}\\
&\tau_-(x)=\sum^{-1}_{l=-\infty}(-x)^l{\textstyle\frac i {2\nu}}
\cot{\textstyle\frac \pi 2}(\al +{\textstyle\frac l\nu})\,.
\label{tau-}
\end{align}
Expand $C_{\pm,n}(Z,S)$ in $Z^{\mp1}$. Their coefficients of $Z^{2j-1}$
where $j<0$ or $j>n$ are $Q$-exact forms in $S$, 
and the remaining terms read as
\begin{align}
&C_{+,n}(Z,S)\equiv\sum\limits _{j=1}^nZ^{2n-2j+1}p_{2j-1}(S)
\quad(\mathrm{mod}\text{ $Q$-exact forms in $S$})\,,\label{Pexpand1}
\\
&p_{2j-1}(S)=\sum\limits _{k=1}^{n+j}S^{2k-1}p_{2j-1,2k-1}\,,
\quad p_{2j-1,2n+2j-1}
={\textstyle\frac i \nu}\cdot\cot{\textstyle\frac\pi2(\al+\frac{2j-1}\nu)},
\nn
\end{align}
\begin{align}
&C_{-,n}(Z,S)\equiv\sum\limits _{j=1}^nZ^{2j-1}\bar{p}_{2j-1}(S)
\quad(\mathrm{mod}\text{ $Q$-exact forms in $S$})\,,
\label{Pexpand2}
\end{align}
\begin{align*}
&\bar{p}_{2j-1}(S)=\sum\limits _{k=1}^{n+j}S^{2n-2k+1}\bar{p}_{2j-1,2k-1}\,
\quad\bar{p}_{2j-1,2n+2j-1}
=
\sigma_{2n}(B)^2
\cdot
{\textstyle\frac i \nu}\cot{\textstyle\frac\pi2(\al-\frac{2j-1}\nu)}.
\end{align*}

After these modifications we define new fermions as formal series at 
$Z^{\pm 1}=\infty$:
\begin{align}
&\bigl(\psi^{'*}(Z)L^{(\star)}\bigr)^{(l+1,n)}(S_0,\cdots,S_l)
\label{psip}\\
&\quad=
\frac{1}{\sqrt{P_n(Z)P_n(-Z)}}\,
\frac{1}{l!}
\text{Skew}_{S_0,\cdots,S_l}C_{+,n}(Z,S_0)L^{(l,n)}(S_1,\cdots,S_l)\,,
\nn\\
&\bigl(\bar{\psi}^{'*}(Z)L^{(\star)}\bigr)^{(l+1,n)}(S_0,\cdots,S_l)
\label{psim}\\
&\quad=
\frac{1}{\sqrt{P_n(Z)P_n(-Z)}}\,
\frac{1}{l!}
\text{Skew}_{S_0,\cdots,S_l}C_{-,n}(Z,S_0)L^{(l,n)}(S_1,\cdots,S_l)\,,
\nn
\end{align}
where 
the right hand sides 
are understood as power series expansions 
in $Z^{\mp1}$.
Define also $\chi^{'*}(Z)$, $\bar{\chi}^{'*}(Z)$ by 
the same formula \eqref{chi0} as for $\chi^*_0(Z)$, replacing 
\begin{align*}
\frac{1}{P_n(-Z)}\longrightarrow \frac{1}{\sqrt{P_n(Z)P_n(-Z)}}\,
\end{align*}
and Taylor expanding in $Z^{\mp1}$. 

Finally we define $\psi^*(Z),\ \bar{\psi}^*(Z)$,
 $\chi^*(Z),\ \bar{\chi}^*(Z)$ by a Bogolubov transform
\begin{align}
&\psi^*(Z)=e^\Xi\psi^{'*}(Z)e^{-\Xi},
\quad 
\bar{\psi}^*(Z)=e^\Xi\bar{\psi}^{'*}(Z)e^{-\Xi},
\label{def-psi1}\\
&
\chi^*(Z)=e^\Xi\chi^{'*}(Z)e^{-\Xi},
\quad 
\bar{\chi}^*(Z)=e^\Xi\bar{\chi}^{'*}(Z)e^{-\Xi},
\label{def-psi2}
\end{align}
where $\Xi$ acting on the $2n$-particle component is 
\begin{align*}
&\Xi_n=\oint\!\!\oint\frac{dZ}{2\pi i Z} \frac{dX}{2\pi i X}
\left(\widehat{C}_{+,n}(Z,X)\psi(Z)\bar{\chi}(X)
+\widehat{C}_{-,n}(Z,X)\bar{\psi}(Z){\chi}(X)
\right)\,,
\\
&\widehat{C}_{\pm,n}(Z,X)=
\frac{1}{\sqrt{P_n(Z)P_n(-Z)}\sqrt{P_n(X)P_n(-X)}}\,C_{\pm,n}(Z,X)\,.
\end{align*}
In the above we have introduced formally the annihilation operators
$\psi(Z)$, $\bar{\psi}(Z)$, $\chi(Z)$, $\bar{\chi}(Z)$, which are  
canonically conjugate to 
$\psi^{'*}(Z)$, $\bar{\psi}^{'*}(Z)$, 
$\chi^{'*}(Z)$, $\bar{\chi}^{'*}(Z)$, respectively 
and annihilate $M^{(\star)}_0$.
Obviously, the same operators are canonically conjugated to 
$\psi^{*}(Z)$, $\bar{\psi}^{*}(Z)$, 
$\chi^{*}(Z)$, $\bar{\chi}^{*}(Z)$. That is why we did not put prime in the notation. 
Note, for example, that
\begin{align*}
\psi^*(Z)=\psi^{'*}(Z)-\oint\!\!\frac{dX}{2\pi i X}\widehat{C}_{+,n}(Z,X)\bar{\chi}(X).
\end{align*}
The modification 
\eqref{def-psi1}, \eqref{def-psi2}
is useful for separating the role of the two chiralities, 
as it will be explained shortly. 

With the definition of $\psi^*(Z),\bar{\psi}^*(Z)$,
$\chi^*(Z),\bar{\chi}^*(Z)$ given above, 
let us write down their action on the tower $M^{(\star)}_0$. 
Suppose $k-k'=l-n$. 
\begin{align}
&\bigl(\psi ^*(Z_1)\cdots \psi ^*(Z_p)
\bar{\psi} ^*(Z_{p+1})\cdots \bar{\psi} ^*(Z_{k})
\label{genferm1} \\
&\quad\times \bar{\chi }^*(X_{k'})\cdots 
\bar{\chi }^*(X_{q+1})\chi ^*(X_q)\cdots \chi ^*(X_1)
M_0^{(\star)}\bigr)^{(l,n)}(S_1,\cdots, S_l)\nn\\
&=\langle\Phi _\al\rangle(-)^{kk'}
\frac{1}{\prod\limits_{j=1}^k\sqrt{P_n(Z_j)P_n(-Z_j)}\ \ 
\prod\limits_{j=1}^{k'}\sqrt{P_n(X_j)P_n(-X_j)}}
\cdot\left|\ 
\begin{matrix}
\mathcal{A}&\mathcal{B}\\ \mathcal{C}&\mathcal{D}
\end{matrix}
\ \right|\,,\nn
\end{align}
where $\mathcal{A}$, $\mathcal{B}$, $\mathcal{C}$ and $\mathcal{D}$
are respectively 
$k\times k'$, $k\times l$, $n\times k'$ and $n\times l$ 
matrices:
\begin{align}
\mathcal{A}=
\begin{pmatrix}
0&\cdots &0&C _+(Z_1,X_{q+1})&\cdots&C _+(Z_1,X_{k'})\\
\vdots&\   &\vdots &\vdots&\ &\vdots \\
0&\cdots &0&C _+(Z_p,X_{q+1})&\cdots&C _+(Z_p,X_{k'})\\
C _-(Z_{p+1},X_1)&\cdots&C _-(Z_{p+1},X_{q})&0&\cdots &0\\
\vdots&\   &\vdots &\vdots&\ &\vdots \\
C _-(Z_{k},X_1)&\cdots&C _-(Z_{k},X_{q})&0&\cdots &0\\
\end{pmatrix}\,.\nn
\end{align}
\begin{align}
\mathcal{B}=\begin{pmatrix}
C _+(Z_1,S_1)&\cdots &C _+(Z_1,S_l)
\\
\vdots&\   &\vdots \\
C _+(Z_p,S_1)&\cdots &C _+(Z_p,S_l)\\
C _-(Z_{p+1},S_1)&\cdots&C _-(Z_{p+1},S_l)\\
\vdots&\   &\vdots \\
C _-(Z_{k},S_1)&\cdots&C _-(Z_{k},S_l)
\end{pmatrix}\,,\nn
\end{align}
\begin{align}
\mathcal{C}=\begin{pmatrix}
X_{1} &\cdots&X_{k'}\\ 
\vdots &\ &\vdots\\
X^{2n-1}_{1} &\cdots &X^{2n-1}_{k'} 
\end{pmatrix}\,,
\quad
\mathcal{D}=\begin{pmatrix}
S_1 &\cdots&S_l\\
\vdots &\ &\vdots\\
S^{2n-1}_1 &\cdots &S^{2n-1}_l 
\end{pmatrix}\,.\nn
\end{align}
The Fourier modes are introduced by 
\begin{align}
&\psi ^*(Z)=\sum_{j=1}^\infty Z^{-2j+1}\psi ^*_{2j-1},
\quad
\chi ^*(X)=\sum_{j=1}^\infty X^{-2j+1}\chi ^*_{2j-1}\,,
\label{EXPANSION1}\\
&\bar{\psi }^*(Z)=\sum_{j=1}^\infty Z^{2j-1}\bar{\psi} ^*_{2j-1},
\quad
\bar{\chi} ^*(X)=\sum_{j=1}^\infty X^{2j-1}\bar{\chi} ^*_{2j-1}\,.
\label{EXPANSION2}
\end{align}
A similar determinant formula is obtained for
$\psi^*_0(Z_1)\cdots\psi^*_0(Z_k)\chi^*_0(X_{k'})\cdots\chi^*_0(X_{1})M^{(\star)}_0$,
for which the $\mathcal A$-part is $0$.
For completeness we give the Fourier decomposition for the annihilation operators
\begin{align*}
&
\psi(Z)=\sum_{j=1}^{\infty} Z^{2j-1}\psi_{2j-1}\,,\qquad
\bar\psi(Z)=\sum_{j=1}^{\infty} Z^{-2j+1}\bar\psi_{2j-1},\\
&\chi (Z)=\sum_{j=1}^{\infty} Z^{2j-1}\chi _{2j-1}\,,\qquad
\bar{\chi} (Z)=\sum_{j=1}^{\infty} Z^{-2j+1}\bar{\chi }_{2j-1},
\end{align*}
and the non-vanishing commutation relations
$$
[\psi _{2j-1},\psi ^*_{2k-1}]_+=
[\chi _{2j-1},\chi^* _{2k-1}]_+=
[\bar{\psi }_{2j-1},\bar{\psi }_{2k-1}^*]_+=
[\bar{\chi } _{2j-1},\bar{\chi } _{2k-1}^*]_+=\delta _{j,k}\,.
$$


Let us explain the formula \eqref{genferm1} using physicist's
terminology,
choosing $k=k'$ for simplicity. 
Suppose we create a new tower starting from the primary one
by application of several operators $\psi ^*_{2j-1}$,
$\chi ^*_{2j-1}$,
$\bar{\psi} ^*_{2j-1}$,
$\bar{\chi} ^*_{2j-1}$. Let us take $n\gg j$
for all indices $j$ of the operators involved. Then the polynomial
$
M^{(n)}_{0}(S_1,\cdots ,S_n)=\langle \Phi _{\al}\rangle S\wedge S^3\wedge \cdots\wedge S^{2n-1}
$
can be considered as Fermi zone with two ends situated at $S$ and $S^{2n-1}$.
Roughly speaking
the operator $\chi ^*_{2j-1}$ creates a hole at $S^{2n-2j+1}$, $\bar\chi ^*_{2j-1}$ a hole at $S^{2j-1}$,
$\psi ^*_{2j-1}$  a particle at $S^{2n+2j-1}$, and $\bar\psi ^*_{2j-1}$ a particle at $S^{-2j+1}$.

The tower for the primary field is represented by the pure wedge product 
\eqref{FERMIZONE}. 
Those for the descendants are linear combinations of pure 
wedge products with symmetric Laurent polynomials  in $B_j$ as coefficients.
We discuss which pure wedge products appear in the linear combination. 
Suppose $a_j,b_j\in2\Z_{>0}-1$, and we pick up
\begin{align}
(\psi^*_{a_1}\cdots\psi^*_{a_p}\bar\psi^*_{a_{p+1}}\cdots\bar\psi^*_{a_k}
\bar\chi^*_{b_k}\cdots\bar\chi^*_{b_{q+1}}\chi^*_{b_q}\cdots\chi^*_{b_1}
M_0^{(\star)}\bigr)^{(n,n)}(S_1,\cdots, S_n)\label{FOURIER}
\end{align}
from \eqref{genferm1} taking the coefficient of
$\prod_{j=1}^pZ_j^{-a_j}\prod_{j=p+1}^kZ_j^{a_j}\prod_{j=1}^qX_j^{-b_j}\prod_{j=q+1}^kX_j^{b_j}$.
We assume that $a_1>\ldots>a_p$, $a_{p+1}<\ldots<a_k$, $b_1<\ldots<b_q$
and $b_{q+1}>\ldots>b_k$. Moreover we assume that $n$ is large enough so that
$2n-b_q>b_{q+1}$, too. Note that
$m=p-q$ is the excess of particles at the right end of the Fermi zone,
and $-m$ is the excess of particles at the left end. We call $m$ a weight.
A basic example of a tower of weight $m$ is \eqref{al+2mxi}.

The procedure of taking the coefficient is as follows. 
First we take the coefficient of $Z_j^{-a_j}$ $(1\leq j\leq p)$ 
and $Z_j^{a_j}$ $(p+1\leq j\leq k)$
in the first $k$ rows. Second we take the coefficient of 
$X_j^{-b_j}$ $(1\leq j\leq q)$ and
$X_j^{b_j}$ $(q+1\leq j\leq k)$ in the first $k$ columns. 
For example, take $1\leq j\leq p$ and consider
the coefficient of $Z_j^{-a_j}$ in 
$C_{+,n}(Z_j,X)/\sqrt{P_n(Z_j)P_n(-Z_j)}$. 
We expand this for $Z_j\rightarrow\infty$.
The leading term is obtained from the approximation
\begin{align*}
\sqrt{P_n(Z_j)P_n(-Z_j)}\sim Z_j^{2n}\,,
\end{align*}
the coefficient being $p_{a_j}(X)$. 
The sub-leading terms are multiples of $p_{2l-1}(X)$ where $2l-1<a_j$.
Therefore, if we consider a partial-ordering of 
pure wedge products by the natural ordering of $a_j$, 
the sub-leading terms has coefficients of lower order.  
It is the same for $p+1\leq j\leq k$ and the column expansions.
Collecting the coefficients in the leading order we obtain a matrix, whose
$(\mathcal A,\mathcal B)$ part is given by 
\begin{align}
\left(
\begin{matrix}
\text{\huge0}&
\begin{matrix}
p_{a_1,b_{q+1}}&\cdots&p_{a_1,b_k}\\
\vdots&&\vdots&\\
p_{a_p,b_{q+1}}&\cdots&p_{a_p,b_k}
\end{matrix}
&
\begin{matrix}
p_{a_1}(S_1)&\cdots&p_{a_1}(S_n)\\
\vdots&&\vdots&\\
p_{a_p}(S_1)&\cdots&p_{a_p}(S_n)\\
\end{matrix}\\
\\
\begin{matrix}
\bar{p}_{a_{p+1},b_1}&\cdots&\bar{p}_{a_{p+1},b_q}\\
\vdots&&\vdots&\\
\bar{p}_{a_k,b_1}&\cdots&\bar{p}_{a_k,b_q}
\end{matrix}&
\text{\huge0}&
\begin{matrix}
\bar{p}_{a_{p+1}}(S_1)&\cdots&\bar{p}_{a_{p+1}}(S_n)\\
\vdots&&\vdots&\\
\bar{p}_{a_k}(S_1)&\cdots&\bar{p}_{a_k}(S_n)\\
\end{matrix}\\
\end{matrix}
\right)\label{AB}
\end{align}
Consider $q+1\leq j\leq k$ and set $b_j=2l-1$.
The $l$-th row in the $(\mathcal C,\mathcal D)$ part gives rise to
\begin{align}
(
\raise2pt\hbox{$
\begin{matrix}
\buildrel{\text{$1$-st}}\over0&\cdots
\buildrel{\text{$q$-th}}\over0
&\buildrel{\text{$(q+1)$-st}}\over0&\cdots
&0&\buildrel{\text{$j$-th}}\over1&0&\cdots&
&\buildrel{\text{$k$-th}}\over0
\end{matrix}$}
\
\begin{matrix}
S_1^{b_j}&\cdots&S_n^{b_j}
\end{matrix}
)\label{ROW1}
\end{align}
Similarly, for $1\leq j\leq q$, set $2n-b_j=2l-1$.
The $l$-th row in the $(\mathcal C,\mathcal D)$ part gives rise to
\begin{align}
(\raise2pt\hbox{$
\begin{matrix}
\buildrel{\text{$1$-st}}\over0&\cdots
&0&\buildrel{\text{$j$-th}}\over1&0
&\cdots&\buildrel{\text{$q$-th}}\over0
&\buildrel{\text{$(q+1)$-st}}\over0&\cdots
&\buildrel{\text{$k$-th}}\over0
\end{matrix}
$}\
\begin{matrix}
S_1^{2n-b_j}&\cdots&S_n^{2n-b_j}
\end{matrix}
)\label{ROW2}
\end{align}
If $2l-1\not\in\{b_1,\ldots,b_q,2n-b_{q+1},\ldots,2n-b_k\}$
the $l$-th row in the $(\mathcal C,\mathcal D)$ part gives rise to
\begin{align*}
(
\raise2pt\hbox{$
\begin{matrix}
\buildrel{\text{$1$-st}}\over0&\cdots&
\buildrel{\text{$q$-th}}\over0
&\buildrel{\text{$(q+1)$-st}}\over0&\cdots&\buildrel{\text{$k$-th}}\over0
\end{matrix}
$}
\
\begin{matrix}
S_1^{2l-1}&\cdots&S_n^{2l-1}
\end{matrix}
)
\end{align*}
Subtracting certain multiples of rows \eqref{ROW1} ,\eqref{ROW2} 
from the $(\mathcal A,\mathcal B)$ part \eqref{AB}, we change the latter to
\begin{align}
\left(
\begin{matrix}
\text{\huge0}
&\text{\huge0}&
\Bigl(p_{a_j}(S_k)-\sum_{l=q+1}^kp_{a_j,b_l}S_k^{b_l}
\Bigr)_{j=1,\ldots,p\atop k=1,\ldots,n}
\\\text{\huge0}
&\text{\huge0}&
\Bigl(\bar p_{a_j}(S_k)-\sum_{l=1}^q\bar p_{a_j,b_l}S_k^{2n-b_l}
\Bigr)_{j=1,\ldots,p\atop k=1,\ldots,n}
\end{matrix}
\right)\label{PARTICLE}
\end{align}
In physicist's terminology \eqref{ROW1} implies the existence 
of a hole at $S^{b_j}$, and
\eqref{ROW2} hole at $S^{2n-b_j}$. On the other hand we obtain a row from
\eqref{PARTICLE} by expanding it in the same power in $S_k$:
\begin{align*}
p_{a_j,2l-1}
(
\raise2pt\hbox{$
\begin{matrix}
\buildrel{\text{$1$-st}}\over0&\cdots&\buildrel{\text{$k$-th}}\over0
\end{matrix}
$}
\
\begin{matrix}
S_1^{2l-1}&\cdots&S_n^{2l-1}
\end{matrix}
)\quad(1\leq j\leq p;2l-1\not\in\{b_{q+1},\ldots,b_n\})
\end{align*}
or
\begin{align*}
\bar p_{a_j,2l-1}
(
\raise2pt\hbox{$
\begin{matrix}
\buildrel{\text{$1$-st}}\over0&\cdots&\buildrel{\text{$k$-th}}\over0
\end{matrix}
$}
\
\begin{matrix}
S_1^{2n-2l+1}&\cdots&S_n^{2n-2l+1}
\end{matrix}
)\quad(p+1\leq j\leq k;2l-1\not\in\{b_1,\ldots,b_q\})
\end{align*}
The former creates a particle at $S^{2l-1}$ if $2l-1>2n$, and fills the hole at $S^{2n-b_j}$
if $2l-1=2n-b_j$ for some $1\leq j\leq q$. The latter creates a particle at $S^{2n-2l+1}$
if $2l-1>2n$, and fills the hole at $S^{b_j}$ if $2n-2l+1=b_j$ for some $q+1\leq j\leq k$.
It is important that the change caused by filling a hole by a particle occurs only if both particle and hole are
near the right end, or both near the left end.
Therefore, the weight $m$ is invariant by this change.
In other words, one can define the weight $m$ of a tower
without ambiguity. 
This is the effect of the Bogolubov transform 
\eqref{def-psi1}, \eqref{def-psi2} which 
introduces the non-trivial block $\mathcal A$. 

From \eqref{Pexpand1}, \eqref{Pexpand2} it is easy to see that, 
up to a sign, the leading term in \eqref{FOURIER} is given by
\begin{align}
&\langle\Phi _\al\rangle({\textstyle\frac i\nu})^k\prod_{j=1}^{2n}B_j^{q-p}\prod_{j=1}^p\cot{\textstyle\frac\pi2(\al+\frac{a_j}\nu})
\prod_{j=p+1}^k\cot{\textstyle\frac\pi2(\al-\frac{a_j}\nu})\cdot
S^{-a_k}\wedge\cdots\wedge S^{-a_{p+1}}\label{LEADING}\\
&\quad\wedge S\wedge\cdots(S^{b_k})^\vee\cdots(S^{b_{q+1}})^\vee\cdots
(S^{2n-b_q})^\vee\cdots(S^{2n-b_1})^\vee\cdots\wedge S^{2n-1}\nn\\
&\qquad\wedge S^{2n+a_p}\wedge\cdots\wedge S^{2n+a_1}.\nn
\end{align}
Here $(S^b)^\vee$ represents a hole. 

Our goal is to identify this fermionic picture 
with the content of local fields in the sG model. 
Here we do it qualitatively, leaving  
the quantitative identification to later sections. 


Returning to the primary fields we find
\begin{align}
&\psi^*_1\cdots\psi^*_{2m-1}\bar{\chi}_{2m-1}^*\cdots\bar{\chi}^*_1
\ M_0^{(\star)}\label{al+2mxi1}\\&=\frac{\langle\Phi _\al\rangle}
{\langle\Phi _{\al+2m\frac {1-\nu}\nu}\rangle}\cdot({\textstyle \frac {i}\nu})^m
\prod_{j=1}^m\cot{\textstyle \frac \pi {2\nu}}(\al\nu+(2j-1))
\ M_m^{(\star)}\,,
\nn
\\
&\bar{\psi}^*_1\cdots\bar{\psi}^*_{2m-1}\chi_{2m-1}^*\cdots\chi^*_1
\ M_0^{(\star)}\label{al+2mxi2}\\&=\frac{\langle\Phi _\al\rangle}
{\langle\Phi _{\al-2m\frac {1-\nu}\nu}\rangle}\cdot
({\textstyle \frac {i}\nu})^{m}
\prod_{j=1}^{m}
\cot{\textstyle \frac \pi {2\nu}}(\al\nu-(2j-1))
\ M_{-m}^{(\star)}\,.
\nn
\end{align}
In general, it is now clear that
the space of descendants of $M^{(\star)}_0$ by fermions,
together with the action of the local integrals of motion,   
has the same character as that of the space of fields in CFT:
$$
\bigoplus\limits _{m=-\infty}^{\infty}\mathcal{V}_{\al+2\xin m}
\otimes \overline{\mathcal{V}}_{\al+2\xin m}\,,
$$
where $\mathcal{V}_\al$ denotes the Virasoro Verma
module with central charge $c=1-\frac{6\nu^2}{1-\nu}$ and highest weight 
$\Delta_\al=\frac{\nu^2}{4(1-\nu)}\al(\al-2)$.

We remark that often it is convenient 
to change the definition slightly and work with fermions 
given in \eqref{genferm1} wherein 
the block $\mathcal{B}$ is replaced by 
\begin{align}
\mathcal{B}\ \rightarrow\ \begin{pmatrix}
C (Z_1,S_1)&\cdots &C (Z_1,S_l)
\\
\vdots&\   &\vdots \\
C (Z_{k},S_1)&\cdots&C (Z_{k},S_l)
\end{pmatrix}\,.
\label{modify-fermion}
\end{align}
The resulting fermions have mode expansions 
\eqref{EXPANSION1}, \eqref{EXPANSION2} in odd degrees. 
Acting on the primary field, the Fourier components 
generate towers which coincide with \eqref{genferm1}
modulo $Q$-exact forms, 
but satisfying the restricted degree condition \eqref{deg-rest}.
It should be noted, however, that
we cannot replace $C_{\pm}$ 
to $C$ in the block $\mathcal{A}$ because their 
second arguments are not the integration variables.

\section{BJMS fermionic description of sG model}\label{BJMSfermions}

In the paper \cite{HGSIV} we gave a  fermionic description of CFT. 
This description has an advantage of
being compatible with the integrable perturbation.
As we explained in
\cite{OP,HGSV} the logic of Perturbed Conformal Field Theory (PCFT)
\cite{alzam,FFLZZ} implies that in a
generic situation ($\nu$ and $\al$ irrational)
there is a
one-to-one correspondence between the local fields before 
and after perturbation, i.e. 
they consist of primary exponential fields $\Phi _\al$ and
their descendants created by two copies of the
Virasoro algebra with generators
$\mathbf{l}_{k}$, $\bar{\mathbf{l}}_{k}$. 
Following \cite{HGSIV} we consider 
the case 
$$0<\al<2\,.$$

Equivalence to the fermionic 
description implies that the local fields are created from the primary
field by the local integrals of motion $\mathbf{i}_{2k-1}$,  $\bar{\mathbf{i}}_{2k-1}$
and the fermions $\betab ^*_{2k-1}$, $\gammab^*_{2k-1}$, 
$\bar{\betab} ^*_{2k-1}$, $\bar{\gammab} ^*_{2k-1}$. It is explained in
\cite{HGSIV,Boos} how to recalculate the result of acting with the fermions
in terms of the usual Virasoro descendants. Unfortunately, the techniques
of \cite{HGSIV,Boos}  allow us to do that only modulo the action of the local
integrals of motion. 
At the end of the next section we give some explanation to this point.
This was not a problem for the papers \cite{OP,HGSIV}
where the one-point functions in the sG model were calculated, because
they vanish on the descendants created by the local integrals of motion.
But in the context of the present paper this is 
an unpleasant restriction.
We hope that the technical difficulties behind this problem will be resolved
in the future. The origin of these difficulties will be explained in the next section.
So, for the moment we work with the space
$$\mathcal{V}^\mathrm{quo}_\al\otimes
\overline{\mathcal{V}}^\mathrm{quo}_\al\,,$$
where
$$
\mathcal{V}^\mathrm{quo}_\al=
\mathcal{V}_\al\ /\ \sum_{k=1}^\infty\mathbf{i}_{2k-1}\mathcal{V}_\al,
\quad \overline{\mathcal{V}}^\mathrm{quo}_\al=
\overline{\mathcal{V}}_\al\ /\ \sum_{k=1}^\infty
\bar{\mathbf{i}}_{2k-1}\overline{\mathcal{V}}_\al\,.
$$
Let us emphasise one more time that the operators 
$\mathbf{i}_{2k-1}$,  $\bar{\mathbf{i}}_{2k-1}$, $\betab ^*_{2k-1}$, $\gammab^*_{2k-1}$, 
$\bar{\betab} ^*_{2k-1}$, $\bar{\gammab} ^*_{2k-1}$ are defined  on
$\mathcal{V}_\al\otimes
\overline{\mathcal{V}}_\al$ and the restriction to 
$\mathcal{V}^\mathrm{quo}_\al\otimes
\overline{\mathcal{V}}^\mathrm{quo}_\al$ is due to purely technical reasons. 

To be precise the
space $\mathcal{V}^\mathrm{quo}_\al\otimes
\overline{\mathcal{V}}^\mathrm{quo}_\al$
allows a fermionic basis:
\begin{align}
\betab^* _{I^+}\bar{\betab} ^*_{\bar{I}^+}
\bar{\gammab }^*_{\bar{I}^-}\gammab^*_{I^-}
\Phi _{\al}(0)\,,\label{basis}
\end{align}
where $\#(I^+)=\#(I^-)$, $\#(\bar I^+)=\#(\bar I^-)$. We set generally
\begin{align*}
&\betab _I^*=\betab^*_{a_1}\cdots \betab _{a_p}^*,\quad\gammab _I^*=\gammab^*_{a_p}\cdots \gammab_{a_1}^*,
\quad
\bar\betab _I^*=\bar\betab^*_{a_1}\cdots \bar\betab _{a_p}^*,\quad \bar\gammab _I^*=\bar\gammab^*_{a_p}\cdots \bar\gammab_{a_1}^*, \\
&\mathrm{for}\ I=\{a_1,\cdots ,a_p\},\quad a_1<a_2<\cdots <a_p\,.
\end{align*}

 On the other hand the space
$\mathcal{V}^\mathrm{quo}_\al\otimes
\overline{\mathcal{V}}^\mathrm{quo}_\al$ can be realised as result
of acting on the primary field $\Phi _{\al}$ by  even generators of two
chiral Virasoro algebras $\mathbf{l}_{-2k}$, $\bar{\mathbf{l}}_{-2k}$.
The formulae relating
the fermionic basis with the Virasoro basis
can be found up to level 8 in \cite{HGSIV,Boos}. 
The fermionic basis was used in \cite{OP,HGSV} because it allows {one} to
compute the one-point functions. 

We use the same 
convention for multi-indices as in \cite{HGSIV,HGSV}:
\begin{align}
&\mathrm{if} \quad J=\{j_1,\cdots ,j_p\}\quad
\mathrm{then}\quad aJ+b=\{aj_1+b,\cdots ,aj_p+b\}\,,\nn\\
&I(m)=\{1,2,\cdots ,m\},\quad I_\mathrm{odd}(m)=2I(m)-1\,.\nn
\end{align}

The operators  $\betab ^*_{2k-1}$, $\gammab^*_{2k-1}$ and 
$\bar{\betab} ^*_{2k-1}$, $\bar{\gammab} ^*_{2k-1}$ are combined into 
the generating functions 
\begin{align}
 &\betab ^*(\la)=\sum_{k=1}^\infty\la ^{-\frac{2k-1}\nu} \betab ^*_{2k-1},\quad
 \gammab^*(\la)=\sum_{k=1}^\infty\la ^{-\frac{2k-1}\nu} \gammab^*_{2k-1}\,,\label{betaCFT}\\
 &\bar{\betab} ^*(\la)=\sum_{k=1}^\infty\la ^{\frac{2k-1}\nu}\bar{\betab} ^*_{2k-1},
 \quad\quad
\bar{\gammab} ^*(\la)=\sum_{k=1}^\infty\la ^{\frac{2k-1}\nu} \bar{\gammab} ^*_{2k-1}\,.
\nn
\end{align}
In a weak sense these series describe the asymptotics of holomorphic functions
at  $\la\to\infty$ and $\la\to 0$ for right and left 
chiralities respectively. 
When expanded at the opposite points,
the same holomorphic functions
create another interesting set of operators 
(actually a slight modification is needed \cite{HGSIV,HGSV}).
This allows us to define the operators
\begin{align}
\gammab ^*_{\mathrm{screen}}(\la)=
\sum_{k=1}^\infty\la ^{2k-\al}\gammab ^*_{\mathrm{screen}, k},\quad
\bar{\betab}_{\mathrm{screen}}^*(\la)
=\sum_{k=1}^\infty\la ^{\al-2k}\bar{\betab}^*_{\mathrm{screen},k}\,.
\label{screenCFT}
\end{align}

We define as usual 
\begin{align}\gammab ^*_{\mathrm{screen}, I(m)}=
\gammab ^*_{\mathrm{screen}, m}\cdots\gammab ^*_{\mathrm{screen}, 1}\,,\quad
\bar{\betab}^*_{\mathrm{screen},I(m)}=
\bar{\betab}^*_{\mathrm{screen},1}\cdots \bar{\betab}^*_{\mathrm{screen},m}\,.\label{mult2}\end{align}
Then the $m$-fold screened primary field is by definition
\begin{align}
\Phi _{\al}^{(m)}(0)=
{i^m}\mub^{2m}\prod\limits _{j=1}^{m}\cot {\textstyle \frac{\pi\nu}2(2j-\al)}
\times
\bar{\betab}^*_{\mathrm{screen},I(m)}\ \gammab ^*_{\mathrm{screen}, I(m)}
\Phi _{\al}(0)\,,
\label{phim}
\end{align}
where the multiplier in the right hand side is introduced for 
convenience. 

It is explained in \cite{HGSV} that
$\mathcal{V}^\mathrm{quo}_{\al+2m\frac {1-\nu}\nu}\otimes
\overline{\mathcal{V}}^\mathrm{quo}_
{\al+2m\frac {1-\nu}\nu}$  for $m > 0$ can be
constructed as
\begin{align}
&\betab^*_{I^+}\bar\betab^*_{\bar{I}^+}
\bar\gammab^*_{\bar{I}^-}\gammab^*_{I^-}
\Phi_{\alpha+2m\frac{1-\nu}{\nu}}(0)\label{IDENDESC}\\
 &\cong C_m(\al)\betab^*_{I^+ +2m}\bar\betab^*_{\bar{I}^+-2m}
\bar\gammab^*_{\bar{I}^- +2m}\gammab^*_{I^--2m}\betab^*_{\Io}\bar\gammab^*_{\Io}
\Phi^{(m)}_{\alpha}(0),\nn
\end{align}
where for negative indices we set
\begin{align}
&\gammab^*_{-a}=-t_a(\al)\betab_a,\quad \bar{\betab}^*_{-a}=
-t_a(2-\al)\bar{\gammab }_a\,,\quad t_a(\al)={\textstyle\frac i\nu}\cot
{\textstyle\frac\pi{2\nu}}(\al\nu+a)\,,
\label{gammanegative}
\end{align}
the only non-trivial anticommutation relations are 
$$[\betab_a,\betab^*_b]_+=\delta_{a,b}\,,
\quad [\bar{\gammab }_a,\bar{\gammab }^*_b]_+=\delta_{a,b}\,,$$
and
$$\betab_a\Phi^{(m)}_{\alpha}(0)=0\,,\quad \bar\gammab_a\Phi^{(m)}_{\alpha}(0)=0\,.$$

In particular,
\begin{align}
\Phi _{\al +2m\frac {1-\nu}\nu}(0)\ {\cong}\ C_m(\al)\betab^*_{\Io}
\bar{\gammab}^*_{\Io}
\Phi ^{(m)}_{\al}(0)\,,
\label{shift-primary}
\end{align}
where (see \cite{HGSV})
\begin{align}
C_m(\al)=\frac {\langle \Phi _{\al+m\xin}\rangle}
{\langle \Phi _{\al}\rangle}{(-i\nu)^m}\prod_{j=1}^m\tan {\textstyle \frac \pi {2\nu}}(\al\nu+2j-1)
\,.\label{Cm}
\end{align}
The fermionic screening {operators $\bar{\betab}^*_{{\rm screen}, 2j-1}$,
$ \gammab ^*_{{\rm screen}, 2j-1}$} play
an
important role in the CFT computations. However, in the sG case
when we consider the one-point function on the cylinder
\cite{HGSV} their contribution completely factors out. This results in the
possibility of the identification in the weak sense:
\begin{align}
\Phi _\al^{(m)}(0)=\hskip -.4cm\raisebox{9pt}{${}_
 \mathrm{w}$}\ \Phi _\al(0)\,.\label{Phi=Phi}
\end{align}
This is a general fact for the sG model. We shall comment on it later.

Considering the formulae \eqref{shift-primary}, \eqref{Cm}, \eqref{Phi=Phi} we
find an amazing similarity  to the formula \eqref{al+2mxi}. 
This similarity was one of starting points for our present research.

Recall that up to now we considered only the case $0<\al<2$. According to 
\eqref{IDENDESC} it is actually sufficient to consider the fundamental domain
\begin{align}
0<\al<2{\textstyle\frac {1-\nu}\nu}\,.\label{fundomain}
\end{align}
We excluded the point $\al =2\xin$ from the fundamental
domain for the following reason. An important role in the definition
of the
BJMS fermions is played by the function $\Delta ^{-1}_\z\psi _0(\z,\al)$
discussed in the next section and its asymptotics \eqref{asDinv}.
For $\al =2\xin$ the asymptotic
formula at $\z\to 0$ breaks down,
and a 
logarithmic term appears
from 
the $j=2$ term in
the first sum
and the $j=1$ term in
the second sum. 
So, the case $\al =2\xin$ requires a
special treatment.

\section{Six-vertex model as origin of BJMS fermionic description}\label{6vertex}

We introduce fermions for $c<1$ CFT and for the
sG model taking
the scaling limit of the
homogeneous and inhomogeneous six-vertex
model on the cylinder. This is explained in details in the
paper \cite{HGSV}, so, we shall be very brief here. But we would like
to repeat some facts from the paper \cite{HGSIII} which is the cornerstone 
for all of our recent researches. 

Let us consider the  inhomogeneous six-vertex model on the cylinder.
We count the {sites} in the space direction by indices $j$, and in the Matsubara
direction by $\mathbf{m}$. The model can be inhomogeneous in both
directions with the parameters of inhomogeneity $\z_j$ and $\tau_{\mathbf{m}}$.
The Boltzmann weights are combined into the $L$-operator $L_{j,\mathbf{m}}(\z_j/\tau_
{\mathbf{m}})$. Using the equivalent XXZ spin chain language we introduce the
operator $q^{2\al S(0)}\mathcal{O}$ with $\mathcal{O}$ being local and 
$S(k)=\frac 1 2 \sum_{j=-\infty}^{k}\sigma ^3_j$. {Then} we consider the partition function
as in the following figure:
\vskip .5cm
\hskip 0.1cm\includegraphics[height=5cm]{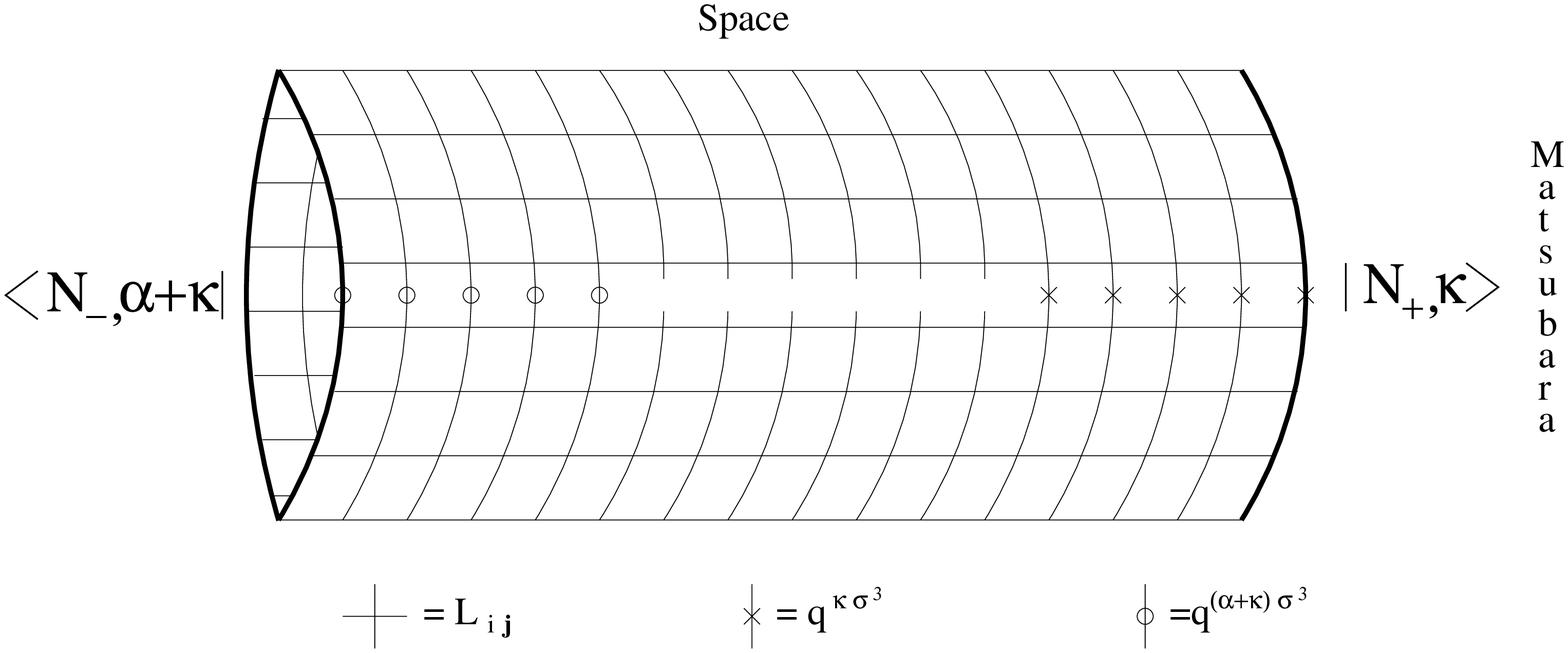}

\noindent
{\it Fig.3: Partition function of the six-vertex model on a cylinder.\\
To each site of the lattice is attached a local Boltzmann weight
$L_{j,\mathbf{m}}$. On one horizontal line 
a background field $q^{\kappa\sigma^3}$ is applied. 
On the same line, one allows in addition
a local dislocation in the middle 
accompanied by an extra field $q^{\alpha\sigma^3}$ extending to the left,  
representing the insertion of a quasi-local operator
$q^{2\al S(0)}\mathcal{O}$. 
}
\vskip .3cm
The number of sites in the Matsubara direction is denoted by
$\mathbf{n}$. The boundary conditions in the space direction
are given by the eigenvectors $\langle N_-,\kappa+\al|$ and
$|N_+,\kappa\rangle$ of the Matsubara transfer-matrices with 
twists $\kappa+\al $ and $\kappa$, 
and $N_\pm$ are the labels for different eigenvectors. In this formulation the number
of sites in the space direction is not important as far as the operator $\mathcal{O}$
fits into the picture {\it fig.3}. 
The same computation can be used in a
different situation.
Put arbitrary boundary conditions on the right end of the cylinder,
and allow it to grow to the right infinitely. Then 
the partition function will look for the 
eigenvector $|0,\kappa\rangle$ with the 
largest eigenvalue
of the Matsubara transfer-matrix (ground state).
Certainly one should not be extremely unlucky which means that the 
vector representing the boundary conditions should not be orthogonal 
{to} the ground state. 

The partition function
{\it fig.3}
is written as
$$\langle N_-,\kappa +\al|\Tr_{\rm S}
\(T_{{\rm S},\mathbf{M}}q^{2\kappa S+2\al S(0)}\mathcal{O}\)|N_+,\kappa
\rangle\,,$$
where ${\rm S}$
stands for space, $\mathbf{M}$ stands for Matsubara, $S$ is the total
spin in the space direction, $T_{{\rm S},\mathbf{M}}$ is the rectangular monodromy
matrix in the tensor product of the space and the
Matsubara Hilbert spaces. 

Let us make a digression. We discuss a connection to form factors.
Consider two quasi-local fields with opposite twists, $q^{-2\al S(0)}\mathcal{O}_1$
and $q^{2\al S(n)}\mathcal{O}_2$.
For simplicity we take $\kappa=0$ and denote corresponding ground state
by $|\mathrm{vac}\rangle$. In Section 8 we will see that in the infinite volume limit in the Matsubara direction,
the $\kappa$-dependence is dropped.
In order  to compute the correlation function
$$\langle\mathrm{vac}|{\rm Tr}_{\rm S}
\Bigl(T_{{\rm S},\mathbf{M}}q^{-2\al S(0)}\mathcal{O}_1q^{2\al S(n)}\mathcal{O}_2\Bigr)
|\mathrm{vac}\rangle$$
we have to glue two partition functions of this kind and sum over the
intermediate states as in {\it fig.4}:
\vskip .5cm
\hskip -0.1cm\includegraphics[height=3.5cm]{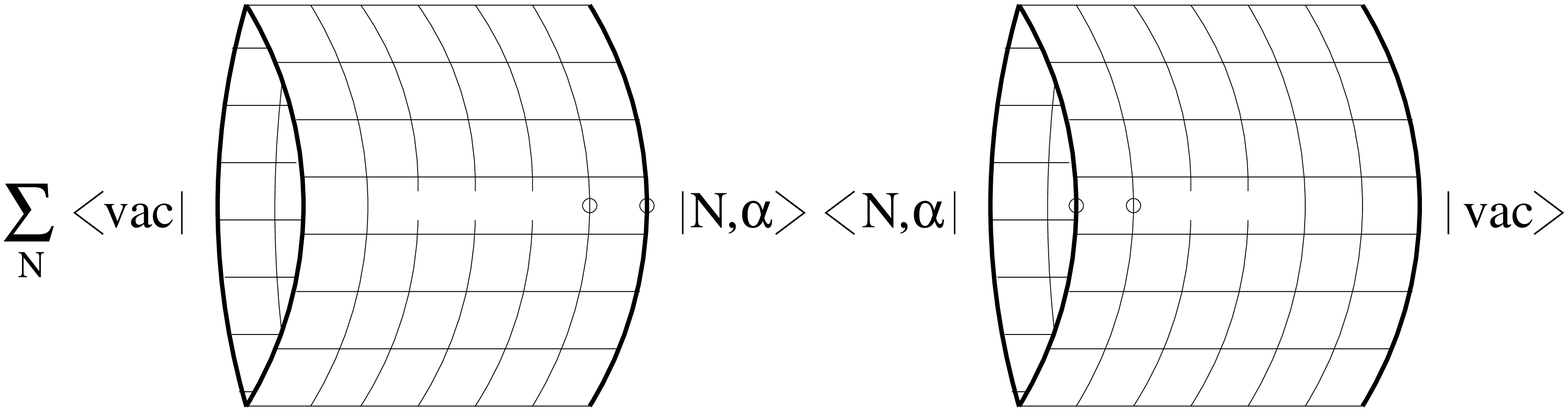}
\vskip .1cm
\noindent
{\it Fig.4: Form factor decomposition of the two-point function 
on the lattice.\\
}
\vskip .2cm
\noindent
This is nothing but the form factor decomposition in the Matsubara direction.
We will come back to this subject in Section \ref{EVOMEGA}.

In the paper \cite{HGSIII} we computed the functional
\begin{align}
Z^{\kappa}_\mathbf{n}\{q^{2\kappa S+2\al S(0)}\mathcal{O}  \}=\frac{
\langle N_-,\kappa +\al|\Tr_{S}
\(T_{S,\mathbf{M}}q^{2\kappa S+2\al S(0)}\mathcal{O}\)|N_+,\kappa
\rangle}
{
\langle N_-,\kappa +\al|\Tr_{S}
\(T_{S,\mathbf{M}}q^{2\kappa S+2\al S(0)}\)| N_+,\kappa
\rangle}\label{Z}
\,.
\end{align}
The dependence of $Z^{\kappa}_\mathbf{n}$ on the particular eigenvectors of the 
Matsubara transfer-matrices is not explicitly 
exhibited. 
Obviously \eqref{Z} is not a
form factor but rather a ratio of the
form factor of a descendant field 
to the form factor of the primary field. 
This should be remembered. 

The key objects to the computation of $Z^{\kappa}_\mathbf{n}$ were introduced in the
paper \cite{HGSII}. In this paper it was shown that the 
quasi-local operators $q^{2\al S(0)}
\mathcal{O}$ can be created from the primary field by action of one
bosonic and two fermionic operators: $\tb ^*(\z)$, $\bb ^*(\z)$, $\cb^*(\z)$. 
More precisely in the homogeneous case
the quasi-local operators $q^{2\al S(0)}\mathcal{O}$ are
created by $\tb ^*_p$, $\bb ^*_p$, $\cb^*_p$ defined through
\begin{align}
\tb ^*(\z)=\sum\limits _{p=1}^{\infty}(\z^2-1)^{p-1}\tb ^*_p\,,
\ \ \bb ^*(\z)=\sum\limits _{p=1}^{\infty}(\z^2-1)^{p-1}\bb ^*_p\,,
\ \ \cb ^*(\z)=\sum\limits _{p=1}^{\infty}(\z^2-1)^{p-1}\cb ^*_p\,.\label{tbchoh}
\end{align}
In particular $\tb^*_1/2$ is the {right} shift by one site along the lattice:
{
\begin{align}
\textstyle{\frac12}\tb^*_1\bigl(q^{2\al S(0)}\bigr)=q^{2\al S(1)}\,.\label{1site}
\end{align}
}
The completeness of this basis is proved in \cite{compl}.

The main theorem \cite{HGSIII} states that
\begin{align}
&Z^{\kappa}_\mathbf{n}\{\tb^*(\eta _1)\cdots \tb^*(\eta _m)
\bb^*(\z_1)\cdots \bb^*(\z_k)\cb ^*(\xi _k)\cdots \cb^*(\xi _1)\bigl(q^{2\al S(0)}\bigr)  \}
\label{mainIII}\\
&=\prod _{j=1}^m2\rho (\eta _j)
\cdot
\det \bigl(\omega (\z_i,\xi _j)
\bigr)_{i,j=1,\cdots k}\,,
\nn
\end{align}
where 
$$
\rho (\z)=\frac{T_\mathrm{L}(\z)}{T_\mathrm{R}(\z)}\,,$$
and $T_\mathrm{L}(\z)$
and 
$T_\mathrm{R}(\z)$
are respectively the eigenvalues 
of the left (with twist $\kappa+\al$) and the right (with twist $\kappa$) Matsubara transfer-matrices on the eigenvectors
$\langle N_-,\kappa+\al|$ and $|N_+,\kappa\rangle$.
The function $\omega (\z,\xi )$ was defined in \cite{HGSIII} through a number of
conditions as a quantum deformation of the normalised second
kind differential on a
hyper-elliptic Riemann surface. For computational purposes it is useful to consider an 
equivalent, alternative formula \cite{BG} 
which is explained below.

Consider the eigenvector  $|N_+,\kappa\rangle$.
The corresponding eigenvalue is defined by the function
$\mathfrak{a}(\z)$
which satisfies 
the DDV equation 
\begin{align}
\log \mathfrak{a}(\z)=-2\pi i\nu\kappa 
+\log \( 
\frac{d(\z)}
{a(\z)}   \)
-\int\limits _\gamma K(\z/\xi)\log \(1+ \frak{a}(\xi)\)
\frac {d\xi^2}{\xi ^2}\,,
\label{DDV}
\end{align}
where the cycle $\gamma$ goes around the zeros of the eigenvalue 
of the Baxter operator
$Q_\mathrm{R}(\z)$ (Bethe roots)
in the {\it clockwise} direction, 
as opposed to all other contours. The functions $a(\z)$ and $d(\z)$
are defined by using inhomogeneous parameters in 
the Matsubara direction.
The formulae for them can be found in \cite{HGSIII}. 
We would like to emphasise that $\frak{a}(\xi)$ depends only on the eigenvalue
of the right transfer-matrix.

For completeness let us write the Baxter equation, and
the definition of $\mathfrak{a}(\z)$
\begin{align}
&T_\mathrm{R}(\z)Q_\mathrm{R}(\z)=d(\z)Q_\mathrm{R}(\z q)+a(\z)Q_\mathrm{R}(\z q^{-1})\,, \quad
\mathfrak{a}(\z)
=\frac{d(\z)Q_\mathrm{R}(\z q)}{a(\z)Q_\mathrm{R}(\z q^{-1})}\,.\nn
\end{align}
Recall also that 
$\z ^{\kappa}Q_\mathrm{R}(\z)$ is a polynomial in $\zeta^2$.
The kernel in the integral equation is defined through
a more general one
$$
K(\z)=K(\z,0)\,,
\quad K(\z,\al)=\frac 1{2\pi i}\Delta _{\z}\psi_0 (\z,\al)\nn\,,$$
where we use the symbols
$$\Delta _\z f(\z)=f(\z q)-f(\z q^{-1})\,,\quad \psi _0(\z,\al)= \z ^{\al}\frac{1}{\z ^2-1}\,.$$
In the papers \cite{HGSIII,HGSIV} we used instead of $\psi_0(\z,\al)$ the
function 
$\psi (\z,\al)=\psi_0(\z,\al)-\frac 1 2 \z ^{\al}$. As 
it has been shown in \cite{BG} on the finite lattice, 
either of these two functions
can be used in the equations defining $\omega(\z,\xi)$ without
changing the final result. So, we can choose the one which is more appropriate
for the scaling limit. 

In {order} to define $\omega (\z,\xi)$ we have to introduce 
the operation $\delta ^-_\z$
$$\delta ^-_\z f(\z)=f(\z q)-\rho(\z)
f(\z)\,,$$
and the convolution
$$A\star B(\z,\xi)= \int\limits _\gamma A(\z,\eta)B(\eta,\xi)dm(\eta)\,,$$
with measure
\begin{align}dm(\eta )=\frac {d\eta ^2}{ \eta ^2\rho (\eta)
\(1+\frak{a}(\eta)\)}\,.\label{measure}
\end{align}
The fundamental role is played by the dressed resolvent
\begin{align}
\Rdr
- \Rdr       \star     K_{\al}
=K_{\al}\,,
\label{eqR}
\end{align}
where $K_{\al}$ stands for the integral operator with 
the kernel $K(\z/\xi,\al)$.
Introducing two more kernels
\begin{align}
\quad f_\mathrm{left} (\z,\xi)=\textstyle{\frac 1 {2\pi i}\ }\delta ^-_\z\psi_0(\z/\xi,\al),\quad 
f_\mathrm{right}(\z,\xi)=\delta ^-_\xi \psi_0 (\z/\xi,\al)\,,
\label{flr}
\end{align}
we define  $\omega(\z,\xi)$ by
\begin{align}
\textstyle{\frac 1 4}\omega (\z,\xi)=
&\(f_\mathrm{left}\star   f_\mathrm{right}+   f_\mathrm{left}\star           
\Rdr
\star f_\mathrm{right}\)(\z,\xi)-\omega _0(\z,\xi)
 \label{defomega}\end{align}
where
\begin{align}
\omega _0(\z,\xi)
=-\delta ^-_\z\delta ^-_\xi\Delta ^{-1}_\z\psi_0(\z/\xi,\al)
\,.\label{omega0}
\end{align}
The definition of the``primitive function" $\Delta^{-1}$
is ambiguous.
Using $Z=\zeta^{1/\nu}$, $H=\eta^{1/\nu}$, $X=\xi^{1/\nu}$, 
we shall define once and forever \cite{HGSIV}
\begin{align}
\Delta ^{-1}_\z\psi_0 (\z/\xi,\al)&
=\frac1{2\nu}VP\int\limits _0^{\infty}\psi_0 (\z/\eta,\al)\frac H{H+X}
\frac{d\eta ^2}{2\pi i\eta ^2}\,,
\end{align}
where the principal value is taken 
with regards to the pole {at $\eta^2=\zeta^2$.}
By this definition, we have
\begin{align}
\Delta^{-1}_\zeta\psi_0(q^{\pm1}\zeta/\xi,\al)=\pm\frac1{4\nu}\frac Z{Z-X}+
\frac1{2\nu}
\int\limits _0^{\infty}\psi_0 (\z e^{\pm i0}/\eta,\al)\frac H{H-Xe^{\pm i0}}\frac{d\eta ^2}{2\pi i\eta ^2}.
\label{ANALYTIC CONTINUATION}
\end{align}
Here the left hand side is the analytic continuation
of $\Delta ^{-1}_\z\psi_0 (\z,\al)$  in the variable $\frac1\nu\log\zeta$ from ${\rm Im}\frac1\nu\log\zeta=0$ to
${\rm Im}\frac1\nu\log(q^{\pm1}\zeta)=\pm\pi i$.

A physically
interesting situation occurs when we take the scaling limit
which implies in particular $\mathbf{n}\to\infty$. Here we consider two cases.
We shall be very brief because detailed explanations are given in \cite{HGSV}.

\vskip .2cm
\noindent
1. Chiral CFT.
\vskip .2cm
\noindent
In the homogeneous case $\z_j=1$, $\tau_{\mathbf{m}}=1$,
the scaling limit 
\begin{align}
\mathbf{n}\to\infty,\quad a\to 0,\quad \mathbf{n}a=2\pi R,\quad \z=\la (Ca)^{\nu}\,,
\label{rescCFT}\end{align}
($R$ and $\la $ are  finite)
describes the chiral conformal field theory \cite{HGSIV}. 
The important
constant $C$ is given by
$$C=\frac{\Gamma 
\(\frac {1-\nu}{2\nu}\)}{2\sqrt{\pi}\Gamma 
\(\frac {1}{2\nu}\)}\Gamma (\nu)^{\frac 1 \nu}\,.
$$
\vskip .2cm
\noindent
2. Sine-Gordon model.

\vskip .2cm
\noindent
In the inhomogeneous case 
\begin{align}\z_j=\z_0^{(-1)^j},\quad \tau _\mathbf{m}=\z _0^{(-1)^\mathbf{m}}\,,
\label{inhom}
\end{align}
the scaling limit 
\begin{align}\mathbf{n}\to\infty,\ \ a\to 0,\ \ \mathbf{n}a=2\pi R, \ \ \z_0\to \infty,
\ \ \mub=\z _0^{-1}(C a)^{-\nu}\,,\label{rescsG}\end{align}
($R$ and $\mub$ are finite) we describes  the sG model on a cylinder of radius $R$
and the coupling constant $\mub$ (see \eqref{action})..

Let us briefly recall the scaling limit of the operators $\tb^*(\z)$,
$\bb ^*(\z)$ and $\cb^*(\z)$ in the inhomogeneous case. The construction is not trivial,
so the interested reader is referred to the
papers \cite{OP,HGSV} for details.
The local operators in the inhomogeneous case
are created by coefficients of $\bb^*(\z)$, $\cb^*(\z)$ developed 
into seria around $\z^2=\z_0^2$
and $\z ^2=\z _0^{-2}$. {From} these seria,
after certain Bogolubov transformation, we obtain the
operators $\bb^{+*}(\z)$, $\cb^{+*}(\z)$, $\bb^{-*}(\z)$, $\cb^{-*}(\z)$.
Their expectation values are  the determinants \eqref{mainIII}
wherein the ``two-point correlators" given by
\begin{align}
&Z_\mathbf{n}\{\bb ^{+*}(\z)\cb ^{+*}(\xi)
{\bigl(q^{2\al S(0)}\bigr)}\}=\omega 
(\z,\xi)\,,\label{latpar}\\
&Z_\mathbf{n}\{\bb ^{+*}(\z)\cb ^{-*}(\xi)
{\bigl(q^{2\al S(0)}\bigr)}\}=\omega 
(\z,\xi)
+\omega _0(\z,\xi)
\,,\nn\\
&Z_\mathbf{n}\{\bb ^{-*}(\z)\cb ^{+*}(\xi)
{\bigl(q^{2\al S(0)}\bigr)}\}=\omega 
(\z,\xi)
+\omega _0(\z,\xi)
\,,\nn\\
&Z_\mathbf{n}\{\bb ^{-*}(\z)\cb ^{-*}(\xi)
{\bigl(q^{2\al S(0)}\bigr)}\}=\omega 
(\z,\xi)
\,.\nn
\end{align}
The scaling limit is dictated by the behaviour of the functions
$\rho $, $\omega$ and $\omega _0(\z,\xi)$.
It can be shown
(see Appendix A of \cite{HGSIV} for very similar conclusions) that
in a very general setting 
\begin{align}
&\lim_\mathrm{scaling}\log \rho (\z)\ \ \ \ \simeq \hskip -1cm\raisebox{-7pt}{${}_
 {\log\z\to \epsilon\infty}$}
\ \ \sum_{j=1}^\infty \rho _{\epsilon(2j-1)}\z ^{-\epsilon \frac{2j-1}\nu}\,,\label{asrhoomega}\\
&\lim_\mathrm{scaling}\frac 1 {\sqrt{\rho (\z)}\sqrt{\rho (\xi)}}
\ \omega (\z,\xi)
\ \ \ \  \simeq \hskip -1cm\raisebox{-14pt}{$
 {{\log\z\to \epsilon\infty}\atop{\ \log\xi\to \epsilon'\infty}}$}
\ \sum\limits_{j,k=1}^{\infty}\omega _{\epsilon (2j-1),\epsilon '(2k-1)}
\z ^{-\epsilon \frac{2j-1}\nu}\xi ^{-\epsilon' \frac{2k-1}\nu}\,,\nn
\end{align}
where $\epsilon,\epsilon '=\pm$. It remains to consider $\omega _0(\z,\xi)$.
Recall the definition \eqref{omega0}. Since we know
the asymptotics of $\rho(\z)$ the only non-trivial thing is the
asymptotics of $\Delta ^{-1}_\z\psi _0(\z,\al)$. 
For studying the asymptotics
it is convenient to rewrite this function in the form of 
a Mellin transform:
\begin{align}
&\Delta ^{-1}_\z\psi _0(\z,\al)
={\frac i4}\int\limits _{-\infty}^{\infty}\z^{2ik}\
\coth\pi(k+{\textstyle\frac{i\al} 2})\frac 1 {\sinh 2\pi\nu(k-i0)}dk\label{asDinv}
\\ &\simeq \hskip -.6cm\raisebox{-7pt}{${}_ {\z\to\infty}$}
-{\textstyle\frac i{4\nu}}\sum_{j=0}^{\infty}(-1)^j\z ^{-\frac j {\nu}}\cot {\textstyle\frac \pi{2\nu}}
(\al\nu+j) -{\textstyle\frac i2}\sum_{j=1}^{\infty}\z ^{\al-2j}
\frac 1 {\sin \pi\nu(\al-2j)}\nn\,,\\
&\simeq \hskip -.6cm\raisebox{-7pt}{${}_ {\z\to0}$}
{\textstyle\frac i{4\nu}}\sum_{j=1}^{\infty}(-1)^j\z ^{\frac j {\nu}}\cot {\textstyle\frac \pi{2\nu}}
(\al\nu-j) +{\textstyle\frac i2}\sum_{j=0}^{\infty}\z ^{\al+2j}
\frac 1 {\sin \pi\nu(\al+2j)}.\nn
\end{align}

Motivated by the
above formulae we have conjectured \cite{OP,HGSV} that the following scaling limit 
exists for the operators $\bb ^{\pm*}(\z)$, $\cb ^{\pm*}(\z)$:
\begin{align*}
&\half\bb ^{+*}(\z)\ \ \longrightarrow
\hskip -1cm {}_{{\ }_{\mathrm{scaling}}}
\ \ \betab^{+*}(\z)\ \ \simeq \hskip -.6cm\raisebox{-7pt}{${}_
 {\z\to\infty}$}
\ \ \betab^*(\mub\z)+\bigl(1+O(\{\mathbf{i}_*\})\bigr)\bar{\betab}^*_\mathrm{screen}
(\z/\mub)
\,,
\\&
\half\cb ^{+*}(\z)   \ \ \longrightarrow
\hskip -1cm {}_{{\ }_{\mathrm{scaling}}}
\ \ \gammab^{+*}(\z)\ \ \simeq \hskip -.6cm\raisebox{-7pt}{${}_
 {\z\to\infty}$}
\ \ 
\gammab^*(\mub\z)+\bigl(1+O(\{\mathbf{i}_*\})\bigr)
\bar{\gammab}^*_\mathrm{screen}
(\z/\mub)
\,,\\
&\half\bb ^{-*}(\z)\ \ \longrightarrow
\hskip -1cm {}_{{\ }_{\mathrm{scaling}}}
\ \ \betab^{-*}(\z)
\ \ \simeq \hskip -.6cm\raisebox{-7pt}{${}_
 {\z\to 0}$}
\ \ 
\bar{\betab}^*(\z/\mub)+\bigl(1+O(\{\bar{\mathbf{i}}_*\})\bigr)
{\betab}^*_\mathrm{screen}
(\mub \z)
\,,\\
&\half\cb ^{-*}(\z)\ \ \longrightarrow
\hskip -1cm {}_{{\ }_{\mathrm{scaling}}}
\ \ \gammab^{-*}(\z)
\ \ \simeq \hskip -.6cm\raisebox{-7pt}{${}_
 {\z\to 0}$}
\ \ 
\bar{\gammab}^*(\z/\mub)+
\bigl(1+O(\{\bar{\mathbf{i}}_*\})\bigr)\gammab^*_\mathrm{screen}
(\mub \z)
\,,
\end{align*}
where $O(\{{\mathbf{i}}_*\})$, $O(\{\bar{\mathbf{i}}_*\})$ stand for descendants created
by the local integrals of motion. We have to apologise for forgetting
to write these terms in \cite{HGSV}. There are two explanations for that. First, in 
\cite{HGSV} we considered the one-point functions for which the descendants by
the local integrals of motion are irrelevant. Second, most important, these terms are
absent in the asymptotics for the chiral CFT which serves to normalise the fermionic
operators. Their appearance is an artefact of mixing two chiralities, and we never use them.

The operators in the right hand side are the
unique \cite{OP,HGSV} operators
in the
sG model which provide deformation of the CFT operators
\eqref{betaCFT}, \eqref{screenCFT}. The prescription for powers of the spectral
parameter are the same: they are dictated by the scaling dimensions. Looking at
the asymptotics \eqref{asrhoomega}, \eqref{asDinv} one concludes that the pairings
of the screening operators come only  {from} $\omega_0(\z,\xi)$. Moreover,
these pairings are diagonal:  $\betab^*_{\mathrm{screen},j} $ couples only with
$\bar{\gammab}^*_{\mathrm{screen},j} $ and $\gammab^*_{\mathrm{screen},j} $ couples only with
$\bar{\betab}^*_{\mathrm{screen},j} $. 
The corresponding
pairings are easy
to compute, and
as a result we come to the conclusion that the screening
operators can be ignored provided we postulate the formula \eqref{Phi=Phi}.

An important generalisation of the above construction 
was proposed in \cite{HGSIV,HGSV}.
We have three parameters: 
$\al$, the twist $\kappa +\al$ for the left 
Matsubara transfer-matrix, and 
$\kappa$ for the right Matsubara transfer-matrix.
By introducing the screening operators in the 
lattice construction it was shown that
in the scaling limit we can achieve the emancipation of $\kappa+\al $, namely,
it can be replaced by an arbitrary parameter $\kappa '$. 
The basic formula \eqref{mainIII}
and the definitions of $\rho(\z)$ and $\omega(\z,\xi)$
remain valid in this case.
A particularly
nice situation  occurs  for $\kappa '=\kappa$ which implies $\rho(\z)=1$.
This case is physically relevant because it describes 
the one-point functions on the
cylinder for the sG model \cite{HGSV}. 
But also this case is simpler from the
technical point of view, and
it allowed the quantitative investigation 
of the equation for $\omega(\z,\xi)$ in the CFT case
\cite{HGSIV,Boos}. 
The price to pay for this simplification is that we can
identify the action of descendants created by 
fermions with those created by
the Virasoro generators only modulo 
the action of the local integrals of motion.
It would be important to remove this 
technical obstacle. In Section \ref{EVOMEGA}
we shall see the first example of an exact 
solution for $\omega(\z,\xi)$ at $\rho(\z)\ne1$.

\section{Infinite volume in Matsubara direction.}\label{infinite}

Consider the limit $R\to\infty$ which corresponds
to the sG model on the plane. 
In this case the scaling limit and the limit of the infinite volume 
for the lattice model commute. 
In other words the correct result for the functions $\rho$ and $\omega$ 
will be achieved if we consider the limit $\mathbf{n}\to \infty$
directly in the six-vertex case without the rescaling \eqref{rescCFT} or
\eqref{rescsG}.
This is so well known that we have a problem with  
making proper references. However,
there is one less known point which we would like to underline in this section. 

The results of this section is used for both left and right transfer-matrices, so, we
shall denote the twist by $\theta$, assuming that it is real.
In the limit $\mathbf{n}\to \infty$, the
zeros of $Q(\z) $ densely fill $\mathbb{R}_+$
in the plane of $\z ^2$.
Obviously the solutions to the
equation 
\begin{align} 
\mathfrak{a}(\z)+1=0\,,\label{a+1=0}
\end{align}
are the zeros either
of $Q(\z)$ or of $T(\z) $. For the ground
state all the real positive solutions to the equation \eqref{a+1=0} 
are the zeros
of $Q(\z) $. For the excited state a finite number of real positive solutions
are the
zeros of $T(\z)$. 
They are conventionally called holes, and correspond
to solitons in the sG language. 
In the sector of zero spin the number of holes is even (this is our $2n$).
It is necessary
to introduce the same number  of complex zeros of $Q(\z)$ in order
to compensate the spin. 
The positions of these complex zeros are defined by 
the positions of the 
holes through the Higher Level Bethe Ansatz (HLBA) equations
\cite{DestriLow}. The HLBA equations have $\binom{2n}{n}$
solutions which count different isotopic structures for solitons-antisolitons.
One more property of these complex solutions is that they do not
contribute to the local integrals of motion taking care only of
the isotopic structure. We shall discuss this in more details in Section \ref{equivalence}.
There may be an
additional number of complex roots
organised into strings. These strings describe breathers in the sG language, and
we shall not consider them. 

Changing the cycle, the DDV equation \eqref{DDV} reads as follows
\begin{align}
\log \frak{a}(\z)=
\log \( 
\frac{d(\z)}
{a(\z)}   \)-2\pi i\nu\theta -\sum _{h}\Phi (\z/\xi_h)+\sum_{c}\Phi (\z/\xi _c)\label{DDV1}\\
-\int\limits _{\gamma_0} K(\z/\xi)\log \(1+ \frak{a}(\xi)\)
\frac {d\xi^2}{\xi ^2}\,,
\nn
\end{align}
where 
$$\Phi(\z)=\log\(\frac{1-q^2\z^2}{1-q^{-2}\z^2}\)\,,$$
and
$\gamma_0$ goes clockwise around $\mathbb{R}_+$.
The first term in the right hand side is of order $\mathbf{n}$. This implies
that $\log \frak{a}(\z)$ is of order $\mathbf{n}$. 
Standard analysis shows that
\begin{align}&\mathrm{Re}(\log \frak{a}(\z))>0,\quad \mathrm{Im}\ \z^2<0,\label{beh}\\
&\mathrm{Re}(\log \frak{a}(\z))<0,\quad 
\mathrm{Im}\ \z^2>0\,.\nn\end{align}
Hence for large $\mathbf{n}$ with 
exponential precision in $\mathbf{n}$,
only the lower part of the contour $\gamma _0$ contributes and
we come to the linear equation:
\begin{align}
\log \frak{a}(\z)&-\int\limits _{0}^\infty K(\z/\xi)\log \frak{a}(\xi)
\frac {d\xi^2}{\xi ^2}\label{DDV2}
\\&=\log \( 
\frac{d(\z)}
{a(\z)}   \)-2\pi i\nu\theta -\sum _{h}\Phi (\z/\xi_h)+\sum_{c}\Phi (\z/\xi _c)
\,.
\nn
\end{align}
We split $\log \frak{a}(\z)$ according to four terms in the right hand side:
$$\log \frak{a}(\z)=\mathbf{n}F_\mathrm{vac}(\z)+F_\theta(\z)+
F_h(\z)+F_c(\z)\,.$$
The first term is model dependent which means that it depends on the
inhomogeneous parameters. But it does not contribute to physically relevant quantities.
For example the $S$-matrices for the
XXZ model and for the sG model coincide.
The contributions $F_h(\z)$ and $F_c(\z)$ are 
well known, so
we shall not
write the corresponding formulae. 
For $F_\theta(\z)$ one finds immediately a
$\z$-independent answer:
$$F_\theta=-\pi i\theta{\textstyle \frac{\nu}{1-\nu}}\,.$$

Let us consider the eigenvalue $T(\z)$. Due to \eqref{beh} 
with exponential precision in $\mathbf{n}$,
for $0<\arg(\z^2)<\pi$ we have
\begin{align}
&\log T(\z)=\log(a(\z))+\pi i\nu \theta-\frac 1 {2\pi i}\int
\limits_{\gamma}\Psi (\z/\xi)
d\log (1+\mathfrak{a}(\xi))\nn\\
&=\log(a(\z))+\pi i\nu \theta-\sum_h\Psi(\z/\xi_h)+\sum _c\Psi(\z/\xi _c)\nn\\
&-\frac 1 {2\pi i}\int_{0}^{\infty}\frac {d}{d\xi^2}\Psi(\z/\xi)\ \log (\mathfrak{a}(\xi))d\xi ^2
+O(e^{-c\mathbf{n}})\,,\nn
\end{align}
where
$$\Psi (\z)=\log\(\frac{\z^2q^{-2}-1}{\z^2-1}\)\,.$$
It is easy to see that
$$\pi i\nu\theta-F_\theta\cdot \frac 1 {2\pi i}\int_{0}^{\infty}\frac {d}{d\xi^2}\Psi(\z/\xi)d\xi ^2=0\,,$$
so we come to an
important conclusion: in the limit $\mathbf{n}\to \infty$ with
exponential precision in $\mathbf{n}$,
$\log T(\z)$ is independent
of 
the twist $\theta$.
Finally a standard computation yields
\begin{align}
T(\z)=T_\mathrm{vac}(\z)\prod_h\frac{\z^{\frac 1 \nu}-\xi_h^{\frac 1 \nu}}
{\z^{\frac 1 \nu}+\xi_h^{\frac 1 \nu}}\,,\label{Tfinal}
\end{align}
where $T_\mathrm{vac}(\z)$ is independent of $\theta$.
The complex zeros $\xi _c$ do not contribute. Returning to
the picture {\it fig.4}
we see that the dependence on $\al$ in the
intermediate states is dropped in the limit $\mathbf{n}\to\infty$.
The $\al$-dependence remains in the quasi-local operators
$q^{-2\al S(0)}\mathcal{O}_1$ and $q^{2\al S(n)}\mathcal{O}_2$.

On the other hand it is clear that in this limit  {\it fig.4}  represents just the form factor decomposition on the plane.
Further, in the scaling limit the 
rotational symmetry occurs and we obtain the usual form factor decomposition for the sG model.

\section{Evaluation of $\omega(\z,\xi)$ in {the} presence of solitons}\label{EVOMEGA}

We want to evaluate the function $\omega (\z,\xi)$ which corresponds
to the right part of  {\it fig.4}. In that case the $T_\mathrm{L}(\z)$ is given by \eqref{Tfinal}
and $T_\mathrm{R}(\z)=T_\mathrm{vac}(\z)$. Hence
\begin{align}
\rho(\z)=\prod_{j=1}^{2n}\frac{\z^{\frac 1 \nu}-\xi_j^{\frac 1 \nu}}
{\z^{\frac 1 \nu}+\xi_j^{\frac 1 \nu}}\,,\label{rhoinf}
\end{align}
where the number of holes is $2n$. 
We shall use other variables
$${\zeta^{\frac 1 \nu}=Z},\quad\quad \xi_j^{\frac 1 \nu}=B_j=e^{\beta _j}\,.$$
So, in our usual notation
$$\rho(\zeta)=\frac{P(Z)}{P(-Z)}\,.$$
Later we shall use also $\eta^{\frac 1 \nu}=H$, $\xi^{\frac 1 \nu}=X$.

Usually the function $\omega (\z,\xi)$ is found in two steps. First, we solve the
equation \eqref{eqR} for $R_\mathrm{dress}$ and then substitute the
result into \eqref{defomega} to find $\omega (\z,\xi)$. However, this procedure
relies heavily on the assumption of simplicity of the spectrum of $T_\mathrm{R}(\z)$.
In the case $\mathbf{n}\to \infty$ this assumption is not true, the eigenvalue
\eqref{Tfinal} corresponds {to $\binom{2n}{n}$ vectors in the sector of total spin $0$}.
That is why the {integral} operator in \eqref{eqR} is degenerate and 
$R_\mathrm{dress}$ is not well-defined. To avoid this problem we
proceed in the same way as \cite{BG} introducing the function
$G (\z,\xi)$ which satisfies the equation
\begin{align}
G(\z,\xi)=\delta ^-_\xi{\psi_0}(\z/\xi,\al)+\frac 1 {2\pi i}\int\limits_{\mathbb{R}_+e^{+i0}}
\({\psi_0}(q\z/\eta,\al)-{\psi_0}(q^{-1}\z/\eta,\al)\)G(\eta,\xi)\frac{d\eta ^2}{\eta^2\rho(\eta)}\,.
\label{eqG}
\end{align}
The shift $e^{+i0}$ in \eqref{eqG} refers to the poles of $1/\rho (\eta)$.
The term  $\delta ^-_\xi{\psi_0}(\z/\xi,\al)$ as function of $\z^2$
has poles at $\z^2=\xi^2$ and $\z^2=\xi ^2q^{2}$. Let us choose such $\xi$
that these poles do not lie on $\mathbb{R}_+$. Then the singular integral equation \eqref{eqG}
is well-defined being supplemented with the requirement that
$G(\z,\xi)$ is regular for $\z\in\mathbb{R}_+$. 

Once $G(\z,\xi)$ is found the function $\omega(\z,\xi)$ is defined by
\begin{align}
&\omega(\z,\xi)=
{\delta ^-_\z\delta ^-_\xi
\Delta ^{-1}_\z{\psi_0}(\z/\xi,\al)+
\frac 1 {2\pi i}\int\limits _{\mathbb{R}_+e^{+i0}}\delta ^-_\z{\psi_0}(\z/\eta,\al)
G(\eta,\xi)\frac {d\eta ^2}{\eta^2\rho(\eta)}}\,.\label{omega}
\end{align}
Writing down these equations we used \eqref{beh}.

The form factors were described in terms of pairings $(\ell^{(n)},L^{(n)})_\al$.
Now we want to solve similarly the equation \eqref{eqG}. We formulate the
result as Proposition, {giving a necessary explanation
for the precise definition of $G(\z,\xi)$ in the proof.}
\begin{prop}
Recall that
$$
M^{(n)}_0(S_1,\cdots,S_n)=\langle\Phi _\al\rangle\prod_{p=1}^nS_p
\prod_{p>r}(S_p^2-S_r^2).
$$
Consider the rational function
\begin{align}
R^{(n)}_{Z,X}(S_1,\cdots,S_n)=-\frac 1 {\nu}\frac {ZX}{Z^2-X^2}\frac {P(Z)}{P(-X)}
\prod_{p=1}^n\frac{X^2-S_p^2}{Z^2-S_p^2}
{
M^{(n)}_0(S_1,\ldots,S_n).
}
\label{RS}
\end{align}
Then for any {$\ell^{(n)}$} the function
\begin{align}
G(\z,\xi;\ell^{(n)})=\frac {(\ell^{(n)},R_{Z,X}^{(n)})_\al} {(\ell^{(n)},{M_0^{(n)}})_\al}\label{G}
\end{align}
solves the equation \eqref{eqG}. 
\end{prop}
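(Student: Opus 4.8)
The plan is to verify the integral equation \eqref{eqG} by direct substitution, after first pinning down the meaning of the right hand side of \eqref{G}. The factor $\prod_{p=1}^n\frac{X^2-S_p^2}{Z^2-S_p^2}$ in $R^{(n)}_{Z,X}$ is a rational function of the $S_p$, not a Laurent polynomial, so I would expand it as a one-sided Laurent series in the $S_p$ — using $\frac1{Z^2-S_p^2}=\sum_{m\ge0}S_p^{2m}Z^{-2m-2}$ when $\log\z$ is near $+\infty$, and the opposite expansion near $\z=0$ — which turns $R^{(n)}_{Z,X}$ into a (termwise convergent) series of antisymmetric Laurent polynomials in $S_1,\dots,S_n$ whose coefficients are explicit functions of $Z$ and $X$. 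Pairing this series termwise against $\ell^{(n)}$ by bilinearity defines $(\ell^{(n)},R^{(n)}_{Z,X})_\al$, first as a function holomorphic in $\z$ near $\infty$ (resp. near $0$), and $G(\z,\xi;\ell^{(n)})$ is then the analytic continuation of the resulting quotient; since $(\ell^{(n)},M^{(n)}_0)_\al\ne0$ for generic $\al$, the quotient is legitimate. This is the ``necessary explanation'' promised in the statement.

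With the definition in place I would substitute the ansatz into \eqref{eqG}. Because $(\ell^{(n)},\cdot)_\al$ is linear, and because the $\eta$-integrand is regular on $\mathbb{R}_+$ once the factor $1/\rho(\eta)=P(-H)/P(H)$ has been cancelled against the $P(H)$ carried by $R^{(n)}_{H,X}$ — so that the $e^{+i0}$ prescription plays no role for this particular solution — one may interchange the pairing with $\int d\eta^2$. Multiplying through by $(\ell^{(n)},M^{(n)}_0)_\al\ne0$, which is the same factor on both sides since $M^{(n)}_0$ does not depend on $Z,X,H$, the claim reduces to
\[
(\ell^{(n)},R^{(n)}_{Z,X})_\al=\delta^-_\xi\psi_0(\z/\xi,\al)\,(\ell^{(n)},M^{(n)}_0)_\al+\frac1{2\pi i}\int\limits_{\mathbb{R}_+e^{+i0}}\Delta_\z\psi_0(\z/\eta,\al)\,(\ell^{(n)},R^{(n)}_{H,X})_\al\,\frac{d\eta^2}{\eta^2\rho(\eta)}\,.
\]

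To establish this I would evaluate the $\eta$-integral by deforming the contour $\mathbb{R}_+e^{+i0}$ off the positive axis and collecting residues, of which there are three types. The poles of the kernel $\Delta_\z\psi_0(\z/\eta,\al)$, at $\eta=\z q^{\pm1}$, reproduce $(\ell^{(n)},R^{(n)}_{Z,X})_\al$ itself; this is the mechanism behind the prefactor $-\frac1\nu\frac{ZX}{Z^2-X^2}\frac{P(Z)}{P(-X)}$ in \eqref{RS}, the factor $P(Z)$ arising as $1/\rho$ evaluated at the shifted point after cancellation of $P(-Z)$. The pole contributed by the factor $\frac{HX}{H^2-X^2}$ of $R^{(n)}_{H,X}$, at $\eta^2=\xi^2$, combined with the explicit branch values of $\Delta_\z\psi_0$ at $\eta=\xi$ and $\eta=\xi q^{-1}$ read off from \eqref{ANALYTIC CONTINUATION}, yields exactly the source term $\delta^-_\xi\psi_0(\z/\xi,\al)\,(\ell^{(n)},M^{(n)}_0)_\al$. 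The remaining residues — those produced by the other images $\eta=\xi q^{k}$ and by expanding the factor $\prod_p\frac{X^2-S_p^2}{H^2-S_p^2}$ — assemble, with the help of the functional equations \eqref{2piiden}, \eqref{xiiden} for $\chi$ and the definition \eqref{Qexact} of $D_A$, into pairings with $Q$-exact forms, which vanish by Proposition \ref{propQexact}.

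The step I expect to be the main obstacle is the analytic-continuation bookkeeping of the previous paragraph: since $\z\mapsto Z=\z^{1/\nu}$ is only logarithmically single valued, a single pole at $H^2=X^2$ (or at $H=\z q^{\pm1}$) pulls back to the whole string $\z=\xi q^{k}$, $k\in\mathbb{Z}$, and one must track exactly which of these are crossed by the continuation path defining $G$, so that only $k=0$ and $k=1$ survive and rebuild the two terms of $\delta^-_\xi\psi_0$, the others falling into the $Q$-exact remainder. The companion technical point is checking that those leftover residues really have the form \eqref{Qexact} — a direct but slightly delicate computation with the kernel $\Delta_\z\psi_0$. Everything else is contour manipulation together with bilinearity and the exactness and residue properties of the pairing already established in Section \ref{integrals}.
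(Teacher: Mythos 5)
Your overall strategy is the paper's: substitute the ansatz, pull the pairing out of the $\eta$-integral by bilinearity, evaluate that integral by contour deformation, and sort the residues into a self-reproducing term at $H=Z$, a source term at $H^2=X^2$, and a remainder to be killed by the exactness machinery of Section \ref{integrals}. Two of your steps, however, hide genuine gaps. The first is the definition of $(\ell^{(n)},R^{(n)}_{Z,X})_\al$. Expanding $\prod_p(Z^2-S_p^2)^{-1}$ as a geometric series and pairing termwise is not obviously meaningful: the $\sigma$-contours are unbounded, so the series does not converge along them, and after termwise pairing each power $S_p^{2m}$ shifts $\al\mapsto\al+2m\xin$ in \eqref{MAINDEF}, producing a series of $I$'s that runs arbitrarily close to the pole lattice of $I_\al$; its convergence is not addressed. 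The paper instead keeps the rational function and defines the pairing by a contour prescription around the new poles at $S_p^2=Z^2$ ({\it fig.\ 6}). The essential point you miss is that this pairing is then genuinely ambiguous: the split of $\ell$ into $m(\mathfrak{s})$ and $n(\mathfrak{s})$ in \eqref{arb} is non-unique, and for a rational $L$ the contour moves that made the polynomial pairing well defined now pick up residues at $S^2=Z^2$. The proof must therefore establish \eqref{eqG} for \emph{every} admissible choice --- which is also what later shows that the difference \eqref{diff} of two solutions solves the homogeneous equation. Observing that the denominator $(\ell^{(n)},M_0^{(n)})_\al$ is nonzero addresses a different, and easier, issue.

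The second gap is the remainder term. The deformation is confined to the strip between $\mathbb{R}_+q^{-2}$ and $\mathbb{R}_+q^{2}$, so no images $\eta=\xi q^{k}$ with $|k|\ge 2$ ever enter (that worry is spurious); the leftover residues sit at $H=\pm S_p$ and produce, for each $p$, the combination $\psi_0(\z S_p^{-\nu},\al)P(S_p)-A\,\psi_0(\z (S_pQ)^{-\nu},\al)P(-S_p)$. This is \emph{not} a $Q$-exact form in the sense of \eqref{Qexact}, because $\psi_0(\z S_p^{-\nu},\al)$ is not $2\pi i$-periodic in $\sigma_p$, so Proposition \ref{propQexact} cannot be invoked as a black box. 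One has to rerun its proof and verify that the explicit extra phase $e^{-2\pi i\nu}$ carried by the integrand on the $\Gamma$-part of the contour, as opposed to the $\mathbb{R}$-part, exactly restores the cancellation of the residues at the surviving poles (the ``fat'' points of {\it fig.\ 2}). You correctly flag this as the delicate step, but it is the entire content of the proof, and asserting that the remainder ``assembles into pairings with $Q$-exact forms which vanish by Proposition \ref{propQexact}'' is not an accurate description of what has to be done.
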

\begin{proof}

First of all we have to explain how the pairing with the rational
function $R_{Z,X}(S)^{(n)}$ is understood. 
For polynomials {$\ell(\mathfrak{s})$, $L(S)$} we defined the pairing as
\begin{align}
{(\ell,L)_\al}=
\int\limits _{\mathbb{R}-i0}
\chi (\sigma 
)e^{\frac {\al\nu}{1-\nu}\sigma}
\frac{m(\mathfrak{s})}{p(\mathfrak{s}\mathfrak{q}^{-2})}L(S)d\sigma+
\int\limits _{\Gamma}
\chi (\sigma )e^{\frac {\al\nu}{1-\nu}\sigma}n(\mathfrak{s}){L(S)}d\sigma\,,\nn
\end{align}
where
\begin{align}
p(\mathfrak{s}\mathfrak{q}^{-2})\ell(\mathfrak{s})=
m(\mathfrak{s})+a^{-2}p(\mathfrak{s})
n(\mathfrak{s}\mathfrak{q}^{-4})-
p(\mathfrak{s}\mathfrak{q}^{-2})n
(\mathfrak{s})\,,\label{arb}
\end{align}
and $m(\mathfrak{s})$, $n(\mathfrak{s})$
are chosen from  the requirement of convergence. Now for every $S_p$
the rational function $R_{Z,X}^{(n)}$ has
poles at $S_p^2=Z^2$. Admitting the existence of these poles
we define the pairing $(\ell^{(n)},R_{Z,X}^{(n)})_\al$ by
drawing the contours of integration as follows.
\vskip .5cm
\hskip 2cm\includegraphics[height=8cm]{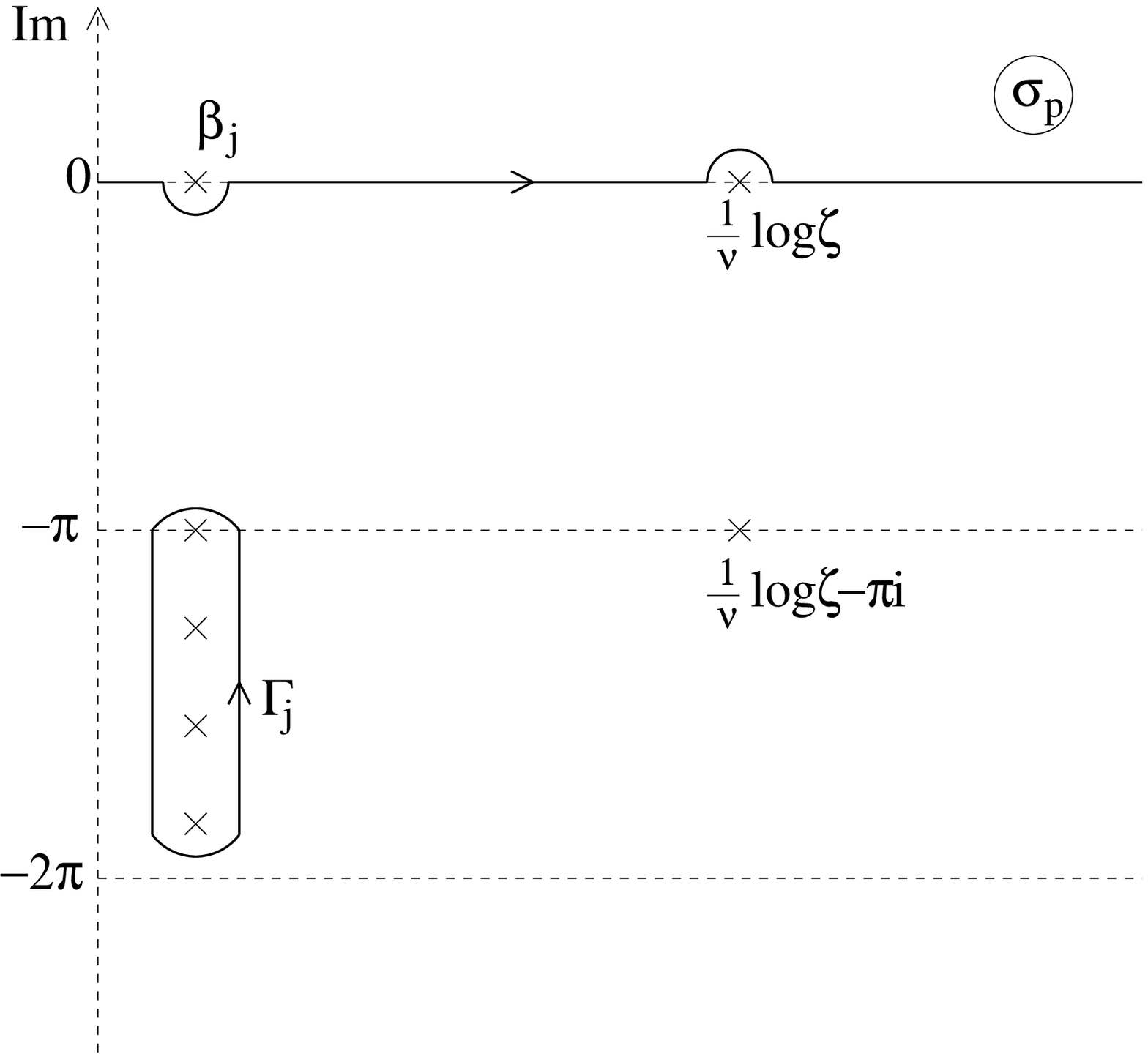}

\noindent
{\it Fig.6: Contour of integration for the pairing. \\
The presence of the poles at $S^2=Z^2$ makes the pairing dependent on
the choice of the functions $m(\mathfrak{s})$,  $n(\mathfrak{s})$. 
}
\vskip .2cm
We immediately see that there is a trouble with this definition. 
The point is that the  polynomials $m(\mathfrak{s})$ and $n(\mathfrak{s})$ are defined
from  \eqref{arb} not uniquely. In spite of this arbitrariness the
pairing was defined uniquely for polynomials {$L(S)$ }because we could {consider} the
difference between two solutions to \eqref{arb} and cancel the integrals
moving the contours. For {rational functions} with poles at $S^2=Z^2$
this is not true because moving the contours we pick up contributions
from the poles 
({\it fig. 6}). 
Thus the {pairing depends}
on the choice of $m(\mathfrak{s})$ and $n(\mathfrak{s})$.
Still we shall show that the equation \eqref{eqG} is satisfied for
any choice of $m(\mathfrak{s})$ and $n(\mathfrak{s})$ and, hence, the
{difference solves} the homogeneous equation. 

Before doing the computation of the integral in \eqref{eqG}, we give some remarks.
First, it is very helpful for understanding of what is going on {to} check the last statement in the previous paragraph
directly for $n=1$ when we have only one integral over $\sigma$.
Second, we shall see that the arbitrariness in question does not concern the 
function $\omega (\z,\xi;\ell^{(n)})$ which is the goal of our computation. The function
$G(\z,\xi;\ell^{(n)})$ is an auxiliary object. It may depend on the choice of regularisation of the integrals, but
the main object $\omega(\z,\xi;\ell^{(n)})$ does not. This will be proved at the end of the section. The last remark is regularity of $G(\z,\xi;\ell^{(n)})$ at 
$\z^2=\xi_j^2$. It clearly follows
from 
{\it fig. 6}
that the integral has singularities
at ${\zeta}=\xi _j$. They are simple 
poles, and $P(Z)$ cancels them, so, altogether
$G(\z,\xi;\ell^{(n)})$ is regular at these points.
 
We compute
\begin{align}
&\frac 1 {2\pi i}\int\limits_{\mathbb{R}_+}
\({\psi_0}(q\z/\eta,\al)-{\psi_0}(q^{-1}\z/\eta,\al)\)R^{(n)}_{H,X}(S_1,\cdots,S_n)\frac{d\eta ^2}{\eta^2\rho(\eta)}\label{compR}\\
&=
\frac 1 {2\pi i}\(\int\limits_{\mathbb{R}_+{q^{-2}}}
-\int\limits_{\mathbb{R}_+{q^2}}\)
{\psi_0}(\z/\eta,\al)R^{(n)}_{-H,X}(S_1,\cdots,S_n)
\frac{P(H)}{P(-H)}\frac{d\eta ^2}{\eta^2}\nn\\
&=
-\frac 1 {2\pi i}\(\int\limits_{\mathbb{R}_+{q^{-2}}}
-\int\limits_{\mathbb{R}_+{q^2}}\)
{\psi_0}(\z/\eta,\al)R^{(n)}_{H,X}(S_1,\cdots,S_n)
\frac{d\eta ^2}{\eta^2}\nn\\
&=R^{(n)}_{{Z},X}(S_1,\cdots,S_n)-\delta ^-_\xi{\psi_0}(\z/\xi,\al)
{M^{(n)}_0}(S_1,\cdots, S_n)
\nn\\
&-\frac{X}{P(-X)}{M^{(n)}_0}(S_1,\cdots, S_n)
\sum_{p=1}^n{S_p^{-1}\prod_{r\ne p}}\frac{X^2-S_r^2}{S_p^2-S_r^2}\nn\\
&\times 
\left\{
\begin{matrix}
\hskip -2cm{\psi_0}(\z S_p^{-\nu},\al)P(S_p)-A
{\psi_0}(\z (S_pQ)^{-\nu},\al)P(-S_p)\,,&\quad
\sigma _p\in \mathbb{R}\,,\\ \ \\
{\psi_0}(\z S_p^{-\nu}e^{-2\pi i\nu},\al)P(S_p)-A
{\psi_0}(\z (S_pQ)^{-\nu}e^{-2\pi i\nu},\al)P(-S_p)\,,
&\quad \sigma _p\in \Gamma\,,\nn
\end{matrix}
\right.
\end{align}
where we used a trivial identity:
$$\psi_0 (\z e^{\pi i},\al)=A{\psi_0} (\z,\al)\,.$$
\vskip .5cm
\hskip 2cm\includegraphics[height=8cm]{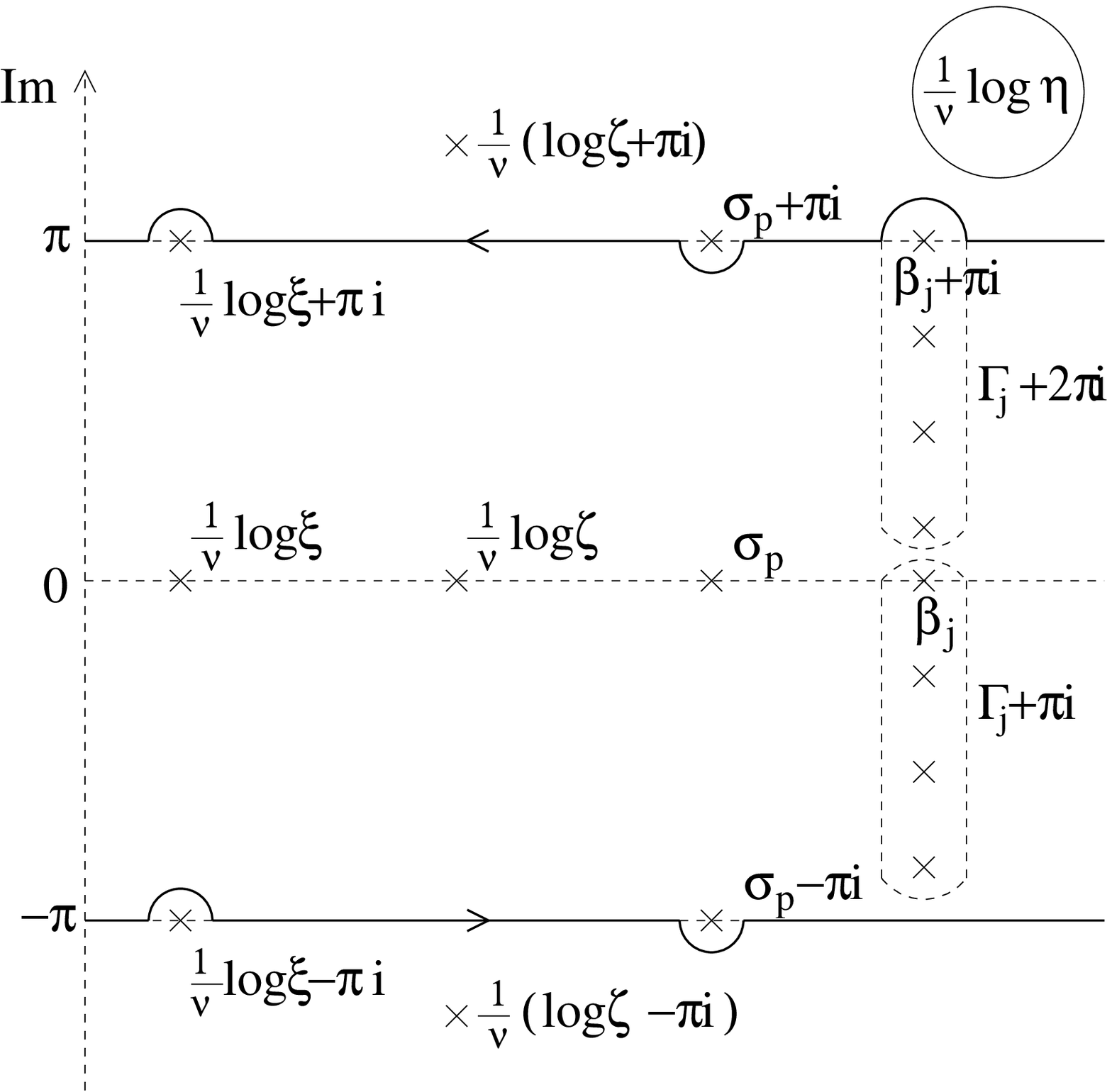}

\noindent
{\it Fig.7: Poles of the integrand in \eqref{compR}.
\\
The position of the poles of the integrand is depicted in the 
complex $(1/\nu)\log\eta$-plane.
}
\vskip .2cm
The figure {\it fig.7}
illustrates the position of poles.
The poles in sequences headed
by $\beta_j$ and $\beta_j+\pi i$ effectively occur when the integration in $\s_p\in\Gamma_j$ is performed.

We are happy with the first two terms in the right hand side 
of \eqref{compR}. As for
the last, containing the sum over $p$ terms,
we want to show that it vanishes when
the integrals over $\sigma _p$ are taken. 
Observing that in the summand for $p$ the dependence on $S_p$
before the curly bracket cancels out by $M^{(n)}_0$, one clearly sees
the situation is similar to
that considered in Proposition \ref{propQexact} for the $Q$-exact forms: both integrals
over $\mathbb{R}$ and $\Gamma$ reduce to residue at one pole like on
{\it fig. 2}. 
The difference with {the $Q$-exact forms} is that ${\psi_0}(\z S_p^{-\nu},\al)$
is not a
$2\pi i$-periodic function of $\sigma _p$. But on the other hand
there is 
a difference between
the integrands over $\mathbb{R}$ and $\Gamma$. It is easy 
to see that this difference is exactly such {that} the
residues at the fat points on {\it fig. 2} cancel. This finishes the proof.
\end{proof}
The difference of two solutions is proportional to 
\begin{align}
\mathrm{Diff}(\z,\xi)=X^{-1}{P(-X)}\Bigl(
 {(\ell_1^{(n)},R_{Z,X}^{(n)})_\al} {(\ell_2^{(n)},M^{(n)}_0)_\al}-
{(\ell_2^{(n)},R_{Z,X}^{(n)})_\al} {(\ell_1^{(n)},M^{(n)}_0)_\al}\Bigr)\,.
\label{diff}
\end{align}
This gives {a} huge but finite number of linearly
independent solutions
to the homogeneous equation. Indeed, {each} $\ell_i^{(n)}$ ($i=1,2$) is a linear
combination of $\ell_{i,1}\wedge\cdots \wedge \ell_{i,n}$. Every $\ell_{i,j}(\mathfrak{s})$
should be split into $m_{i,j}(\mathfrak{s})$ and $n_{i,j}(\mathfrak{s})$. The degree
of $m_{i,j}(\mathfrak{s})$ is restricted by the requirement of
convergence, and $n_{i,j}(\mathfrak{s})$ is defined modulo $p(\mathfrak{s}\mathfrak{q}^2)$
{(see \eqref{AMBIGUITY})}. Therefore, its degree is essentially bounded by $2n-1$. In addition
$\mathrm{Diff}(\z,\xi)$ is a polynomial of degree $n-1$ in $X^2$. 

{
Now we proceed to {the} computation of $\omega(\z,\xi)$.
Using the function $G(\eta,\xi;\ell^{(n)})$ we compute the integral in \eqref{omega}.
Since the dependence on $\eta$ and $\xi$ is solely contained in $R^{(n)}_{H,X}$,
we concentrate on the computation of the integral
\begin{align*}
\frac 1 {2\pi i}\int\limits _{\mathbb{R}_+e^{+i0}}\delta ^-_\z{\psi_0}(\z/\eta,\al)
R^{(n)}_{H,X}(S_1,\ldots,S_n)\frac {d\eta ^2}{\eta^2\rho(\eta)},
\end{align*}
keeping in mind that the integration with respect to the variables $S_1,\ldots,S_n$ is implied.}

To this end {we}
want first of all to transform somewhat {$R^{(n)}_{H,X}/\rho(\eta)$.} Do the partial fractions
\begin{align}
&R^{(n)}_{H,X}(S_1,\cdots ,S_n){/\rho(\eta)}\label{pf}
\\ &=\frac 1 {2\nu}{M_0^{(n)}(S_1,\ldots,S_n)}
\Bigl\{{-\frac{H}{H-X}+\frac{H}{H+X}\frac{P(X)}{P(-X)}}\nn\\
&+\frac{X}{P(-X)}\sum_{p=1}^n{S_p^{-1}\,}\frac {X^2-S_r^2}{S_p^2-S_r^2}
\Bigr(P(-S_p)\frac{H}{H-S_p}-P(S_p)\frac {H}{H+S_p}\Bigl)\Bigl\}.\nn
\end{align}
We call the last term in the bracket including the prefactor $\frac 1 {2\nu}M_0^{(n)}(S_1,\ldots,S_n)$
the sum term.
Let us transform {it by} using the Q-exact forms:
$$P(S_p)\frac {H}{H+S_p}\ \to\ AP(-S_p)\frac {H}{H+S_pQ}\,.$$
This is possible {unless} the point $\frac 1\nu\log\eta-\pi i$
{coincides} with one of {the} poles inside $\Gamma$. This does not happen if
$\frac 1\nu\log\eta$ is slightly above all $\beta _j$ which can be harmlessly
implied in the integral \eqref{omega}. So, our first goal is to compute the integral
\begin{align}
&{\frac1{2\pi i}}
\int\limits _{\mathbb{R}_+}\delta ^-_\z{\psi_0} (\z/\eta)\Bigl(\frac{H}{H-S_p}
-A\frac {H}{H+S_pQ}\Bigr)\frac {d\eta ^2}{\eta ^2}\label{AAAA}\\
&={\frac1{2\pi i}}
\(\int\limits _{\mathbb{R}_+}-\int\limits _{\mathbb{R}_+{e^{-2\pi i}}}\)
\delta ^-_\z{\psi_0}
 (\z/\eta)\frac{H}{H-S_p}\frac {d\eta ^2}{\eta ^2}=
{A}\frac{Z}{Z-S_pQ}-\frac{P(Z)}{P(-Z)}\frac{Z}{Z-S_p}\,.\nn
\end{align}
This computation is illustrated on the figure 
{\it fig. 8}
where the position of
{the poles and the effective poles}  is shown. The pole in the circle does not count because
of {the} multiplier $P(-S_p)$. {We used $A=e^{\pi i\al}$ twice.}
\vskip .5cm
\hskip 2cm\includegraphics[height=8cm]{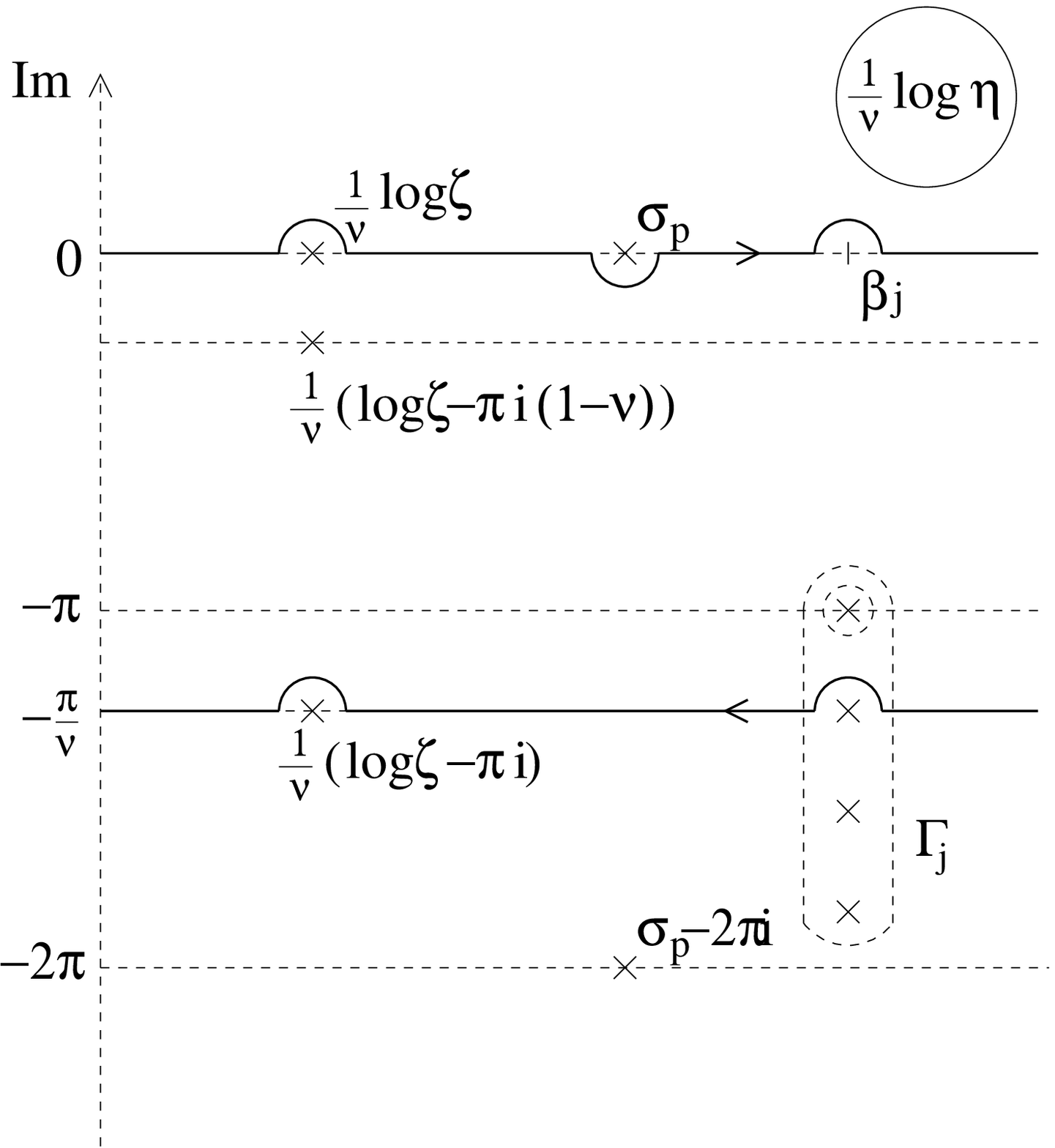}

\noindent
{\it Fig.8: Contour of integration for \eqref{AAAA}.\\
}
\vskip .2cm

We use once again the $Q$-exact forms in order to
transform
$${A}\frac{Z}{Z-S_pQ}P(-S_p)\ \to \ \frac{Z}{Z-S_p}P(S_p)\,.$$
{Thus, the sum term without the prefactor gives rise to}
\begin{align}
\frac 1 {P(-Z)P(-X)}\sum_{p=1}^n{\frac X{S_p}}\prod_{r\ne p}\frac {X^2-S_r^2}{S_p^2-S_r^2}
\cdot\frac Z {Z-S_p}\bigr(P(S_p)P(-Z)-P(-S_p)P(Z)\bigl)\,.\nn
\end{align}
Notice {that the last part is a polynomial of $Z$ and $S_p$.
We divide it into the even and odd parts (see \eqref{defC} for the definition of $C(Z,S)$):}
\begin{align}
&\frac Z {Z-S_p}\bigr(P(S_p)P(-Z)-P(-S_p)P(Z)\bigl)=-2\nu C(Z,S_p)\nn\\
&+\frac Z 2\Bigl(\frac {P(S_p)P(-Z)-P(-S_p)P(Z)}{Z-S_p}-
\frac  {P(-S_p)P(-Z)-P(S_p)P(Z)}{Z+S_p}\Bigr)\,.\nn
\end{align}
For the second term the summation can be performed by 
the interpolation formula.
We combine the result of this summation with the contribution
to $\omega (\z,\xi)$ coming from the first term in the right hand side of \eqref{pf}:
\begin{align}
&
{
-\frac 1 {2\nu}\hskip -.3cm\int\limits _{\ \mathbb{R}_+e^{+i0}}\delta ^-_\z{\psi_0} (\z/\eta)
\Bigl[\frac{H}{H-X}-\frac{H}{H+X}
\frac{P(X)}{P(-X)}\Bigr]\frac {d\eta ^2}{2\pi i\eta ^2}}\nn\\
&+ \frac  Z {4\nu P(-Z)P(-X)}
\Bigl(\frac {P(X)P(-Z)-P(-X)P(Z)}{{Z-X}}-
\frac  {P(-X)P(-Z)-P(X)P(Z)}{{Z+X}}\Bigr)\nn\\ \nn\\
&=-\delta ^-_\z\delta ^-_\xi\Delta ^{-1}_z{\psi_0} (\z/\xi,\al)\,.\nn
\end{align}
We use the analytical continuation formula \eqref{ANALYTIC CONTINUATION}.
Altogether we come to following nice expression
\begin{align}
\omega(\z,\xi;\ell^{(n)})=\frac {(\ell^{(n)}, L^{(n)}_{Z,X})_\al} {(\ell^{(n)},{ M^{(n)}_0})_\al}
\,,\label{mainomega}
\end{align}
where the polynomial $L^{(n)}_{Z,X}$ is given by 
\begin{align}
L^{(n)}_{Z,X}(S_1,\cdots, S_n)=\frac {\langle\Phi _\al\rangle} {P(-Z)P(-X)}
\left|\begin{matrix}0&C(Z,S_1) &\cdots &C(Z,S_n)\\
X &S_1&\cdots &S_n\\X^3 &S^3_1&\cdots &S^3_n\\ \vdots&\vdots&\vdots\\
X^{2n-1} &S^{2n-1}_1&\cdots &S^{2n-1}_n
\end{matrix}
\right|\,.\label{defLZX}
\end{align}
This final result does not depend on the arbitrariness of $m(\frak{s})$, $n(\frak{s})$
because $L^{(n)}_{Z,X}(S_1,\cdots, S_n)$ is a polynomial.

\section{Equivalence of BBS and BJMS fermions.}\label{equivalence}

Let us examine the result of the previous section. 
For finite $\mathbf{n}$, the function $\omega(\z,\xi)$ is associated 
with each eigenstate of the left Matsubara transfer matrix. 
In the limit $\mathbf{n}\to\infty$, 
the eigenvalues are parametrized by a set of real numbers 
 $\{\beta_1,\cdots,\beta_{2n}\}$  
and are ${{2n}\choose{n}}$-fold degenerate. 
Correspondingly, for each fixed $\beta_j$'s,
one has to have the same number of the functions 
 $\omega (\z,\xi;\ell ^{(n)})$.

In the process of solving the integral equation, however, we did not 
use any other condition on $\ell^{(n)}$  than that it is a 
skew-symmetric polynomial of degree at most $2n-1$ in each variable. 
It enters the solution as a ratio, so the solutions of the integral 
equation are parametrized by points in the projective space
$\mathbb{P}\bigl(\bigwedge^nV\bigr)$ where 
$V=\oplus_{j=0}^{2n-1}\mathbb{C}\mathfrak{s}^j$. 
Let us call $\langle\beta _1,\cdots,\beta _{2n};\ell^{(n)}|$ the actual 
eigenvectors of the Matsubara transfer matrix for
$\mathbf{n}\to\infty$. 
Then the corresponding $\ell^{(n)}$'s should be 
some special set of ${{2n}\choose{n}}$ points of the projective space.
Let us see if one can further narrow down the possibilities. 

The main determinant formula \eqref{mainIII} tells that 
\begin{align}
&\det \left(\omega(\z_i,\xi _j;\ell ^{(n)})\right)_{i,j=1,\cdots k}
\label{iii}\\
&\qquad\qquad=
\frac {\langle\beta _1,\cdots,\beta _{2n};\ell^{(n)}|
\bb^*(\zeta _1)\cdots\bb^*(\z_k)\cb^*(\xi _k)\cdots \cb ^*(\xi_1)
q^{2\al S(0)}|\mathrm{vac}\rangle}
{\langle\beta _1,\cdots,\beta _{2n};\ell^{(n)}|
q^{2\al S(0)}|\mathrm{vac}\rangle}.\nn
\end{align}
On the other hand, if $\ell ^{(n)}$ is a pure wedge product 
$\ell_0\wedge\ell_1\wedge \cdots\wedge\ell_{n-1}$
of linear factors $\ell_i$, then 
from the determinant formula \eqref{mainomega}--\eqref{defLZX}
one deduces by a simple linear algebra that
\begin{align}
\det \left(\omega(\z_i,\xi _j;\ell ^{(n)})\right)_{i,j=1,\cdots k}
=\frac{(\ell ^{(n)}\ ,\ 
\psi ^*_0(Z_1)\cdots \psi ^*_0(Z_k)\chi _0^*(X_k)\cdots \chi _0^*(X_1)M^{(n)}_0)_\al}
{(\ell ^{(n)}\ ,\ 
M^{(n)}_0)_\al}\,.\label{hhh}
\end{align}
Here, as usual $Z_j=\z ^{\frac 1 \nu}_j$, $X_j=\xi ^{\frac 1 \nu}_j$, and 
we used the definitions \eqref{psi0}, \eqref{chi0}.
This hints at the following postulate. 

\vskip .3cm

\noindent{\bf Postulate.} 
The polynomials $\ell ^{(n)}$
corresponding to the eigenvectors of the Matsubara transfer matrix for 
$\mathbf{n}\to\infty$ belong to the Grassmannian
$\mathrm{Gr}(n,V)\subset\mathbb{P}\bigl(\bigwedge^nV\bigr)$. 

\vskip .3cm

We present below an argument in favour of this postulate, by 
invoking  HLBA.

For fixed $\mathbf{n}$, 
the Bethe roots are either real or complex. 
According to \cite{DestriLow}, in
the limit $\mathbf{n}\to\infty$ the real
roots fill densely $\mathbb{R}_+$ with holes which correspond to solitons
with rapidities $\beta_j$, and the positions of the complex roots
are defined by the Bethe roots of the transfer-matrix constructed from the
physical S-matrix. Namely, to the vector
$\langle\beta _1,\cdots,\beta _{2n};\ell^{(n)}|$ there corresponds a Bethe vector
in $\(\mathbb{C}^2\)^{\otimes 2n}$. Let us be more precise. The two soliton S-matrix
is given by
\begin{align}
S_{i,j}(\beta _i-\beta _j)=S_0(\beta _i-\beta _j)\widetilde{S}_{i,j}(\mathfrak{b}_i/
\mathfrak{b}_j)\,,\label{Smatrix}
\end{align}
\begin{align}
&S_0(\beta)=\exp\(-i\int\limits _{0}^{\infty}
\frac{\sin(2k\nu\beta)\sinh((2\nu-1)\pi k)}{k\cosh(\pi \nu k)\sinh (\pi(1-\nu) k)}dk\)
\nn\,,
\end{align}
and
\begin{align}
&\widetilde{S}_{i,j}(\mathfrak{b}_i/
\mathfrak{b}_j)=
{\textstyle \frac 1 2 } (I_i\otimes I_j+\sigma ^3_i\otimes \sigma ^3_j)
+\frac{\mathfrak{b}_i-
\mathfrak{b}_j}
{\mathfrak{b}_i\mathfrak{q}^{-1}
-\mathfrak{b}_j\mathfrak{q}} \cdot
{\textstyle \frac 1 2 }(I_i\otimes I_j-\sigma ^3_i\otimes \sigma ^3_j)\nn\\&+
\sqrt{\mathfrak{b}_i
\mathfrak{b}_j}\frac{
\mathfrak{q}^{-1}-\mathfrak{q}}{\mathfrak{b}_i\mathfrak{q}^{-1}
-\mathfrak{b}_j\mathfrak{q}} \cdot
(\sigma ^+_i\otimes \sigma ^-_j+\sigma ^-_i\otimes \sigma ^+_j)\,.\nn
\end{align}
We define 
\begin{align}
\begin{pmatrix}A(\mathfrak{t})&B(\mathfrak{t})\\
C(\mathfrak{t})&D(\mathfrak{t})\end{pmatrix}_a=
\widetilde{S}_{a,2n}(\mathfrak{t}/
\mathfrak{b}_{2n})\cdots
\widetilde{S}_{a,1}(\mathfrak{t}/
\mathfrak{b}_1) \,.
\end{align}
Then we are interested in constructing the eigen-covectors of the 
HLBA transfer-matrix
$$T^{\mathrm{HLBA}}(\mathfrak{t})
=xA(\mathfrak{t})+x^{-1}D(\mathfrak{t})\,,$$
where $x$ is such that $|x|=1$, it is defined by $\theta$ from Section \ref{infinite}. 
The Bethe vectors in the weight zero sector
are given by the algebraic Bethe Ansatz \cite{FST}.
So, the correspondence between
the eigen-covector in the $\mathbf{n}\to\infty$ limit and the Bethe vector in HLBA reads as
\begin{align*}
\langle\beta _1,\cdots,\beta _{2n};\ell^{(n)}|\leftrightarrow
\langle\ \uparrow |\ \prod_{j=1}^nC(\mathfrak{u}_j)\,,
\end{align*}
where $\langle\ \uparrow |$ is the covector with all spins up, and $\mathfrak{u}=(u_1,\ldots,,u_n)$
satisfy the Bethe Ansatz equations. In this correspondence the eigen-covectors are parametrised by
$\mathfrak{u}$.
Suppose that $f_{\mathcal O_\al}(\beta _1,\cdots,\beta _{2n})$ is the form factor of $\mathcal O_\al$.
Then, we have the equality
\begin{align}
\langle\beta _1,\cdots,\beta _{2n};\ell^{(n)}|\mathcal O_\al|{\rm vac}\rangle=
\langle\ \uparrow |\prod_{j=1}^n
C(\mathfrak{u}_j)\cdot f(\beta _1,\cdots,\beta _{2n})
\label{fHLBA}\,.
\end{align}

Let us compute the right hand side by using \eqref{FF}. The definition of $w^{\epsilon_1,\cdots,\epsilon_{2n}}(\beta_1,\cdots,\beta _{2n})$ is basically
the same as in   \cite{book}, but it has  to be transposed since we consider the matrix elements
between the excited state and the vacuum and not 
vice versa:
$$w^{\epsilon_1,\cdots,\epsilon_{2n}}(\beta_1,\cdots,\beta _{2n})=
\prod_{j:\ \epsilon_j=+}C(\mathfrak{b}_j)|\downarrow\ \rangle\,.$$
The overall multiplier and the integral transformation involved in the
formulae for the form factors are independent of the
partitions. So, it is easy to see that we have to compute 
\begin{align}
&\ell ^{(n)}_{\{u\}}(\mathfrak{s}_1,\cdots \mathfrak{s}_n)\nn
\\&=\sum _{\{1,\cdots,2n\}\atop=I^-\cup I^+}
\ell ^{(n)}_{I^-\sqcup I^+}(\mathfrak{s}_1,\cdots \mathfrak{s}_n)
\frac 1{\prod\limits_{i\in I^-\atop j\in I^+}(\mathfrak{b}_i-\mathfrak{b}_j)}\prod
_{i\in I^-}\sqrt{\mathfrak{b}_i}
\langle\ {\uparrow}  
|\prod_{j=1}^n
C(\mathfrak{u}_j)\prod_{j:\ \epsilon_j=+}C(\mathfrak{b}_j)
|{\downarrow} 
\ \rangle\,.\nn
\end{align}
The scalar product in the right hand side is the domain wall partition
function given by the Izergin determinant \cite{Izergin}, 
so, we face a difficult
but clearly stated combinatorial problem. Surprisingly, for any set
$\mathfrak{u}_1,\cdots, \mathfrak{u}_n$ (not necessarily satisfying the Bethe
equations) we find that $\ell ^{(n)}_{\{u\}}(\mathfrak{s}_1,\cdots \mathfrak{s}_n)$
belongs to the Grassmanian, namely,
\begin{align}
&\ell ^{(n)}_{\{u\}}(\mathfrak{s}_1,\cdots \mathfrak{s}_n)\label{rrr}\\&=c(\mathfrak{q})\frac{\prod
\limits_{i,j=1}^n
(\mathfrak{u}_i-\mathfrak{u}_j\mathfrak{q}^2)}
{\prod\limits _{i=1}^{n}\prod\limits _{j=1}^{2n}
(\mathfrak{u}_i-\mathfrak{b}_j\mathfrak{q}^2)}\ \ \ell _{\{u\},0}\wedge
\ell _{\{u\},1}\wedge\cdots\wedge \ell _{\{u\},n-1}(\mathfrak{s}_1,\cdots \mathfrak{s}_n)\,,\nn
\end{align}
where $c(\mathfrak{q})$ is an irrelevant constant depending only on $\mathfrak{q}$, and
\begin{align}
&\ell _{\{u\},j}(\mathfrak{s})=
c(\mathfrak{q})^{-1}\frac
{\prod\limits _{i=1}^{n}\prod\limits _{j=1}^{2n}
(\mathfrak{u}_i-\mathfrak{b}_j\mathfrak{q}^2)}
{\prod
\limits_{i,j=1}^n
(\mathfrak{u}_i-\mathfrak{u}_j\mathfrak{q}^2)}
\label{kkkk}\\&\times\sum _{\{1,\cdots,2n\}\atop=I^-\cup I^+}
\ell _{I^-\sqcup I^+,j}(\mathfrak{s})
\frac 1{\prod\limits_{i\in I^-\atop j\in I^+}(\mathfrak{b}_i-\mathfrak{b}_j)}\prod
_{i\in I^-}\sqrt{\mathfrak{b}_i}
\langle\  {\uparrow}  
|\prod_{j=1}^n
C(\mathfrak{u}_j)\prod_{j:\ \epsilon_j=+}C(\mathfrak{b}_j)
|{\downarrow}
\ \rangle\,.\nn
\end{align}
Notice that 
$$\ell _{I^-\sqcup I^+,j}=c_{j,0}\mathbf{s}^{n+j}+\cdots c_{j,j}\mathbf{s}^{n}+
d_{j,1}\mathbf{s}^{n-1}+\cdots +d_{j,n}\,,$$
where $d_{j,k}$ depend on the partition $I^+\cup I^-$ and $c_{j,k}$ do not.
From this fact
it is easy to see that the  identity \eqref{kkkk} is a necessary condition for 
\eqref{rrr} to hold being a part of it. The
rest of equations in  the identity \eqref{rrr} can be viewed as Pl\"ucker relations.

Now we see that our assumption about $\ell ^{(n)}$ being in the Grassmannian
fits completely with the fact that the function $\omega (\z,\xi;\ell^{(n)})$ must
describe the ratio of the component of the form factor for the descendant
to the one for the primary field in the basis of Bethe vectors.
The formulae  \eqref{iii} and \eqref{hhh} allow us
to compute the form factors of
any operator for homogeneous or inhomogeneous XXZ chain. 
Here we
shall not consider the form factors for the 
lattice model
and proceed 
directly to the sG case.

As we have seen, for the description of the sG form factors, 
it is natural to slightly modify 
the operators $\psi_0^*(\z)$, $\chi^*_0 (\xi)$ by 
introducing a nontrivial block $\mathcal{A}$ in \eqref{genferm1}. 
On the other hand, when we take
the scaling limit from the inhomogeneous spin chain,  
the lattice fermions $\bb^*(\z)$, $\cb^*(\z)$ are also
modified by a Bogolubov transformation involving 
the function 
$$
\omega _0(\z,\xi)
=-\delta ^-_\z\delta^-_\xi\Delta ^{-1}_\z\psi _0(\z/\xi,\al)\,.
$$
If we are working with the asymptotic operators 
$\betab^*(\z)$, $\gammab^*(\xi)$,
$\bar\betab^*(\z)$, 
$\bar\gammab^*(\xi)$ this function contributes
only to the pairings of $\betab^*(\z)$ with $\bar\gammab^*(\xi)$, and
of $\bar\betab^*(\z)$ with $\gammab^*(\xi)$. 
In these two cases 
we consider
$\z\to\infty, \xi\to 0$ and $\z\to 0,\xi\to\infty$ respectively, and keep in the
asymptotics \eqref{asDinv}
of $\psi _0(\z/\xi,\al)$ only the part going in $(\xi/\z)^{\frac{j}\nu}$ or
$(\z/\xi)^{\frac{j}\nu}$. So, effectively $\omega _0(\z,\xi)$
is replaced either by
\begin{align*}
&\omega_{0,+}(\z,\xi)=-\delta ^-_\z\delta^-_\xi 
{\textstyle\frac i{4\nu}}
\sum_{j=0}^{\infty}(-1)^j(\xi/\z) ^{\frac j {\nu}}\cot {\textstyle\frac \pi{2\nu}}
(\al\nu+j),\nn
\end{align*}
or by
\begin{align*}
&\omega_{0,-}(\z,\xi)=\delta ^-_\z\delta^-_\xi 
{\textstyle\frac i{4\nu}}
\sum_{j=1}^{\infty}(-1)^j(\z/\xi )^{\frac j {\nu}}\cot {\textstyle\frac \pi{2\nu}}
(\al\nu-j)\,.\nn
\end{align*}
Now it is easy to see that 
$$\omega_{0,\pm}(\z,\xi)=\frac{C_\pm (Z,X)}{P(-Z)P(-X)}\,.$$
Due to this identity the effect of modifying 
$\bb^*(\z)$ and $\cb^*(\z)$ is exactly the same as the effect of
modifying $\psi_0^*(\z)$ , $\chi^*_0 (\xi)$. 

Putting together  all pieces of the puzzle 
we come to the main conclusion of this work:
\begin{align}
&L^{(\star)}_{\betabs^*(\mubs\z_1)\cdots \betabs^*(\mubs\z_p)
\bar\betabs^*(\z_{p+1}/\mubs)
\cdots \bar\betabs^*(\z_{k}/\mubs)
\bar\gammabs^*(\xi_k/\mubs)\cdots \bar\gammabs^*(\xi_{q+1}/\mubs)
\gammabs^*(\mubs\xi_q)\cdots \gammabs^*(\mubs\xi_1)\Phi _\al}\label{TheMain}\\&=
\psi ^*(Z_1)\cdots \psi ^*(Z_p)
\bar{\psi} ^*(Z_{p+1})\cdots \bar{\psi} ^*(Z_{k}) \bar{\chi }^*(X_k)\cdots 
\bar{\chi }^*(X_{q+1})\chi ^*(X_q)\cdots \chi ^*(X_1)
M^{(\star)}_0\,.\nn
\end{align}
So, the action of the BJMS fermions on local operators coincides with the action of
the BBS fermions on towers. 
This exact identification of two things introduced originally
for completely different reasons
is one more evidence of deep self-consistency of 
integrable two-dimensional quantum field theory.

\section{ BBS construction of null vectors}\label{null}

The main achievement of the paper \cite{BBS} consists in the fermionic
description of null vectors. 
In this section we shall discuss this issue. 
Throughout this section we fix the number of the parameters 
$\beta_j$ to be $2n$.  

Up to now we were interested in generic $\al$. 
However, as it is clear 
from the discussion of the regularised integrals, 
something special happens at the points of resonance, 
i.e., at the points where the pairing \eqref{MAINDEF} has poles. 
In this paper we shall consider only
the resonances occurring at $\al =\xin m$ with 
$m\in\Z_{\geq0}$. In the
CFT language
this corresponds to considering the degenerate fields which 
in the standard notation are 
$$\Phi_{1,m+1}=\Phi _{\xin m}\,.$$
We shall use both of these symbols. 
The polynomials $a^n\ell^{(n)}_{I^-\sqcup I^+}$ 
depend on $\al$ only through $a^2$, i.e., $\xin$-periodically.
So, 
the form factors of the descendants
$\mathcal{O}_{\al +m\xin}$ are
expressible via the pairings $(\ ,\ )_\al$:
\begin{align}
(\ell^{(n)}_{I^-\sqcup I^+}(\mathfrak{s}),L^{(n)}(S))_{\al+m\xin}
=(-)^{mn}
(\ell^{(n)}_{I^-\sqcup I^+}(\mathfrak{s}),L^{(n)}(S)\prod_{j=1}^nS_j^m)_\al\,.\label{shiftal}
\end{align}
Having this in mind we shall consider all of them together.

In the present paper we shall consider only
the case of the fields $\Phi _{1,2m}$ which we call even. 
According to our logic all these primary fields 
and their Virasoro descendants should
be considered as the fermionic descendants of 
$$
\Phi_{1,2}=\Phi _{\xin}\,.
$$
We
denote this space by $\mathcal{H}_{\frac{1-\nu}\nu}=\oplus_{c\in\Z}\mathcal{H}_{\frac{1-\nu}\nu,c}$
where $\mathcal{H}_{\frac{1-\nu}\nu,c}$ denotes the sector of charge $c$.

Similarly, 
the case of the fields $\Phi _{1,2m+1}$ which we call odd
is reduced to the case $\al=0$. 
While the point $\al=\xin$ belongs to the fundamental domain \eqref{fundomain}, 
$\al =0$ is on its boundary, and this is a source of many complications. 
For that reason we decided to postpone the consideration 
of $\al=0$ till a future work.
Still some technical points explained in the next 
subsection will be common to both even and odd cases.

\subsection{Peculiar properties of form factors at $a^2=1$}
\definecolor{5/2}{rgb}{0.8,0.2,0}
Due to \eqref{shiftal}, in order to treat the case
$a^2=1$ it is sufficient to
study
$$(\ell^{(n)}_{I^-\sqcup I^+}, N^{(n)}(S))_\al\bigl|_{\al=0}$$
for all Laurent polynomials $N^{(n)}(S)$. Let us do that assuming for the moment that the
coefficients of $N^{(n)}(S)$ are independent of $\al$.

Motivated by the formulae \eqref{reslm+}, \eqref{reslm-}, \eqref{MAINDEF},
let us introduce the following 
definition of residues for any Laurent polynomials $N(S)$, 
$\ell(\mathfrak{s})$:
\begin{align}
&\mathrm{Res}_{+}[N]=\res_{S=\infty}\Bigl(X^+(S)S^{-2n}N(S)\frac {dS}{S}\Bigr),
\ \frak{res}_{+}[\ell]=\res _{\mathfrak{s}=\infty}\Bigl(x^+(\mathfrak{s})
\mathfrak{s}^{-n}\ell(\mathfrak{s})\frac{d\mathfrak{s}}{\mathfrak{s}}\Bigr)\,,\nn\\
&\mathrm{Res}_{-}[N]=
\res_{S=0}\Bigl(X^-(S)N(S)\frac {dS}{S}\Bigr),
\ \qquad\ \mathfrak{res}_{-}[\ell]=
\res _{\mathfrak{s}=0}
\Bigl(x^-(\mathfrak{s})\ell(\mathfrak{s})\frac{d\mathfrak{s}}{\mathfrak{s}}\Bigr)\,.\nn
\end{align}
Extending the definition of $\mathrm{Res}_\pm$ as
\begin{align*}
&\mathrm{Res}_\pm\[N_1\wedge\cdots\wedge N_k\](S_1,\cdots ,S_{k-1})\\
&=
\sum_{j=1}^k
(-1)^{j-1}\mathrm{Res}_{\pm}[N_j]\ \(N_1\wedge\cdots\wedge \widehat{N_j}\wedge\cdots\wedge N_k
\)(S_1,\cdots ,S_{k-1})\,,
\end{align*}
we have
\begin{align}
&\mathrm{Res}_{\pm}^2=0,
\label{2=0}
\\
&\mathrm{Res}_{\pm}[D_1[Z]]=0\,,
\label{res[Z]}
\end{align}
where $D_1[Z]=D_A[Z]\bigl|_{A=1}$. 
Obviously the same relations hold for 
$\mathfrak{res}_{\pm}$ 
defined similarly as above.
Note that in the definition above, 
no multiple poles occur by antisymmetry.

Then using \eqref{reslm+}, \eqref{reslm-} and
\eqref{MAINDEF}, we can write the residue of the
pairing as 
\begin{align}
\res_{\al=0}(\ell^{(k)},N^{(k)})_\al
\,d\al=-\xinn \bigl(\bigl(\frak{res}_+[\ell^{(k)}]
,\mathrm{Res}_{+}[N^{(k)}]\bigr)_0-\bigl(\frak{res}_-[\ell^{(k)}] ,\mathrm{Res}_{-}[N^{(k)}]\bigr)_0\bigr)\,,
\label{nice}
\end{align}
for any $\ell^{(k)}$ and $N^{(k)}$ 
whose coefficients are independent of $\al$.  

Consider the polynomials $\ell_{I^-\sqcup I^+,i}$ ($i=0,1,\cdots ,n-1$).
We have
$$
\ell_{I^-\sqcup I^+,0}(\mathfrak{s})=(a^{-1}-a)
\mathfrak{q}^{-2n}p_{I^+}(\mathfrak{s}\mathfrak{q}^2)\,.
$$
The rest of them have the properties ($1\le i\le n-1$)
\begin{align}
&\ell_{I^-\sqcup I^+,i}(\mathfrak{s})=a
D_a[\mathfrak{s}^{i-n}]+O(\mathfrak{s}^{n-1})
\,,\quad \mathfrak{s}\to \infty\,,\label{1ton-1}\\
&\ell_{I^-\sqcup I^+,i}(\mathfrak{s})
=(a^{-1}-a)C_i(\mathfrak{b})+O(\mathfrak{s})\,, 
\quad\mathfrak{s}\to 0\,,\nn
\end{align}
where $C_i(\mathfrak{b})$ are irrelevant constants depending only on $\mathfrak{b}_j$'s.
These follow easily from \eqref{GENERATING}.
The formulae \eqref{1ton-1} clearly imply that
$$\mathfrak{res}_\pm [\ell_{I^-\sqcup I^+,i}(\mathfrak{s})]\Big|_{\al=0}=0,\quad i=1,\cdots n-1\,.$$

Altogether we come to the formula
\begin{align}
&\lim_{\al\to 0}(\ell^{(n)}_{I^-\sqcup I^+}\ ,\ N^{(n)})_\al\label{PAIRING}\\&=
\left.-2\pi i\Bigl(
(\tilde\ell_{I^-\sqcup I^+}^{(n-1)}\ ,\ \mathrm{Res}_+[N^{(n)}])_\al
+(-\mathfrak{q})^n\prod_{i\in I^-}\mathfrak{b}_i^{-\frac 1 2}\prod_{i\in I^+}\mathfrak{b}_i^{\frac 1 2}
(\tilde\ell_{I^-\sqcup I^+}^{(n-1)}\ ,\ \mathrm{Res}_-[N^{(n)}])_\al\Bigr)\right|_{\al=0}\,,
\nn
\end{align}
where 
$$\tilde\ell_{I^-\sqcup I^+}^{(n-1)}
=\ell_{I^-\sqcup I^+,1}\wedge\cdots\wedge \ell_{I^-\sqcup I^+,n-1}\,.$$

When $\al=0$, the monomial $S^{2n}$ cannot be reduced to lower degree because
it drops from the exact form $P(S)-AP(-S)$. So one has to relax the
degree restriction \eqref{deg-rest} to
\begin{align}
0\le\mathrm{deg}_{S_i}N^{(n)}(S_1,\cdots ,S_n)\le 2n\,
\label{kkkkkk}
\end{align}
An important point about $\al=0$ 
is that for certain polynomials satisfying \eqref{kkkkkk}  
the pairing \eqref{PAIRING} vanishes.  
This happens if one of the following three conditions is met:
\medskip

\noindent (i) 
Vanishing of both residues. 
The pairing \eqref{PAIRING} vanishes if 
\begin{align}
\mathrm{Res}_{+}[N^{(n)}]=\mathrm{Res}_{-}[N^{(n)}]=0\,.
\label{res=0}
\end{align}
\noindent (ii) 
Exact form. The pairing vanishes if 
\begin{align}
N^{(n)}=D[1]\wedge N^{(n-1)}\,,\label{P-P}
\end{align}
where 
$$
D[1]=D_A[1]|_{A=1}=P(S)-P(-S)\,.
$$
\noindent (iii) 
Quantum Riemann bilinear identity.
The pairing vanishes if
\begin{align}
N^{(n)}=C^{(2)}\wedge N^{(n-2)}\,,\label{C2}
\end{align}
where 
$$
C^{(2)}(S_1,S_2)=C (S_1,S_2)-C(S_2,S_1)\,.
$$

The last property (iii)
needs some comments. It is a consequence of the
quantum Riemann bilinear identity \cite{smiriemann}.
Consider the antisymmetric polynomial
$$
c^{(2)}(\mathfrak{s}_1,\mathfrak{s}_2)=p(\mathfrak{s}_1)\frac{\mathfrak{s}_2
\mathfrak{q}^2}
{\mathfrak{s}_1-\mathfrak{q}^2\mathfrak{s}_2}-
p(\mathfrak{q}^2\mathfrak{s}_1)
\frac{\mathfrak{s}_2}
{\mathfrak{s}_1\mathfrak{q}^2-\mathfrak{s}_2}
-p(\mathfrak{s}_2)
\frac{\mathfrak{s}_1\mathfrak{q}^2}
{\mathfrak{s}_2-\mathfrak{q}^2\mathfrak{s}_1}+
p(\mathfrak{q}^2\mathfrak{s}_2)
\frac{\mathfrak{s}_1}
{\mathfrak{s}_2\mathfrak{q}^2-\mathfrak{s}_1}\,.
$$
Suppose we find polynomials $r_i(\mathfrak{s})$, $s_i(\mathfrak{s})$
($i=1,\cdots ,n-1$) in the kernel of the operators
\begin{align*}
\mathfrak{res}_\pm:\C[\mathfrak{b}]^{2n-1}\simeq\oplus_{j=1}^{2n-1}
\C[\mathfrak{b}]\mathfrak{s}^j\rightarrow\C[\mathfrak{b}]
\end{align*}
such that 
\begin{align}
c^{(2)}(\mathfrak{s}_1,\mathfrak{s}_2)=
\sum_{j=1}^{n-1}
\(r_i(\mathfrak{s}_2)s_i(\mathfrak{s}_1)
-r_i(\mathfrak{s}_1)s_i(\mathfrak{s}_2)\)\,.
\label{c=rs}
\end{align}
Then a pairing  in 
${\rm Ker}\,\mathfrak{res}_+\cap {\rm Ker}\,\mathfrak{res}_-$ 
is defined by
$$r_i\circ r_j=s_i\circ s_j=0,\quad r_i\circ s_j=\delta _{i,j}\,.$$
The sets $\{r_i\}_{i=1}^{n-1}$ 
and $\{s_i\}_{i=1}^{n-1}$ are called half-bases. Obviously 
there is an action of the  symplectic group $Sp(2n-2)$.

The quantum Riemann bilinear identity (see \cite{smiriemann}) states
\begin{align}
(m_1\wedge m_2, C^{(2)})_0=2\pi i (m_1\circ m_2)\,.\label{riemann}
\end{align}
It is clear from \eqref{GENERATING} 
that for any partition $\{1,\cdots,2n\}=I^-\sqcup I^+$ 
the following polynomials satisfy \eqref{c=rs} for $a=1$: 
\begin{align}
r_i(\mathfrak{s})= (\mathfrak{q}^2\mathfrak{s})^{n-i}\,,\quad
s_i(\mathfrak{s})=\ell_{I^-\sqcup I^+,i}(\mathfrak{s})\,.\nn
\end{align}
So,
\begin{align}
\ell_{I^-\sqcup I^+,i}\circ
\ell_{I^-\sqcup I^+,j}=0\,,\quad i,j=1,\cdots, n-1\,,\label{ll=0}
\end{align}
which ensures that 
the pairing of
$\ell_{I^-\sqcup I^+,i}\wedge \ell_{I^-\sqcup I^+,j}$ 
and $C^{(2)}$ vanishes. 
The pairing of 
$\ell _{I^-\sqcup I^+,0}\wedge\ell _{I^-\sqcup I^+,j}$ 
and $C^{(2)}$ also vanishes due to 
\begin{align*}
\mathrm{Res}_{+,S_1}C^{(2)}(S_1,S_2)=0,
\quad \mathrm{Res}_{-,S_1}C^{(2)}(S_1,S_2)=-{\textstyle\frac 1 {2\nu}}(P(S_2)-P(-S_2))\,
\end{align*}
and \eqref{P-P}.

Before going further let us make one remark,
 which is supposed to be well known but
probably worth being repeated.
For generic $\al$ the polynomials
$\ell_{I^-\sqcup I^+,0}\wedge\cdots\wedge \ell_{I^-\sqcup I^+,n-1}$ 
with different partitions of
$\mathfrak{b}_j$'s span the space $\bigwedge^n\mathbb{C}^{2n}$ 
whose dimension 
$\binom{2n} {n}$ coincides with that of 
the weight $0$ subspace
$(\mathbb{C}^{\otimes 2n})_0$.
The latter can be interpreted in terms of the
$U(1)$-symmetry of the sG model:
soliton-antisoliton provide a two-dimensional representation of $U(1)$.
When $\al=m \xin$ 
the polynomials 
$\ell_{I^-\sqcup I^+,1}(\mathfrak{s}),\cdots,
\ell_{I^-\sqcup I^+,n-1}(\mathfrak{s})$
which enter the form factor formulas 
are divisible by $\mathfrak{s}$, 
and $\mathfrak{res}_+$ vanishes on them.
So, they are linear combinations of $2n-2$ different monomials.
Moreover, the form factors vanish if $\tilde\ell^{(n-1)}$ 
contains $c^{(2)}$ as a multiplier. 
Altogether, instead of 
$\bigwedge^n\mathbb{C}^{2n}$, we have a
smaller space $\bigwedge^{n-1}\mathbb{C}^{2n-2}$ 
isomorphic to the space of the maximal irreducible 
representation of $Sp(2n-2)$. 
The dimension of this space equals
$$\binom{2n-2}{n-1}-\binom{2n-2}{n-3}=\binom{2n}{n}-\binom{2n}{n-1}\,.$$
In the right hand side we have the Catalan number,
the multiplicity of the singlet representation of 
the group $SL(2)$ in $\mathbb{C}^{\otimes 2n}$.
This fact agrees with the quantum group reduction \cite{ReshSmi} which states
that the form factors of $\Phi _{\xin m}$ are invariant under the action
of the quantum group $U_{\mathfrak{q}}(\slt)$. 
Certainly,
for generic $\mathfrak{q}$, 
this multiplicity is the same as in the classical case.

\subsection{Null vectors for the fields $\Phi_{1,2m}$.}

In
the case $\al=\xin$ and generic $Q$ we can safely go from the 
odd representatives for $L^{(n)}$ to those of restricted degree and 
vice versa.
So, following the remark at the end of
Section \ref{BBSfermions} we consider the fermions defined by \eqref{modify-fermion}, 
which act on the space of towers with restricted degrees:
$$0\le\mathrm{deg}_{S_i}L^{(n)}(S_1,\cdots,S_n)\le 2n-1\,.$$
The identification with the previous subsection goes through
\begin{align}
N^{(n)}(S)=\prod_{j=1}^nS_j\cdot 
L^{(n)}(S)\,.
\label{N=SL}
\end{align}
As already noted, we have 
$(\ell^{(k)},L^{(n)})_\xin =(\ell^{(k)},N^{(n)})_0$.

First, observe that
$$
C(Z,S)=\frac 1 {2\nu}Z\cdot \frac {P(S)-P(-S)}{S}+O(Z^3)\,,
$$
and
$$
C_-(Z,X)=Z\cdot {\textstyle\frac i \nu}
\cot{\textstyle\frac \pi 2}(\al -{\textstyle\frac 1\nu})
\sigma _{2n}(B)X^{-1}(P(X)+P(-X))\bigl|_{\al=\xin}
+O(Z^3)=O(Z^3)
\,.
$$
It means that,  
under the identification \eqref{N=SL}, 
the operator $\bar\psi ^*_1$ acts as 
wedge product by an exact form. 
Hence  by \eqref{P-P} we find 
a set of null vectors:
\begin{align}
\bar{\psi }_1^*\ \mathcal{H}_{\frac{1-\nu}\nu,-1}= 0\,.\label{psibar*1}
\end{align}

We have 
$$
1\le\mathrm{deg}_{S_i}
\Bigl(\prod _{j=1}^nS_j\cdot L^{(n)}(S_1,\cdots,S_n)\Bigr)
\le 2n\,,
$$
which implies that 
\begin{align}
&\mathrm{Res}_-\(
\prod _{j=1}^nS_j\cdot L^{(n)}(S_1,\cdots,S_n)\)=0\,,\nn\\
&\mathrm{Res}_+\(
\prod _{j=1}^nS_j\cdot L^{(n)}(S_1,\cdots,S_n)\)=
\prod _{j=1}^{n-1}S_j\cdot\(\chi ^*_1
L^{(\star)}\)^{(n-1,n)}(S_1,\cdots,S_{n-1})\,.\nn
\end{align}
This gives due to $(\chi ^*_1)^2=0$ another set of
null vectors:
\begin{align}
\chi _1^*\ \mathcal{H}_{\xin,1}\simeq 0\,.\label{chi*1}
\end{align}

Now we turn to the Riemann bilinear identity. 
Introduce the operators
\begin{align}
\mathcal{C}_\mathrm{even}=\oint\psi ^*(D)\chi (D)
\frac {dD}{2\pi iD^3}
\,,
\quad 
\overline{\mathcal{C}}_\mathrm{even}
=\oint\bar{\psi }^*(D)\bar{\chi} (D)\frac {dD}{2\pi iD^3}
\,.\nn
\end{align}
We have for $L^{(\star)}\in \mathcal{H}_{\xin, -s-2}$
\begin{align}
\(
(\mathcal{C}_\mathrm{even}+\overline{\mathcal{C}}_\mathrm{even})
L\)^{(n-s,n)}(S_1,\cdots,S_{n-s})=-
{C}^{(2)}_\mathrm{even}\wedge L^{(n-s-2,n)}(S_1,\cdots,S_{n-s})\,,
\label{Ceven}
\end{align}
where
$$C^{(2)}_{\mathrm{even}}(S_1,S_2)=C (S_1,S_2)S_1^{-2}-C (S_2,S_1)S_2^{-2}\,.$$
The only non-trivial part of this computation is to make sure that no cross-terms
occur due to the Bogolubov transform \eqref{def-psi1}, \eqref{def-psi2}. This follows from
$$(x\tau _+(x)-x ^{-1}\tau_-(x^{-1}))\bigr|_{\al=\frac{1-\nu}\nu}=0\,.$$
Since
$$
S_1S_2C^{(2)}_{\mathrm{even}}(S_1,S_2)=-C
^{(2)}(S_1,S_2)\,,
$$
this together with \eqref{C2} and \eqref{Ceven}
gives rise to new null-vectors
\begin{align}
\(\mathcal{C}_\mathrm{even}+\overline{\mathcal{C}}_\mathrm{even}\)
\mathcal{H}_{\xin,-2}\simeq 0\,.\label{nullCeven}
\end{align}

In what follows we shall be  interested in right-chiral null vectors. 
For the operator $\Phi _{1,2}$ it is easy to identify them:
\begin{align}
&\chi ^*_1\ \psi ^*_{I^+}\chi ^*_{I^-}M^{(\star)}_0\simeq 0,\quad\ \ \ \#(I^+)=\#(I^-)+1\nn\\
&\mathcal{C}_\mathrm{even}\ \psi ^*_{I^+}\chi ^*_{I^-}M^{(\star)}_0\simeq 0,
\quad {\#(I^+)=\#(I^-)-2},\ \ 1\notin I^-\nn\,,\end{align}
where we introduced the notation
\begin{align}
&\psi _I^*=\psi^*_{a_1}\cdots \psi _{a_p}^*,\quad \chi _I^*=\chi^*_{a_p}\cdots \chi_{a_1}^*,
\quad
\bar\psi _I^*=\bar\psi^*_{a_1}\cdots \bar\psi _{a_p}^*,\quad \bar\chi _I^*=\bar\chi^*_{a_p}\cdots \bar\chi_{a_1}^*, \label{multiindex}\\
&\mathrm{for}\ I=\{a_1,\cdots ,a_p\},\quad a_1<a_2<\cdots <a_p\,.\nn
\end{align}
We added
the condition $1\notin I^-$ in the second formula since we do not want to count the same null-vector twice.
Notice that
$$[\mathcal{C}_\mathrm{even},\chi ^*_1]=0\,.$$

Consider the right-chiral descendants of $\Phi _{1,2m+2}$:
$$
\psi^*_{I^+}\chi _{I^-}^*\bar\chi ^*_{I_\mathrm{odd}(m)}M^{(\star)}\,, \quad \#(I^+)=\#(I^-)+m\,,
$$
where we
recall that $I_\mathrm{odd}(m)=\{1,3,\cdots, 2m-1\}$.
At first glance 
it is not clear how to construct the right-chiral null vectors with 
$\mathcal{C}_{\mathrm{even}}+\overline{\mathcal{C}}_{\mathrm{even}}$
because 
$\bar\chi _{2j-1}$ 
present in $\overline{\mathcal{C}}_{\mathrm{even}}$ 
might
spoil the product 
$\bar{\chi} ^*_{I_\mathrm{odd}(m)}$.
The solution was found in \cite{BBS}. Consider
$$
\mathcal{C}_{\mathrm{even}}^{m+1}
\psi^*_{I^+}\chi _{I^-}^*\bar\chi ^*_{I_\mathrm{odd}(m)}M^{(\star)}
\,, \quad {\#(I^+)=\#(I^-)-m-2}\,.
$$
According to the above considerations this can be transformed 
\begin{align}
\mathcal{C}_{\mathrm{even}}^{m+1}
\psi^*_{I^+}\chi _{I^-}^*\bar\chi ^*_{I_\mathrm{odd}(m)}M^{(\star)}
\simeq(-\overline{\mathcal{C}}_{\mathrm{even}}
)^{m+1}\psi^*_{I^+}\chi _{I^-}^*\bar\chi ^*_{I_\mathrm{odd}(m)}M^{(\star)}=0\,.\nn
\end{align}
The latter identity is due to the fact that
acting on $\bar\chi ^*_{I_\mathrm{odd}(m)}M^{(\star)}$ every 
$\overline{\mathcal{C}}_{\mathrm{even}}$ can be replaced
by the finite sum
$\sum_{j=1}^{m}\bar\psi _{2j+1}^*\bar\chi _{2j-1}$.

Let us summarise. 
The following right chiral null vectors exist for $\Phi _{(2m+1)\xin}$:
\begin{align}
&\chi^*_1\ \psi ^*_{I^+}\chi ^*_{I^-}
\bar{\chi} _{I_\mathrm{odd}(m)}^*M^{(\star)}_{0}\,,
\quad\ \  \#(I^+)=\#(I^-)+m+1\,,\nn
\\
&\mathcal{C}_{\mathrm{even}}^{m+1}\ 
\psi _{I^+}^*\chi ^*_{I^-}
\bar{\chi} _{I_\mathrm{odd}(m)}^*M^{(\star)}_{0}\,,
\quad {\#(I^+)=\#(I^-)-m-2},\ \ 1\notin I^-\,.
\nn
\end{align}
Introduce locally the notation
$$
H_{k}=\mathrm{Span}\{\, \psi _{I^+}^*\chi ^*_{I^-}
M^{(\star)}_{0}\,\mid
{\#(I^+)=\#(I^-)-k}\, ,\ \ 1\notin I^-\}\,.
$$
It is easy to see that 
\begin{align}
H_{k}\rightarrow\mathcal{C}_{\mathrm{even}}^{k}
H_{-k}\quad\mathrm{is\ an\ isomorphism}\,.\label{Ck}
\end{align}
Using \eqref{Ck} for $k=m+1$ 
we can combine
the two kinds of  the null-vectors into
\begin{align}
&\mathcal{C}_{\mathrm{even}}^{m+1}\ 
\psi _{I^+}^*\chi ^*_{I^-}
\bar{\chi} _{I_\mathrm{odd}(m)}^*M^{(\star)}_{0}\,,
\quad {\#(I^+)=\#(I^-)-m-2}\,,
\label{Cmeven}
\end{align}
if $1\in I^-$ we get the first kind of null-vectors, and if $1\notin I^-$ we get the second.

For the left descendants the situation is even simpler 
because $\mathcal{C}_{\mathrm{even}}$ commutes with $\psi ^*_{2j-1}$:
\begin{align}
&\bar{\psi}^*_1\ \bar\psi _{I^+}^* \bar\chi ^*_{I^-}\psi ^*
_{I_\mathrm{odd}(m)}
M^{(\star)}_{0}\,,
\quad\ \ \ \ {\#(I^+)=\#(I^-)-m-1}\,,\nn
\\
&\overline{\mathcal{C}}_{\mathrm{even}}
\ \bar\psi _{I^+}^*\bar\chi ^*_{I^-}\psi ^*
_{I_\mathrm{odd}(m)}
M^{(\star)}_{0}\,,
\quad \ \ {\#(I^+)=\#(I^-)-m-2},\ \ 1\notin I^+\,,\nn
\end{align}
which again can be put together as
\begin{align}
&\overline{\mathcal{C}}_{\mathrm{even}}
\ \bar\psi _{I^+}^*\bar\chi ^*_{I^-}\psi ^*
_{I_\mathrm{odd}(m)}
M^{(\star)}_{0}\,,
\quad \ \ {\#(I^+)=\#(I^-)-m-2}\,\label{Cevenleft}
\end{align}

Actually, we have a symmetry between the left and the
right chiral null vectors under
the isomorphism
\begin{align}
\bar\chi ^*_{2j-1}\mapsto \psi^*_{2j-1}\,,
\quad
\bar\chi_{2j-1}\mapsto \psi_{2j-1}\,,
\quad 
\bar\psi^*_{2j-1}\mapsto \chi ^*_{2j-1}\,,
\quad
\bar\psi_{2j-1}\mapsto \chi_{2j-1}\,.
\label{lr}
\end{align}
If $1\in I^-$ in \eqref{Cmeven} and $1\in I^+$ in \eqref{Cevenleft} the identification simply follows from \eqref{Ck} with $k=m+1$. 
If $1\notin I^-$ in \eqref{Cmeven} and $1\notin I^+$ in \eqref{Cevenleft}
the proof goes as follows. Denote by $\mathcal{X}$ the image of 
$\overline{\mathcal{C}}_{\mathrm{even}}$ 
under \eqref{lr}. 
The operators
$(\mathcal{X},\mathcal{C}_{\mathrm{even}}, 
[\mathcal{X}, \mathcal{C}_{\mathrm{even}}])$ 
constitute an $\mathfrak{sl}_2$-triple, 
we have the isomorphism \eqref{Ck} for $k=m+2$, and 
$\mathcal{C}_{\mathrm{even}}\mathcal{X}$
is invertible on $H_{-m-2}$. 
So,
$$\mathcal{X}H_{-m-2}=\mathcal{X}
(\mathcal{C}_{\mathrm{even}}\mathcal{X})^{-1}
\mathcal{C}_{\mathrm{even}}^{m+2}H_{m+2}=
\mathcal{C}_{\mathrm{even}}^{m+1}H_{m+2}
\,.
$$

\section{Comparison of null-vectors with CFT results}

If we specialise to $\al=\al_{1,2m+2}$ $(m=0,1,2,\ldots)$ where
\begin{align*}
\al_{1,2m+2}=(2m+1){\textstyle\frac{1-\nu}\nu}, 
\end{align*}
the Verma module $\mathcal{V}_\al$ 
has a singular vector at level $2m+2$. 
Let $w_{2m+2}$ be the singular vector, 
and $\mathcal W_{2m+2}$ the submodule generated by $w_{2m+2}$. 
One expects that
the null vectors in the sG model in the previous section 
should give rise to a fermionic description of
the singular vector $w_{2m+2}$ and the space $\mathcal W_{2m+2}$. 
By using the results of \cite{HGSIV, Boos}, 
in this section we will check that it is indeed so 
(up to level $8$ and modulo local integrals of motion). 
We plan to give some more details including the case of 
both chiralities in a separate publication.

From \eqref{IDENDESC} one derives 
\begin{align}
&\betab^*_{J^+-2m}
\gammab^*_{J^-+2m}\gammab^*_{\Io}
\Phi_{\alpha+2m\frac{1-\nu}{\nu}}(0)
 \cong C'_m(\al)\betab^*_{J^+ }
\gammab^*_{J^-}\bar\gammab^*_{\Io}
\Phi^{(m)}_{\alpha}(0),\label{translate}
\end{align}
where $\#(J^+)=\#(J^-)+m$, $C'_m(\al)=(-)^{m(m+1)/2}\prod_{j=1}^{m}t_{2j-1}(\al)\, C_m(\al)$, and for negative indices
\begin{align}\betab^*_{-a}=t_a(2-\al)\gammab_a\,.\label{conv}\end{align}
Using \eqref{translate} and the results of the previous section we obtain 
for the space
$$\mathcal W_{2m+2}^\mathrm{quo}
=\mathcal W_{2m+2}\ /\ \sum_{k=1}^\infty\mathbf{i}_{2k-1}\mathcal W_{2m+2}\,,$$
the following fermionic basis
\begin{align}
(\mathcal{C}_{m,\mathrm{even}})^{m+1}\betab^*_{I^+}\gammab^*_{I^-}\Phi _{(2m+1)\frac {1-\nu}{\nu}}\,,\quad \#(I^+)=\#(I^-)-2m-2\nn\,,
\end{align}
where
\begin{align}
\mathcal{C}_{m,\mathrm{even}}=\sum\limits _{j=1}^{\infty}\betab^*_{2j+1-2m}\gammab _{2j-1+2m}\,,\nn
\end{align}
with the  convention \eqref{conv}.

For making comparison with CFT results, 
it is convenient to work with 
fermions 
$\betab^{\rm CFT*}_a$, $\gammab^{\rm CFT*}_a$ \cite{HGSIV} 
normalised as
\begin{align*}
&\beta^*_a=D_a(\al)\beta^{\rm CFT*}_a,\
\gamma^*_a=D_a(2-\al)\gamma^{\rm CFT*}_a,\\
&D_a(\al)=
-\sqrt{\frac{i}{\nu}}
\Gamma(\nu)^{-a/\nu}(1-\nu)^{a/2}\frac1{(\frac{a-1}2)!}     
\frac{\Gamma(\frac\al2+\frac a{2\nu})}
{\Gamma(\frac\al2+\frac{a(1-\nu)}{2\nu})}.
\end{align*}

As it has been mentioned several times, at this moment 
we can identify the action of $\betab^*_{2j-1}$, $\gammab^*_{2j-1}$
only 
in the quotient space
$$\mathcal{V}^\mathrm{quo}_\al=
\mathcal{V}_\al\ /\ \sum_{k=1}^\infty\mathbf{i}_{2k-1}\mathcal{V}_\al\,.
$$
As a basis in this space we shall take 
$P(\mathbf{l}_{-2},\mathbf{l}_{-4},\cdots)\Phi_\alpha$
where the $P$ are monomials in the even Virasoro generators
ordered lexicographically. 
Another basis in the same space is provided by monomials in the
fermions, 
$$
\Psi _{a_1,\cdots ,a_p,b_1,\cdots ,b_p}
=\betab^{\rm CFT*}_{a_1}\cdots \betab^{\rm CFT*}_{a_p}
\gammab ^{\rm CFT*}_{b_1}\cdots\gammab^{\rm CFT*}_{b_p}\Phi_\al\,.
$$
Identification of these monomials
with the Virasoro basis (in $\mathcal{V}^{quo}_\al$) 
has been given 
up to level $6$ in \cite{HGSIV}, and at level $8$ in \cite{Boos}.
The general structure is as follows
$$\Psi _{\star}\equiv\(P_{\star}^\mathrm{even}(\mathbf{l}_{-2},\mathbf{l}_{-4},\cdots)
+d_\al P_{\star}^\mathrm{odd}(\mathbf{l}_{-2},\mathbf{l}_{-4},\cdots)\)\Phi_\alpha\,,$$
where $d_\al=\frac 1 6 \sqrt{(25-c)(24\Delta _\al +1-c)}$, and the coefficients of 
$P_{\star}^\mathrm{even}$, $P_{\star}^\mathrm{odd}$ depend only on the central charge $c$ and on the scaling dimension $\Delta_{\al}$. The dependence on $c$ is polynomial while the 
dependence on $\Delta _\al$ is generally
rational, simple poles at certain negative integers may appear.

The operator $\mathcal{C}_{m,\mathrm{even}}$ reads now as
\begin{align}
&\mathcal{C}_{m,\mathrm{even}}
=\sum\limits _{j=1}^{\infty}
\frac {D_{2j-1}((2m-1)\xin)}{D_{4m+2j-3}(2-(2m-1)\xin)}
\betab^{\mathrm{CFT}*}_{2j-1}\gammab ^{\mathrm{CFT}} _{2j+4m-3}\nn\\&-\sum\limits _{j=1}^{m-1}
\frac{t_{2j-1}(2-(2m-1)\xin)}
{D_{2j-1}(2-(2m-1)\xin)
D_{4m-2j-1}(2-(2m-1)\xin)}\gammab ^{\mathrm{CFT}}_{2j-1}\gammab ^{\mathrm{CFT}} _{4m-2j-1}\,,\nn
\end{align}
Actually, after the common multiplier
$\Gamma (\nu)^{\frac {4m-2}\nu}$ is extracted, the remaining combinations of $\Gamma$-functions 
in the right hand side collapse to
rational functions of $\nu$.

The following are the fermionic 
null vectors in 
$W^\mathrm{quo}_{2m+2}$ 
up to level 8: 
\begin{align*}
\begin{matrix}
\mathcal W^\mathrm{quo}_2&\text{level $2$}&\Psi_{1,1}\hfill\\
&\text{level $4$}&\Psi_{3,1}\hfill\\
&\text{level $6$}&\Psi_{5,1},&\Psi_{1,5}-\textstyle{\frac{4(\nu^2-4)}{\nu^2-16}}\Psi_{3,3}\hfill\\
&\text{level $8$}&\Psi_{7,1},&\Psi_{1,7}-\textstyle{\frac{9(\nu^2-4)}{\nu^2-36}}\Psi_{5,3},&\Psi_{1,3,3,1}\\
\mathcal W^\mathrm{quo}_4&\text{level $4$}&\Psi_{1,3}\hfill\\
&\text{level $6$}&\Psi_{3,3}\hfill\\
&\text{level $8$}&\Psi_{5,3},&\Psi_{1,3,3,1}\hfill\\
\mathcal W^\mathrm{quo}_6&\text{level $6$}&\Psi_{1,5}\hfill\\
&\text{level $8$}&\Psi_{3,5}\hfill\\
\mathcal W^\mathrm{quo}_8&\text{level $8$}&\Psi_{1,7}\hfill
\end{matrix}
\end{align*}

In every module the null-vectors of lowest possible 
degree are singular vectors. 
Modulo the integrals of motion, 
the simplest singular vector is 
\begin{align*}
w_2&=\Bigl(\mathbf{l}_{-2}-\frac 1{1-\nu}\mathbf{l}_{-1}^2\Bigr)\Phi _{1,2}
\equiv \mathbf{l}_{-2}\Phi _{1,2}\,, 
\end{align*}
which agrees with the result \cite{HGSIV}
$$
\Psi _{1,1}
\equiv \mathbf{l}_{-2}\Phi _{1,2}\,.
$$
The next 
singular vector $w_4$ is less trivial. 
$$
w_4
\equiv 
\Bigl(\frac12\lb_{-2}^2
-\frac {6\nu^2-16\nu +11} {3(1-\nu)}\lb_{-4}\Bigr)\Phi_{1,4}\,.
$$
On the other hand we have for generic $\al$ \cite{HGSIV}
\begin{align}
\Psi_{1,3}
\equiv
\Bigl(\frac12\lb_{-2}^2
+\frac{c-16-3d_\al}9\lb_{-4}\Bigr)\Phi_{\al}.\label{Psi13}
\end{align}
Substituting $\al=\al_{1,4}$ 
we 
indeed find that $\Psi_{1,3}\equiv w_4$.

The formulae for 
$w_6$ and 
$w_8$ on one hand, and for
$\Psi_{1,5}$, $\Psi_{1,7}$ \cite{HGSIV,Boos} 
on the other, are much more complicated.
Still, 
 substituting $\al =\al_{1,6}$, $\al =\al_{1,8}$ into the latter 
we find perfect agreement. 

Consider the first null-vector 
which is not the singular vector in 
$\mathcal{W}_2$:
$$
\mathbf{l}_{-2}w_2
\equiv \Bigl(\mathbf{l}_{-2}^2-
\frac 2 {1-\nu}\mathbf{l}_{-4}\Bigr)\Phi _{1,2}\,.
$$
It is easy to check that 
$(1/2)\mathbf{l}_{-2}w_2$ 
coincides with the formula from \cite{HGSIV}
\begin{align*}
\Psi_{3,1}
\equiv
\Bigl(\frac12\lb_{-2}^2+\frac{c-16+3d_\al}9\lb_{-4}\Bigr)\Phi_{\al}\,,
\end{align*}
specialised to  $\al =\al _{1,2}$.

Proceeding in the same way we have checked
that all the vectors from the above table indeed coincide with
the Virasoro null-vectors. 
In every particular case we have 
to solve an overdetermined system of linear equations, 
and eventually find solutions.
We consider this fact as 
a strong support to the statement that the 
BJMS fermions when they are identified with the BBS fermions
create the null vectors in the Verma modules.

\bigskip

{\it Acknowledgements.}\quad
\medskip

Research of MJ is supported by the Grant-in-Aid for Scientific 
Research B-23340039.
Research of TM is supported by the Grant-in-Aid for Scientific Research
B-22340031.
Research of FS is supported by   SFI
under Walton Professorship scheme, by
RFBR-CNRS grant 09-02-93106
and DIADEMS program (ANR) contract number BLAN012004.
MJ and TM 
would like to thank for the hospitality extended by the Hamilton Mathematical Institute 
where a part of this work was begun.


\end{document}